\newtheorem{Thm}{Theorem}[section]
\newtheorem{Def}[Thm]{Definition}
\newtheorem{Lemm}[Thm]{Lemma}
\newtheorem{Prop}[Thm]{Proposition}
\newtheorem{Coro}[Thm]{Corollary}
\theoremstyle{definition}
\newtheorem{Rem}[Thm]{Remark}
\newcommand{\be}{\begin{equation}}
\newcommand{\ee}{\end{equation}}
\newcommand{\ba}{\begin{align}}
\newcommand{\ea}{\end{align}}
\newcommand{\ben}{\begin{equation*}}
\newcommand{\een}{\end{equation*}}
\def\i<#1>{\langle #1 \rangle}
\def\l<#1>{\left\langle #1 \right\rangle}
\def\b<#1>{\big\langle #1 \big\rangle}
\def\wrt{\ \text{w.r.t.}\ }
\newcommand{\D}{\mathrm{dom}}
\newcommand{\la}{\langle}
\newcommand{\ra}{\rangle}
\newcommand{\bs}{\boldsymbol}
\newcommand{\Tr}{\mathrm{Tr}}
\newcommand{\BbbR}{\mathbb{R}}
\newcommand{\BbbN}{\mathbb{N}}
\newcommand{\BbbZ}{\mathbb{Z}}
\newcommand{\BbbC}{\mathbb{C}}
\newcommand{\bq}{{\boldsymbol q}}
\newcommand{\vepsilon}{\varepsilon}
\newcommand{\vphi}{\varphi}
\newcommand{\no}{\nonumber \\}
\newcommand{\up}{\uparrow}
\newcommand{\down}{\downarrow}
\newcommand{\h}{\mathfrak{H}}
\newcommand{\fP}{\mathfrak{P}}
\newcommand{\tH}{\tilde{H}}
\newcommand{\vLa}{\varLambda}
\def\i<#1>{\langle #1 \rangle}
\def\l<#1>{\left\langle #1 \right\rangle}
\def\b<#1>{\big\langle #1 \big\rangle}
\def\wrt{\ \text{w.r.t.}\ }
\title{\sf
Rigorous analysis of the effects of electron-phonon interactions on magnetic properties in the one-electron Kondo lattice model

}
\date{}
\author[1]{Tadahiro Miyao}
\author[1]{Kazuhiro Nishimata}
\author[1]{Hayato Tominaga}
\affil[1]{Department of Mathematics,  Hokkaido University

Sapporo 060-0810,  Japan}
\begin{document}

\maketitle

\begin{abstract}
The Kondo lattice model (KLM) is a typical model describing heavy fermion systems.
In this paper, we consider the interaction of phonons with the system described by the one-electron KLM.
Magnetic properties of the ground state of this model are revealed in a rigorous form.
Furthermore, we derive the effective Hamiltonian in the strong coupling limit ($J\to \infty$) for the strength of the spin-exchange interaction $J$; we  examine the magnetic properties of the ground state of the effective Hamiltonian and prove that the Aizenman--Lieb theorem concerning the magnetization holds for the effective Hamiltonian at finite temperatures. Generalizing the obtained results, we clarify a mechanism for the stability of magnetic properties of the ground state in the one-electron KLM system.
\end{abstract}
\section{Introduction and results}
\subsection{Background}

The Kondo lattice model (KLM) is a typical model describing heavy fermion systems and has been actively investigated; see, e.g., \cite{Akagi2010,DONIACH1977231, PhysRevB.20.1969,Peters2007, Santos2002,Shen1996,TSU1997,Tsunetsugu1997}.
The KLM Hamiltonian consists of a hopping term for the conduction electrons and an exchange interaction term between the conduction electrons and the localized spins, and is derived from the strong coupling limit of the more basic Anderson model. 
In  \cite{PhysRevLett.67.2211}, the magnetic properties of the ground states of the KLM system have been analyzed in the particular case of a single conduction electron. This result is an essential clue to understanding the low electron concentration limit in the KLM system.
\medskip

In this paper, we consider the interaction of phonons with the system described by the one-electron KLM proposed in \cite{PhysRevLett.67.2211}. 
We show in a rigorous form that the magnetic properties of the one-electron KLM system described earlier are stable under the interaction with phonons. Furthermore, we derive the effective Hamiltonian in the strong coupling limit ($J\to \infty$) for the strength of the spin-exchange interaction $J$ and analyze the magnetic properties of the system described by the effective  Hamiltonian. To be more precise, we clarify the magnetic properties of the ground state of the effective Hamiltonian and prove that the Aizenman--Lieb theorem concerning the magnetization holds for this effective Hamiltonian at finite temperatures. 
Moreover, by synthesizing and generalizing these results, we reveal a  mechanism of stability of the magnetic properties of the one-electron KLM system.
Rigorous analysis of systems in which phonons interact with the KLM system has rarely been performed, and the results of this paper are expected to provide a solid mathematical foundation for this system.
\medskip

A characteristic feature of the analytical method in this paper is the application of the theory of  operator inequalities developed in \cite{Miyao2012,Miyao2016, Miyao2017,MIYAO2021168467}. 
In \cite{PhysRevLett.67.2211}, the magnetic properties of the ground state of the one-electron KLM are analyzed by naively using the Perron--Frobenius theorem for finite-dimensional Hilbert spaces; in contrast, the Hamiltonian examined in this paper acts on infinite-dimensional Hilbert spaces because it involves interaction with the phonons. Therefore, the ordinary Perron--Frobenius theorem cannot be applied to the Hamiltonian under consideration. On the other hand, our operator inequality-based analysis is adequate for rigorous analysis of ground states in infinite-dimensional Hilbert spaces.
\medskip

To highlight the novelty of this paper, we first review the previous work on the usual Kondo lattice model.
The Hamiltonian of the KLM on a  finite lattice $\vLa$ is
given by 
\be
H_{\rm KLM}=\sum_{x, y\in \vLa}(-t_{x, y}) c_{x, \sigma}^*c_{y, \sigma}+ J \sum_{x\in \vLa} {\bs s}_x\cdot {\bs S}_x-2 h S_{\rm tot}^{(3)}.
\ee
$H_{\rm KLM}$
is a bounded self-adjoint operator acting on the Hilbert space 
$
\bigoplus_{n=0}^{4|\vLa|} \bigwedge^n (\oplus^4\ell^2(\vLa))
$, where  $\bigwedge^n $ indicates the $n$-fold antisymmetric tensor product. 
The annihilation operators of conduction electrons are denoted by $c_{x, \sigma}$ and those of $f$-electrons by $f_{x, \sigma}$;
these operators satisfy the following anticommutation relations:
\begin{align}
	\{c_{x, \sigma}, c_{y, \tau}^*\} &= \delta_{x, y}\delta_{\sigma, \tau}, \quad
	\{c_{x, \sigma}, c_{y, \tau}\} = 0, \quad
	\{f_{x, \sigma}, f_{y, \tau}^*\} = \delta_{x, y}\delta_{\sigma, \tau}, \quad
	\{f_{x, \sigma}, f_{y, \tau}\} = 0, \\
	\{c_{x, \sigma}, f_{y, \tau}\} &= \{c_{x, \sigma}, f_{y, \tau}^*\} = 0
\end{align}
 for every $x, y \in \varLambda $ and $\sigma, \tau \in \{\up, \down\}$, where 
$\{X,Y\} := XY + YX$.
This paper considers a system with only one conduction electron and $|\vLa|$ $f$-electrons, one at each site.
The Hilbert space $\h_{\rm el}$ to adequately represent this situation is the subspace spanned by vectors of the form
\be
c_{x, \sigma}^*\prod^{\sharp}_{y\in \vLa} f_{y, \sigma_y}^*|\varnothing\ra, \quad
\  \sigma, \sigma_y\in \{\up, \down\}\ (y\in \vLa), 
\ee
where $|\varnothing\ra$ denotes the fermionic Fock vacuum, and $\prod^{\sharp}_{y\in \vLa}$ indicates the ordered product according to an arbitrarily fixed order in $\vLa$. 
Denote the restriction of the operator $H_{\rm KLM}$ to this Hilbert space by $H_{\rm el}$, i.e., $H_{\rm el}=H_{\rm KLM} \restriction \h_{\rm el}$.
For each site $x\in \vLa$, the spin operators ${\bs s}_x=(s_x^{(1)}, s_x^{(2)}, s_x^{(3)})$ of the conduction electrons are defined as follows: 
\begin{align}
s_x^{(1)}=\frac{1}{2}(c_{x, \down}^* c_{x, \up}+c_{x, \up}^*c_{x, \down}),\ \ s_x^{(2)}=\frac{\rm i}{2}(c_{x, \down}^* c_{x, \up}-c_{x, \up}^*c_{x, \down}), \ s_x^{(3)}=\frac{1}{2} (n_{x, \up}^c-n_{x, \down}^c), 
\end{align}
where $n_{x, \sigma}^c$ is the number operator for the conduction electrons with spin $\sigma$ at site $x\in \vLa$: $n_{x, \sigma}^c=c_{x, \sigma}^* c_{x, \sigma}$, and $\rm i$ represents the imaginary unit: ${\rm i}=\sqrt{-1}$.
Similarly, the spin operators ${\bs S}_x=(S_x^{(1)}, S_x^{(2)}, S_x^{(3)})$ for $f$-electrons at site $x$ are defined as
\begin{align}
S_x^{(1)}=\frac{1}{2}(f_{x, \down}^* f_{x, \up}+f_{x, \up}^*f_{x, \down}),\ \ S_x^{(2)}=\frac{\rm i}{2}(f_{x, \down}^* f_{x, \up}-f_{x, \up}^*f_{x, \down}), \ S_x^{(3)}=\frac{1}{2} (n_{x, \up}^f-n_{x, \down}^f),
\end{align}
where $n_{x,\sigma}^f=f_{x, \sigma}^* f_{x, \sigma}$. The total spin operators ${\bs S}_{\rm tot}=(S_{\rm tot}^{(1)}, S_{\rm tot}^{(2)}, S_{\rm tot}^{(3)})$ for this system are defined by
\be
S_{\rm tot}^{(i)}=\sum_{x\in \vLa} (s_x^{(i)}+S_x^{(i)}),\ \ i=1,2,3.
\ee
The spin exchange interaction term between the conduction electrons and the $f$-electrons is
defined  by 
\be
{\bs s}_x\cdot {\bs S}_x=\sum_{i=1,2,3} s_x^{(i)}S_x^{(i)}. \label{ExInt}
\ee
The term $2h S_{\rm tot}^{(3)}$ represents the interaction of the electrons with the external uniform magnetic field $h$.
$(t_{x, y})_{x, y\in \vLa}$ is the hopping amplitude, and $J$ is the strength of the exchange interaction.

In this paper, we make the following assumptions regarding the parameters and the lattice $\vLa$:
\begin{description}
\item[\hypertarget{A1}{(A. 1)}] 
\begin{itemize}
\item[(i)] $t_{x, y} \ge 0$ for every $x, y \in \vLa$.
\item[(ii)] $h\in \BbbR$.
\end{itemize}
	\item[(A. 2)]\hypertarget{A2}{} 
	Let $G=(\vLa, E)$ be the graph generated by the hopping matrix:
	$E=\{\{x, y\} : t_{x, y} \neq 0\}$ defines 
	the set of edges.  Then the graph $G$ is connected.\footnote{To be precise, for any $x, y\in \vLa $,  there is a sequence $\{\{x_i, x_{i+1}\}\}_{i=0}^{n-1}$  in  $E$ satisfying $x_0 = x $ and $ x_n = y$.}
	\end{description}
	
The following fundamental theorem is proved in \cite{PhysRevLett.67.2211}.	
	
\begin{Thm}\label{Senko1}
Assume \hyperlink{A1}{\bf (A. 1)} and \hyperlink{A2}{\bf (A. 2)}. Assume $J>0$ (antiferromagnetic coupling). Then,
the ground state of $H_{\rm el}$ is unique apart from the trivial $|\vLa|$-fold degeneracy and has total spin $S=(|\vLa|-1)/2$.
\end{Thm}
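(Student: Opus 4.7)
The plan is a Perron--Frobenius (PF) argument in the distinguished sector $M_0:=(|\vLa|-1)/2$ of the $S_{\rm tot}^{(3)}$-decomposition of $\h_{\rm el}$, after a gauge transformation that makes $H_{\rm el}$ stoquastic in a natural configuration basis.

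Concretely, I take as basis of $\h_{\rm el}$ the configurations
\ben
|x,\sigma;\bs{\sigma}\rangle:=c_{x,\sigma}^*\prod_y^\sharp f_{y,\sigma_y}^*|\varnothing\rangle,
\een
indexed by the conduction-electron position $x\in\vLa$, its spin $\sigma\in\{\up,\down\}$, and the $f$-spin pattern $\bs{\sigma}\in\{\up,\down\}^\vLa$. Under \hyperlink{A1}{\bf (A. 1)} the hopping gives only entries $-t_{x,y}\le 0$ off the diagonal; the transverse exchange $\tfrac{J}{2}(s_x^+S_x^-+s_x^-S_x^+)$ yields $+J/2>0$ entries, which I cancel by conjugating with $U:=\exp\!\big({\rm i}\pi\sum_x n_{x,\down}^f\big)$, so that $US_x^\pm U^{-1}=-S_x^\pm$ while the hopping, $s_x^{(i)}$, $S_x^{(3)}$ and $S_{\rm tot}^{(3)}$ are left invariant. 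After a careful accounting of the fermion signs from the ordering $\prod^\sharp$, the conjugate $\tilde H:=UH_{\rm el}U^{-1}$ has all off-diagonal matrix elements $\le 0$ in the configuration basis.

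Decomposing $\h_{\rm el}=\bigoplus_M\h_M$ by eigenvalues of $S_{\rm tot}^{(3)}$, I focus on $\h_{M_0}$, which by a dimension count contains only multiplets of total spin $(|\vLa|-1)/2$ and $(|\vLa|+1)/2$. By \hyperlink{A2}{\bf (A. 2)}, the hopping alone moves the conduction electron between any two sites while preserving spins, and the transverse exchange swaps the conduction and local $f$-spin whenever they differ; composing these operations connects any two configurations in $\h_{M_0}$, so $\tilde H\restriction\h_{M_0}$ is irreducible. PF then yields a unique ground state $|\Omega\rangle\in\h_{M_0}$ with strictly positive coefficients in the configuration basis. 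To rule out total spin $(|\vLa|+1)/2$, note that the only such state in $\h_{M_0}$ is $S_{\rm tot}^-|\Phi_{\max}\rangle$, with $|\Phi_{\max}\rangle$ the positive hopping ground state of the fully polarized sector $M=(|\vLa|+1)/2$; using $US_{\rm tot}^-U^{-1}=s_{\rm tot}^- - S_{\rm f,\rm tot}^-$ together with $(s_{\rm tot}^-)^2=0$ on $\h_{\rm el}$ (only one conduction electron), the image $US_{\rm tot}^-|\Phi_{\max}\rangle$ decomposes as a difference of two pieces supported on \emph{disjoint} families of basis vectors (conduction-spin-flipped vs.\ one-$f$-spin-flipped) with strictly opposite signs, so no rephasing can make its coefficients uniform. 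Hence $|\Omega\rangle$ belongs to the spin-$(|\vLa|-1)/2$ multiplet.

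The $SU(2)$-invariance of $H_{\rm el}$ at $h=0$ spreads $|\Omega\rangle$ into a full multiplet of $|\vLa|$ degenerate eigenstates of common energy $E^*$, one in each sector $|M|\le(|\vLa|-1)/2$. To exclude a hypothetical ground multiplet of smaller spin with strictly smaller energy, one applies PF in each of those sectors (irreducibility carries over by the same argument) and matches the resulting unique positive ground states $|\Omega_M\rangle$ with the multiplet components $(S_{\rm tot}^-)^{M_0-M}|\Omega\rangle$ via a binomial sign analysis that exploits $(s_{\rm tot}^-)^j=0$ for $j\ge 2$; PF uniqueness then forces $E_M=E^*$ for all $|M|\le(|\vLa|-1)/2$, identifying the spin-$(|\vLa|-1)/2$ multiplet as the full ground eigenspace. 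Restoring $h\ne 0$ picks out the single component with $M=\operatorname{sign}(h)(|\vLa|-1)/2$ via the Zeeman shift. The main obstacle is the sign bookkeeping --- the fermion anticommutations under $\prod^\sharp$ versus $U$ for the non-positivity of the off-diagonals of $\tilde H$, and the disjoint-support/binomial-sign analyses for the multiplet representatives --- after which the remainder reduces to the standard PF plus $SU(2)$-multiplet argument of the 1991 paper.
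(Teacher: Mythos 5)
Your overall architecture (gauge transformation making the transverse exchange nonpositive, Perron--Frobenius in each $S^{(3)}_{\rm tot}$-sector via the connectivity assumption \hyperlink{A2}{\bf (A. 2)}, then a spin identification) is exactly the paper's method for the more general Theorem \ref{Main1}, specialized to $g_{x,y}\equiv 0$; conjugating by $e^{{\rm i}\pi N^f_\downarrow}$ instead of the paper's $e^{{\rm i}\pi N^c_\downarrow}$ is immaterial, and your exclusion of total spin $(|\vLa|+1)/2$ in the sector $M_0=(|\vLa|-1)/2$ by exhibiting the mixed signs of $US^{(-)}_{\rm tot}\ket{\Phi_{\max}}$ is correct (and in fact works for $S^{(-)}_{\rm tot}\Phi$ with \emph{any} $\Phi$ in the polarized sector, which repairs your slightly inaccurate claim that $S^{(-)}_{\rm tot}\ket{\Phi_{\max}}$ is the ``only'' such state there).

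The genuine gap is the last step, where you ``match'' the sector ground states $\ket{\Omega_M}$ with the multiplet components $(S^{(-)}_{\rm tot})^{M_0-M}\ket{\Omega}$ by a ``binomial sign analysis.'' Writing $\sigma^-=\sum_x s_x^{(-)}$ and $\Sigma^-=\sum_x S_x^{(-)}$, your gauge gives $U(S^{(-)}_{\rm tot})^kU^{-1}=(\sigma^--\Sigma^-)^k=(-1)^k\big[(\Sigma^-)^k-k\,\sigma^-(\Sigma^-)^{k-1}\big]$, a \emph{difference} of two positivity-preserving operators. Applied to the unknown positive vector $U\ket{\Omega}$ this has no definite sign: already for $k=1$ and $|\vLa|=2$ the coefficient of $\ket{x,\downarrow;(\uparrow,\downarrow)}$ in $(\sigma^--\Sigma^-)U\ket{\Omega}$ is $a_x-c_x$ with $a_x,c_x$ the (positive but otherwise unconstrained) coefficients of $U\ket{\Omega}$ on $\ket{x,\uparrow;(\uparrow,\downarrow)}$ and $\ket{x,\downarrow;(\uparrow,\uparrow)}$. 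So positivity of $U\ket{\Omega}$ alone cannot force $U(S^{(-)}_{\rm tot})^k\ket{\Omega}$ into the cone, and PF uniqueness cannot be invoked. (The cancellation does occur for the true ground state, but only because of the highest-weight relation $S^{(+)}_{\rm tot}\ket{\Omega}=0$, which would have to be exploited through a nontrivial combinatorial estimate you have not supplied.) The paper circumvents this by using an \emph{explicit} trial vector $\eta=|\vLa|^{-1/2}\sum_x\ket{\bm{\sigma}_{\uparrow,x}}_0$ built from on-site singlets against the polarized background: for $\eta_n=(S^{(-)}_{\rm tot})^n\eta$ the gauge transform is manifestly a nonzero nonnegative vector (the dangerous $\sigma^-$ terms cancel against part of the $\Sigma^-$ terms because each singlet is annihilated by $s_x^{(-)}+S_x^{(-)}$), whence $\langle \eta_n|\psi_{M_0-n}\rangle>0$ and the ground state of every sector $|M|\le M_0$ has total spin $(|\vLa|-1)/2$; the equalities $E_M=E_{M_0}$ then follow by raising/lowering within the multiplet. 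You need this (or an equivalent) ingredient to close the proof.
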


\subsection{The Kondo lattice system interacting with phonons}
It is an intriguing question how the magnetic properties of the ground state are affected when the interaction between the conduction electrons and the environmental system is taken into account.
This paper analyzes electron-phonon interactions in detail as a typical interaction with environmental systems.
We note that the analytical methods we use in this paper can be extended to systems in which electrons and Bose fields interact, such as systems in which electrons and quantized electromagnetic fields interact; see Section \ref{Discuss} for details.

The specific Hamiltonian we intend to analyze is given by 
\be
H=H_{\rm el}+ \sum_{x, y \in \vLa} g_{x, y}n_x^c(b_y + b_y^*) + \omega N_{\rm ph}.
\ee
The operator 
$H$ acts on the Hilbert space $\h=\h_{\rm el}\otimes \h_{\rm ph}$, where $\h_{\rm ph}$
is the bosonic Fock space over $\ell^2(\vLa)$: $\h_{\rm ph}=\bigoplus_{n=0}^{\infty} \otimes^n_{\rm s} \ell^2(\vLa)$, where
$\otimes^n_{\rm s}$ indicates the $n$-fold symmetric tensor product. 
$b_x^*$ and $b_x$ are the creation and annihilation operators of phonons, respectively, and satisfy the standard commutation relations:
\be
[b_x, b_y]=0,\quad [b_x, b_y^*]=\delta_{x, y},\quad x, y\in \vLa,\label{CCRs}
\ee
where $[X, Y]:=XY-YX$.
The operator $N_{\rm ph}$ denotes the phonon number operator:
\be
N_{\rm ph}=\sum_{x\in \vLa} b_x^*b_x.
\ee
The operator $n_x^c$ is the number operator of the conduction electrons at site $x\in \vLa$: $n_x^c=n_{x, \up}^c+n_{x, \down}^c$.
The phonons are assumed to be dispersionless with energy $\omega>0$.
$g_{x, y}\in \BbbR$ represents the strength of the interaction between conduction electrons and phonons.
Using Kato--Rellich's theorem \cite[Theorem X.12]{Reed1981}, we see that $H$ is a self-adjoint operator on $\D(N_{\rm ph})$, bounded from below, where
$\D(N_{\rm ph})$ denotes the domain of $N_{\rm ph}$.

Our first result is as follows:
\begin{Thm}\label{Main1}
Assume \hyperlink{A1}{\bf (A. 1)} and \hyperlink{A2}{\bf (A. 2)}. Assume $J>0$ (antiferromagnetic coupling). Then,
the ground state of $H$ is unique apart from the trivial $|\vLa|$-fold degeneracy and has total spin $S=(|\vLa|-1)/2$.
\end{Thm}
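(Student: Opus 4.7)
My plan is to extend the finite-dimensional Perron--Frobenius argument underlying Theorem \ref{Senko1} to the infinite-dimensional setting by means of the operator-inequality framework developed in \cite{Miyao2012,Miyao2016,Miyao2017,MIYAO2021168467}. Since $n_x^c$ is spin-invariant, every component $S_{\rm tot}^{(i)}$ (and hence $\bs{S}_{\rm tot}^2$) commutes with the electron--phonon coupling, so $[H,S_{\rm tot}^{(3)}]=[H,\bs{S}_{\rm tot}^2]=0$. I would decompose $\h=\bigoplus_M \h_{\rm el}^M\otimes \h_{\rm ph}$ by the eigenvalue $M$ of $S_{\rm tot}^{(3)}$ and work on each summand separately.

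The core step is to construct a self-dual cone in $\h_{\rm el}^M\otimes\h_{\rm ph}$ on which $\ex^{-\beta H}$ acts positivity-preservingly and ergodically. First, pass to the Schr\"odinger representation $\h_{\rm ph}\cong L^2(\BbbR^{|\vLa|},dq)$ so that $b_y+b_y^*=\sqrt{2}\,q_y$ and $\omega N_{\rm ph}=\tfrac{\omega}{2}\sum_y(p_y^2+q_y^2-1)$; in this realization the electron--phonon term becomes multiplication by a real-valued function of $q$, diagonal in any electronic basis. On the electronic side, let $U$ denote the sign-fixing unitary entering the proof of Theorem \ref{Senko1}, so that $-UH_{\rm el}U^*$ has entrywise non-negative off-diagonal matrix elements in the chosen basis of each sector $\h_{\rm el}^M$. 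Define $\fP_M$ to be the self-dual cone obtained by tensoring the combinatorial cone generated by this basis with the standard positive cone $L^2(\BbbR^{|\vLa|})_+$.

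Next I would verify positivity preservation of $\ex^{-\beta H}$ on $\fP_M$ by a Trotter--Kato product expansion: $\ex^{-\beta H_{\rm el}}$ preserves $\fP_M$ by Theorem \ref{Senko1}; $\ex^{-\beta\omega N_{\rm ph}}$ is given by the strictly positive Mehler kernel, hence preserves $L^2(\BbbR^{|\vLa|})_+$; and the electron--phonon factor is non-negative multiplication by a real function of $q$. For positivity improvement I would combine the graph connectivity coming from hypothesis (A.2), which gives ergodicity of the electronic part already used in Theorem \ref{Senko1}, with the strict positivity of the Mehler kernel: this ensures that for any non-zero $\phi,\psi\in\fP_M$ the quantity $\l<\phi,\ex^{-\beta H}\psi>$ is strictly positive for all sufficiently large $\beta$. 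The Perron--Frobenius--Faris theorem then gives a unique ground state in $\h_{\rm el}^M\otimes\h_{\rm ph}$ lying in the interior of $\fP_M$.

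Because uniqueness forces this ground state to be a joint eigenvector of $\bs{S}_{\rm tot}^2$, the identification of the total spin as $S=(|\vLa|-1)/2$ reduces to the finite-dimensional test-vector computation already carried out for Theorem \ref{Senko1}: the electronic spin structure is untouched by the phonon dressing, and the uniqueness in each $M$-sector together with successive applications of $S_{\rm tot}^{\pm}$ links the ground states across the $|\vLa|$ values $M=-S,\ldots,S$, yielding the stated trivial multiplet degeneracy. The principal technical obstacle is the positivity-improvement step: although the electronic irreducibility is inherited from Theorem \ref{Senko1}, one must propagate strictness through the Trotter--Kato limit on the combined cone $\fP_M$, which is precisely the type of coupled fermion--boson ergodicity the framework of \cite{MIYAO2021168467} is designed to handle.
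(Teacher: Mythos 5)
Your strategy is essentially the paper's: decompose by $M$-sectors, take the Hilbert cone generated by the natural electronic basis tensored with $L^2(\mathcal{Q})_+$ in the Schr\"odinger representation, prove positivity preservation by a product formula, obtain ergodicity from connectivity of $G$, and then apply Theorem \ref{pff} together with a positive-overlap test-vector argument to identify the total spin. One real difference: you keep the electron--phonon coupling as the diagonal (unbounded) multiplication operator $\sqrt{2}\sum_{x,y}g_{x,y}n_x^c q_y$, whereas the paper first applies the Lang--Firsov transformation, after which the off-diagonal perturbation is bounded (hopping dressed by the translation unitaries $e^{{\rm i}\varPhi_{x,y}}$). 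Your variant is workable, since the diagonal generator restricted to a fixed electronic basis vector is a shifted harmonic oscillator whose semigroup still has a strictly positive kernel; but Lemma \ref{ppiexp1} as stated requires the subtracted perturbation $B$ to be bounded, so you must arrange the splitting so that the unbounded piece sits in the generator $A$ and not in $B$.

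The genuine gap is the positivity-improvement step, which is the heart of the proof. First, ``propagating strictness through the Trotter--Kato limit'' is not a valid mechanism: weak limits of positivity-improving operators are in general only positivity \emph{preserving} (Lemma \ref{Wcl}), so strictness is lost in the limit. The paper instead expands $e^{-\beta\tH_M}$ by Duhamel's formula, bounds it from below in the cone order by individual terms $D(s_1)\cdots D(s_n)e^{-\beta L}$, and reduces ergodicity to the combinatorial condition {\bf (R)} of Proposition \ref{Reduction}. Second, appealing to ``ergodicity of the electronic part already used in Theorem \ref{Senko1}'' does not discharge that condition in the coupled system: one must exhibit, for arbitrary $\ket{x,\sigma;\bm{\sigma};f}$ and $\ket{y,\tau;\bm{\tau};g}$, an explicit finite product of hopping operators and exchange operators $J(s_z^{(+)}S_z^{(-)}+s_z^{(-)}S_z^{(+)})$ that walks the conduction electron along paths of $G$ to every site where $\bm{\sigma}$ and $\bm{\tau}$ disagree and flips the $f$-spin there, while acting on the phonon factor only by cone-preserving maps so that the resulting matrix element against a strictly positive $g$ is strictly positive. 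This walk-and-flip construction is the actual content of Theorem \ref{PI}, and without it Theorem \ref{pff} cannot be invoked. The remainder of your outline (uniqueness per sector, overlap with test vectors of total spin $(|\vLa|\pm1)/2$, linking sectors with $S_{\rm tot}^{(-)}$, and the energy comparison between the $M=(|\vLa|-1)/2$ and $M=(|\vLa|+1)/2$ sectors) matches the paper; note only that the sign-fixing unitary does not commute with $S_{\rm tot}^{(1)},S_{\rm tot}^{(2)}$, so the test vectors must be transformed by it before their strict positivity with respect to $\fP_M$ is checked.
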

Comparing this theorem with Theorem \ref{Senko1}, we conclude that the magnetic properties of the ground state of $H_{\rm el}$ are stable under interaction with phonons. We will prove Theorem \ref{Main1} in Section \ref{Sec3}.

For each eigenvalue $M$ of $S^{(3)}_{\rm tot}$, we refer to $\h_M=\ker(S^{(3)}_{\rm tot}-M)$ as an $M$-subspace.
The Hilbert space $\h$ is decomposed into a direct sum of $M$-subspaces:
\be
\h=\bigoplus_{M\in \mathrm{spec}(S^{(3)}_{\rm tot})} \h_M,
\ee
where, for given operator $A$, $\mathrm{spec}(A)$ denotes the spectrum of $A$.
Corresponding to this decomposition, the Hamiltonian $H$ can be decomposed as follows: 
\be
H=\bigoplus_{M\in \mathrm{spec}(S^{(3)}_{\rm tot})} H_M,\quad H_M=H\restriction \h_M.
\ee
 In order to state the next result, we introduce the ladder operators of spin:
\begin{align}
	s_x^{(+)} &:= c_{x, \up}^*c_{x, \down}, &
	s_x^{(-)} &:= (s_x^+)^* = c_{x, \down}^*c_{x, \up}, \\
	S_x^{(+)} &:= f_{x, \up}^*f_{x, \down}, &
	S_x^{(-)} &:= (S_x^{(+)})^* = f_{x, \down}^*f_{x, \up}.
\end{align}

Our second result is the following theorem concerning two-point correlations between spins in the ground state:
\begin{Thm} \label{Main2}
Assume 
\hyperlink{A1}{\bf (A. 1)} and \hyperlink{A2}{\bf (A. 2)}.  Assume $J>0$ (antiferromagnetic coupling).
For any $M\in \{-(|\vLa|-1)/2, -(|\vLa|+1)/2, \dots, (|\vLa|-1)/2\}$,  let $\psi_M$ denote the normalized ground state of $H_M$ in the $M$-subspace, with $\la A\ra_M=\la \psi_M|A\psi_M\ra$ denoting the ground state expectation.
Then, for any $x, y\in\vLa$, we have
\begin{align}
&\Big\la s_x^{(+)} s_y^{(-)}\Big\ra_M>0, \quad \Big\la s_x^{(+)}S_y^{(-)}\Big\ra_M<0,\quad \Big\la S_x^{(+)}S_y^{(-)}\Big\ra_M>0,\\
&\Big\la s_x^{(-)} s_y^{(+)}\Big\ra_M>0, \quad \Big\la s_x^{(-)}S_y^{(+)}\Big\ra_M<0,\quad \Big\la S_x^{(-)}S_y^{(+)}\Big\ra_M>0.
\end{align}

\end{Thm}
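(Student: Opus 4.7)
My plan is to prove the three sign statements by embedding $H_M$ in the self-dual Hilbert cone framework of \cite{Miyao2012,Miyao2016,MIYAO2021168467}: after a unitary dressing of the $f$-spins, the ground state $\psi_M$ becomes a strictly positive vector with respect to a natural cone $\mathcal{P}_M\subset\h_M$, and each of the observables $s_x^{(\pm)}s_y^{(\mp)}$, $s_x^{(\pm)}S_y^{(\mp)}$, $S_x^{(\pm)}S_y^{(\mp)}$ either preserves $\mathcal{P}_M$ or reverses it, thereby fixing the sign of the expectation.

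To set up the cone, I first pass to the Schr\"odinger ($Q$-space) realization of the phonon Fock space, identifying $\h_{\mathrm{ph}}\cong L^{2}(\BbbR^{|\vLa|})$ so that $b_y+b_y^{*}=\sqrt{2}\,Q_y$ acts as multiplication and $\omega N_{\mathrm{ph}}$ is the (shifted) harmonic-oscillator Hamiltonian whose heat kernel (the Mehler kernel) is strictly positive. Next, set
\[
V := \exp\Bigl(\mathrm{i}\pi\sum_{x\in\vLa} n_{x,\down}^{f}\Bigr),
\]
so that $V^{*}S_{x}^{(\pm)}V = -S_{x}^{(\pm)}$, $V^{*}S_{x}^{(3)}V = S_{x}^{(3)}$, while $V$ commutes with every $c$-operator, every $n_{x}^{c}$ and every phonon operator. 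Let $\mathcal{P}_M\subset\h_M$ be the self-dual cone of those vectors whose coefficients in the ordered fermion basis $c_{w,\sigma}^{*}\prod_{z}^{\sharp}f_{z,\sigma_{z}}^{*}|\varnothing\ra$ are non-negative real functions in $L^{2}(\BbbR^{|\vLa|})$. After conjugation the off-diagonal part of $-V^{*}H_{M}V$ in the fermion basis is
\[
\sum_{x,y,\sigma} t_{x,y}\,c_{x,\sigma}^{*}c_{y,\sigma} + \frac{J}{2}\sum_{x\in\vLa}\bigl(s_{x}^{(+)}S_{x}^{(-)} + s_{x}^{(-)}S_{x}^{(+)}\bigr),
\]
whose matrix entries are all non-negative thanks to \hyperlink{A1}{\bf (A.1)(i)} and a consistent Jordan--Wigner-style ordering, while the electron-phonon coupling $\sum_{x,y}g_{x,y}n_{x}^{c}(b_y+b_y^{*})$ becomes the real multiplication operator $\sqrt{2}\sum_{x,y}g_{x,y}n_{x}^{c}Q_{y}$. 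A Trotter product-formula decomposition of $e^{-\beta V^{*}H_{M}V}$ into these off-diagonal fermionic factors, the Mehler semigroup of $\omega N_{\mathrm{ph}}$, and multiplications by positive exponentials of the diagonal parts then shows that $e^{-\beta V^{*}H_{M}V}$ leaves $\mathcal{P}_M$ invariant.

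Positivity improving follows from combining (i) the connectivity of the hopping graph \hyperlink{A2}{\bf (A.2)} together with the spin-flip part of the exchange, which ensures that any two fermion basis vectors in $\h_M$ are linked by a finite product of off-diagonal entries, and (ii) the strict positivity of the Mehler kernel on $L^{2}(\BbbR^{|\vLa|})$. A Perron--Frobenius--Faris type theorem in the form used in \cite{Miyao2016,Miyao2017} then yields that the ground state $\tilde\psi_{M}:=V\psi_{M}$ is unique within $\h_M$ and strictly positive with respect to $\mathcal{P}_M$. To conclude, $V^{*}(\cdot)V$ fixes $s_{x}^{(\pm)}s_{y}^{(\mp)}$ and $S_{x}^{(\pm)}S_{y}^{(\mp)}$ (each with non-negative matrix entries in the chosen basis, hence preserving $\mathcal{P}_M$) but reverses the sign of $s_{x}^{(\pm)}S_{y}^{(\mp)}$ (which therefore maps $\mathcal{P}_M$ into $-\mathcal{P}_M$). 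Strict positivity of $\tilde\psi_{M}$ together with the ergodic connectivity of the off-diagonal dynamics then gives the strict signs $\la\psi_{M}|A\psi_{M}\ra=\la\tilde\psi_{M}|V^{*}AV\tilde\psi_{M}\ra$ asserted in the theorem.

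The main technical obstacle is that the finite-dimensional Perron--Frobenius argument used in \cite{PhysRevLett.67.2211} is unavailable once infinite-dimensional phonons enter the picture, so one must verify positivity preservation and positivity improving for the full tensor-product cone. The decisive quantitative inputs are the strict positivity of the Mehler kernel and a Trotter decomposition that cleanly interleaves the fermionic, multiplicative and bosonic-kinetic pieces; once this is in place, the signs of the correlators reduce to a direct reading-off of the action of $V^{*}AV$ on $\mathcal{P}_M$.
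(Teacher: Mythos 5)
Your proposal is correct and follows the same overall architecture as the paper's proof: work in the Hilbert cone of componentwise-nonnegative vectors in the fermion basis tensored with $L^2(\mathcal{Q})_+$, show that a gauge-transformed heat semigroup is positivity improving (Trotter decomposition for preservation, hopping-graph connectivity combined with the transverse exchange spin-flips for ergodicity), invoke Perron--Frobenius--Faris to obtain a strictly positive ground state, and then read off the signs from how the gauge transformation acts on $s_x^{(\pm)}s_y^{(\mp)}$, $s_x^{(\pm)}S_y^{(\mp)}$, $S_x^{(\pm)}S_y^{(\mp)}$. Two genuine differences in the route are worth recording. First, you conjugate by $e^{\mathrm{i}\pi\sum_x n^f_{x,\down}}$, flipping the $f$-spin ladder operators, whereas the paper uses $F=e^{L_c}e^{\mathrm{i}\pi N^c_\down}$, flipping the conduction-electron ladder operators; both choices make the transverse exchange attractive for $J>0$ and both reverse exactly the mixed correlator $s^{(\pm)}S^{(\mp)}$ while fixing the pure ones, so the sign bookkeeping comes out the same. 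Second, and more substantively, you dispense with the Lang--Firsov transformation: you keep the electron--phonon coupling as the real multiplication operator $\sqrt{2}\sum_{x,y}g_{x,y}n^c_xQ_y$ and absorb it into the diagonal factor of the Trotter split, so that within each fermionic sector the ``free'' semigroup is that of a shifted harmonic oscillator with strictly positive kernel. The paper instead removes the coupling by Lang--Firsov at the cost of dressing the hopping with phases $e^{\mathrm{i}\varPhi_{x,y}}$, which it then identifies as translations in $Q$-space, hence cone-preserving. Your route is more direct, but you should say explicitly that the linear term leaves the total potential $\tfrac{\omega}{2}q^2+cq$ bounded below and that the Trotter--Kato formula applies on $\D(N_{\rm ph})$ despite the unboundedness of the coupling; the paper's Lang--Firsov detour exists precisely to reduce everything to a bounded off-diagonal perturbation of an explicitly positivity-preserving semigroup.

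One caveat that applies equally to your argument and to the paper's: passing from ``$A\unrhd 0$ and $\tilde\psi_M>0$'' to a \emph{strict} inequality $\la\tilde\psi_M|A\tilde\psi_M\ra>0$ requires $A\tilde\psi_M\neq 0$. For $x\neq y$ the operator $s_x^{(+)}s_y^{(-)}=-c^*_{x,\up}c^*_{y,\down}c_{x,\down}c_{y,\up}$ annihilates the one-conduction-electron sector, so that expectation is in fact $0$; the strict signs are genuinely informative only for the $sS$ and $SS$ correlators and for $s_x^{(+)}s_x^{(-)}$ at coinciding sites. This is a defect of the statement rather than of your method, and your proof shares it with the paper's.
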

The proof of Theorem \ref{Main2} will  be presented in Section \ref{Sec3}.

\begin{Rem}
For $J<0$ (ferromagnetic coupling),  Theorems \ref{Main1} and \ref{Main2} are modified as follows:
\begin{description}
\item[Theorem \ref{Main1}'] 
{\it 
Assume \hyperlink{A1}{\bf (A. 1)} and \hyperlink{A2}{\bf (A. 2)}. Assume $J<0$ (ferromagnetic coupling). Then,
the ground state of $H$ is unique apart from the trivial $|\vLa|+2$-fold degeneracy and has total spin $S=(|\vLa|+1)/2$.}
\item[Theorem \ref{Main2}'] {\it 
Assume \hyperlink{A1}{\bf (A. 1)} and \hyperlink{A2}{\bf (A. 2)}. Assume $J<0$ (ferromagnetic coupling).
For every $M \in \{-(|\vLa|-1)/2, -(|\vLa|+1)/2, \dots, (|\vLa|-1)/2\}$ and  $x, y\in\vLa$, we have}
\begin{align}
&\Big\la s_x^{(+)} s_y^{(-)}\Big\ra_M>0, \quad \Big\la s_x^{(+)}S_y^{(-)}\Big\ra_M>0,\quad \Big\la S_x^{(+)}S_y^{(-)}\Big\ra_M>0,\\
&\Big\la s_x^{(-)} s_y^{(+)}\Big\ra_M>0, \quad \Big\la s_x^{(-)}S_y^{(+)}\Big\ra_M>0,\quad \Big\la S_x^{(-)}S_y^{(+)}\Big\ra_M>0.
\end{align}
\end{description}
The proofs of these theorems are almost the same as the proofs of Theorems \ref{Main1} and \ref{Main2} and are therefore omitted.
\end{Rem}

\subsection{Strong coupling limit}
Consider the strong coupling limit of $J\to \infty$. 
In \cite{LACROIX1985991}, it is discussed that the usual KLM system without the electron-phonon interaction is equivalent to the Nagaoka--Thouless system in the strong coupling limit.
However, the arguments in \cite{LACROIX1985991}, while intuitively plausible, have the following mathematical difficulties: 
\begin{itemize}
\item since the ground state energy of $H$ diverges to $-\infty$ in the limit of $J\to \infty$, in order to rigorously realize the arguments of \cite{LACROIX1985991}, we need an energy renormalization procedure, as we will see below; 
\item furthermore, the Hamiltonian we consider contains unbounded operators describing the phonon, making mathematical analysis complicated.
\end{itemize}
 In this subsection,  we clarify how the arguments of \cite{LACROIX1985991} can be expressed mathematically for the Hamiltonian with the electron-phonon interaction, in the form of a theorem. 
 In the proof  in Section \ref{Sec4}, it will be shown how the issues mentioned above are overcome.
 We note that the results for the usual KLM can be derived by setting $g_{x, y}\equiv 0\, (x, y\in \vLa)$ in the theorems given below.

Set $\mathcal{S}_{\vLa}:=\{-1, +1\}^{|\vLa|}$.
For every $x\in \vLa$ and $\bs \sigma=(\sigma_x)_{x\in \vLa} $, define
\be
| {\bs \sigma}_x\ra=f_{x, \sigma_x} \prod_{x\in \vLa}^{\sharp} f_{x, \sigma_x}^*|\varnothing\ra, 
\ee
where the up-spin corresponds to $1$ and the down-spin corresponds to $-1$:
$\up=+1,\ \down=-1$.
This vector represents a state in which there is a hole at site $x$ and the spin configuration of the electrons at the other sites is given by ${\bs \sigma}_x:=(\sigma_y)_{y\in \vLa\setminus \{x\}}$.
Given 
${\bs \sigma}=(\sigma_y)_{y\in \vLa}\in \mathcal{S}_{\vLa}$, define
\be
|{\bs \sigma}_x\ra_{0}=
\frac{1}{\sqrt{2}} (c_{x\up}^* f_{x\down}^*-c_{x\down}^*f_{x\up}^*)
 |{\bs \sigma}_x\ra. \label{SingV}
\ee
This vector describes the situation where a conduction electron and an $f$-electron form a singlet at site $x$.
We readily confirm  that  
\be
{}_{0}\la {\bs \sigma}_x| {\bs \tau}_y\ra_{0}=\delta_{x, y} \delta_{{\bs \sigma}_x, {\bs \tau}_y}
\ee
holds.  Therefore, we see that $\{|{\bs \sigma}_x\ra_{0} : x\in \vLa,\ {\bs \sigma}\in \mathcal{S}_{\vLa}\}$ forms an orthonormal basis.
Put 
\be
\mathfrak{X}=\overline{\mathrm{span}}\{
| {\bs \sigma}_x\ra_0\otimes \vphi : {\bs \sigma}\in \mathcal{S}_{\vLa}, x\in \vLa, \vphi\in \h_{\rm ph}\}
\label{DefX}
\ee
and decompose the Hilbert space $\h$ as  $\h=\mathfrak{X}\oplus \mathfrak{X}^{\perp}$, where, for a  given  set $S$, $\mathrm{span}(S)$ indicates the linear span of $S$ and 
$\overline{\mathrm{span}}(S)$ denots the closure of $\mathrm{span}(S)$.
Henceforth, $P$ stands for the orthogonal projection from $\h$ to  $\mathfrak{X}$.

\begin{Thm}\label{Main3}
Define the energy renormalized Hamiltonian as 
\be
H_{{\rm ren}, J}=H+\frac{J}{2}. \label{DefHren}
\ee
There exists a self-adjoint operator $H_{\rm ren, \infty}$ on $\mathfrak{X}$, bounded from below, such that, for every $z\in \BbbC\setminus \mathrm{spec}(H_{\rm ren, \infty})$, it holds that 
\be
\lim_{J\to \infty}\Big\|(H_{{\rm ren}, J}-z)^{-1}-  (H_{\rm ren, \infty}-z)^{-1} P \Big\|=0,
\ee
\end{Thm}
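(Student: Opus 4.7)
The plan is to treat the strong-coupling limit as a Feshbach--Schur block decomposition with respect to $\h=\mathfrak{X}\oplus\mathfrak{X}^\perp$, writing $Q:=I-P$. Two structural observations simplify the decomposition drastically. First, because there is exactly one conduction electron, the spin-exchange operator $K:=\sum_{x\in\vLa}\bs{s}_x\cdot\bs{S}_x$ acts nontrivially only at the site of that electron, where it has eigenvalue $-3/4$ on the on-site singlet ($\mathfrak{X}$) and $+1/4$ on the on-site triplet ($\mathfrak{X}^\perp$); in particular $K$ is a \emph{scalar} on each block, so $[K,P]=0$. Second, since $P$ acts only on the fermionic factor, $[P,N_{\rm ph}]=0$. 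Writing $H_0:=H-JK$ for the $J$-independent part, these give the block form
\ba
PH_{{\rm ren},J}P&=PH_0P+c_PJ\cdot P,\quad QH_{{\rm ren},J}Q=QH_0Q+c_QJ\cdot Q,\\
PH_{{\rm ren},J}Q&=PH_0Q,
\ea
with scalar constants $c_P<c_Q$ determined by the $+J/2$ renormalization, and a $J$-\emph{independent} off-diagonal coupling. The candidate limit $H_{{\rm ren},\infty}$ is the self-adjoint reduction of $PH_{{\rm ren},J}P$ to $\mathfrak{X}$ (which depends on $J$ only through a constant shift); it is bounded below by Kato--Rellich since its only unbounded piece is $\omega N_{\rm ph}|_{\mathfrak X}\geq0$.

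The next step is the block-matrix Schur-complement formula
\be
P(H_{{\rm ren},J}-z)^{-1}P=\Big[PH_{{\rm ren},J}P-z-PH_{{\rm ren},J}Q\,(QH_{{\rm ren},J}Q-z)^{-1}\,QH_{{\rm ren},J}P\Big]^{-1},
\ee
together with the three analogous formulas for the $PQ$-, $QP$-, and $QQ$-blocks of $(H_{{\rm ren},J}-z)^{-1}$. Since $QH_0Q$ is bounded below on $\mathfrak{X}^\perp$, one obtains the operator inequality $QH_{{\rm ren},J}Q\geq(c_QJ+c')Q$ and hence $\|(QH_{{\rm ren},J}Q-z)^{-1}\|=O(1/J)$, uniformly for $z$ in compact subsets of $\BbbC\setminus\mathrm{spec}(H_{{\rm ren},\infty})$. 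This forces the second-order Feshbach correction on the $PP$-block, as well as the $PQ$-, $QP$-, and $QQ$-blocks of the full resolvent, to vanish in norm, so that $(H_{{\rm ren},J}-z)^{-1}\to(H_{{\rm ren},\infty}-z)^{-1}P$, as required.

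The main technical obstacle is that the off-diagonal coupling $PH_0Q$ is \emph{unbounded}: through $V_{\rm int}=\sum_{x,y}g_{x,y}n_x^c(b_y+b_y^*)$ it inherits the unbounded fields $b_y+b_y^*$. I would handle this with the standard Kato--Rellich dressing: $b_y+b_y^*$ is relatively bounded by $(N_{\rm ph}+1)^{1/2}$, so $PH_0Q\,(N_{\rm ph}+1)^{-1/2}$ is a bounded operator, and one can commute $(N_{\rm ph}+1)^{1/2}$ through the middle resolvent at the cost of a bounded commutator $[(N_{\rm ph}+1)^{1/2},V_{\rm int}]$ (recall that $[N_{\rm ph},b_y+b_y^*]$ is bounded on the core $\D(N_{\rm ph})$). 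This yields the key estimate
\be
\Big\|(N_{\rm ph}+1)^{1/2}(QH_{{\rm ren},J}Q-z)^{-1}(N_{\rm ph}+1)^{1/2}\Big\|=O(1/J),
\ee
which, inserted into the Schur-complement expressions, delivers the claimed norm convergence uniformly on compact subsets of $\BbbC\setminus\mathrm{spec}(H_{{\rm ren},\infty})$.
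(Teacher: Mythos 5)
Your overall skeleton --- the decomposition $\h=\mathfrak{X}\oplus\mathfrak{X}^{\perp}$, the observation that the exchange term is the scalar $0$ on $\mathfrak{X}$ and $3J/4$ on $\mathfrak{X}^{\perp}$ after renormalization, the identification $H_{\rm ren,\infty}=PH_{{\rm ren},J}P\restriction \mathfrak{X}$, and the $O(1/J)$ bound on the resolvent of the $\mathfrak{X}^{\perp}$-block --- is exactly the paper's (there the blocks are called $H_{\infty}$, $H_1$, $H_{01}$, and the lower bound $E(H_1\restriction\mathfrak{X}^{\perp})\ge -C+3J/4$ plays the role of your $c_QJ+c'$). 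But your treatment of the off-diagonal coupling is wrong on both counts. The premise is false: $PH_0Q$ does \emph{not} inherit the phonon fields. Since there is exactly one conduction electron, each basis vector $|{\bs \sigma}_x\ra_0$ and $|{\bs \sigma}_x\ra_{1,j}$ is an eigenvector of $n_y^c$ with eigenvalue $\delta_{x,y}$, so $n_y^c$ --- and hence the entire term $\sum_{x,y}g_{x,y}n_x^c(b_y+b_y^*)$ --- commutes with $P$ (which acts as the identity on the phonon factor); the same holds for $\omega N_{\rm ph}$, $-2hS_{\rm tot}^{(3)}$ and the exchange term. The off-diagonal block is therefore just $PTP^{\perp}+P^{\perp}TP$ with $T$ the hopping term, a bounded $J$-independent operator, and the plain Schur/Neumann argument closes with no weighted estimates. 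This is precisely the identity $H_{01}=PTP^{\perp}+P^{\perp}TP$ on which the paper's proof rests.

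More seriously, the workaround you propose for this non-existent problem relies on a false estimate: $\|(N_{\rm ph}+1)^{1/2}(QH_{{\rm ren},J}Q-z)^{-1}(N_{\rm ph}+1)^{1/2}\|$ is \emph{not} $O(1/J)$. Already in the decoupled case $g\equiv 0$, $t\equiv 0$ this operator equals $(N_{\rm ph}+1)\big(\tfrac{3J}{4}+\omega N_{\rm ph}-z\big)^{-1}Q$, whose norm is $\sup_{n\ge 0}\,(n+1)\,\big|\tfrac{3J}{4}+\omega n-z\big|^{-1}\ge \omega^{-1}+o(1)$ uniformly in $J$: the weight grows linearly in $N_{\rm ph}$, exactly matching the linear growth of $\omega N_{\rm ph}$ in the denominator, so the $J$-shift is irrelevant in the high-phonon-number sector. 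Had the coupling genuinely been unbounded in the way you describe, this route would therefore not have closed. Two minor further remarks: your Neumann-series step requires $|\mathrm{Im}\,z|>\|H_{01}\|$ rather than holding on arbitrary compacts (the paper proves convergence for such $z$ and then extends to all $z\notin\mathrm{spec}(H_{\rm ren,\infty})$ via \cite[Theorem VIII.19]{Reed1981}), and the lower-boundedness of $H_{\rm ren,\infty}$ needs the Kato--Rellich relative bound on $\sum_{x,y}g_{x,y}n_x^c(b_y+b_y^*)$ with respect to $\omega N_{\rm ph}$, not just positivity of $\omega N_{\rm ph}$.
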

\begin{Rem}\label{Neta}
\rm 
In Section \ref{Sec5},   $H_{{\rm ren}, \infty}$ is shown to be equivalent to the Nagaoka--Thouless Hamiltonian with added electron-phonon interaction.
\end{Rem}

The proof of Theorem \ref{Main3} is given in Section \ref{Sec4}.

For the system described by the  Hamiltonian $H_{{\rm ren}, \infty}$, some information on magnetic properties can be obtained.
In order to state our finding, we need the following condition:
\begin{description}
\item[\hypertarget{A3}{(A. 3)}] 
Let $G$ be the graph generated by the hopping matrix (see \hyperlink{A2}{\bf (A. 2)} for the detailed definition).
Then, $G$ is biconnected and it is not a simple loop (i.e., periodic chain) with more than four sites.\footnote{In other words, one cannot make $G$ disconnected by removing a single site.}
	\end{description}
	Regular lattices of two or more dimensions, such as triangular lattices, fcc, bcc, and $d$-dimensional hypercubic lattices ($d\ge 2$), satisfy this condition.

\begin{Thm}\label{Main4} 
 Assume \hyperlink{A1}{\bf (A. 1)} and \hyperlink{A3}{\bf (A. 3)}.
  Then,
the ground state of $H_{\rm ren, \infty}$ is unique apart from the trivial $|\vLa|$-fold degeneracy and has total spin $S=(|\vLa|-1)/2$.
\end{Thm}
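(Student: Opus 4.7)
The plan is to adapt the Nagaoka--Thouless strategy to the infinite-dimensional, phonon-coupled setting by means of the operator-inequality framework of \cite{Miyao2012,Miyao2016,MIYAO2021168467} already used elsewhere in this paper. Writing $\mathfrak{X}_M:=\mathfrak{X}\cap\h_M$, one has $\mathfrak{X}=\bigoplus_M \mathfrak{X}_M$ with $M\in\{-(|\vLa|-1)/2,\dots,(|\vLa|-1)/2\}$, since the singlet on the hole site carries $S_{\rm tot}^{(3)}=0$ and the remaining $|\vLa|-1$ sites contribute magnetizations in $[-(|\vLa|-1)/2,(|\vLa|-1)/2]$; the Hamiltonian $H_{\rm ren,\infty}$ respects this decomposition. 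The aim is uniqueness of the ground state in each $\mathfrak{X}_M$, together with the fact that these $|\vLa|$ sectorwise ground states coalesce into one spin-$(|\vLa|-1)/2$ multiplet.

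To set up Perron--Frobenius, I would pass to the Schr\"odinger ($Q$-)representation of the CCR so that $\h_{\rm ph}\simeq L^2(\BbbR^{|\vLa|})$ and the pointwise nonnegative functions form a natural self-dual cone, and then apply a Marshall-type antiunitary gauge transformation to the basis of $\mathfrak{X}_M$ indexed by the hole position $x\in\vLa$ and the spin configuration $\bs\sigma_x$ on the remaining sites. The combined transformation yields a self-dual cone $\fP_M\subset\mathfrak{X}_M$. Three ingredients then have to be checked: the hole-hopping piece inherited from the large-$J$ limit of $H_{\rm el}$ becomes a nonnegative matrix in the configuration basis after the Marshall gauge; the electron--phonon coupling $\sum_{x,y} g_{x,y}n_x^c(b_y+b_y^*)$ is a multiplication operator in the $Q$-variables and therefore positivity preserving; and $e^{-t\omega N_{\rm ph}}$ is known to be positivity improving on $L^2(\BbbR^{|\vLa|})$. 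Assembling these via a Trotter/Dyson expansion shows that $-H_{\rm ren,\infty}\restriction\mathfrak{X}_M$ preserves $\fP_M$.

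The key step is to upgrade positivity preservation to positivity improvement of $e^{-\beta H_{\rm ren,\infty}}$ on $\fP_M$, and this is exactly where hypothesis \hyperlink{A3}{\textbf{(A. 3)}} enters. Biconnectedness of $G$ guarantees that the hole can be moved between any two sites along a path avoiding any prescribed third site, and the exclusion of simple loops of length greater than four ensures that every spin permutation induced by a closed hopping path on $G$ decomposes into adjacent transpositions already realized by the Hamiltonian. Consequently, the hole-hopping action is transitive on the configuration basis of every $\mathfrak{X}_M$; combined with the positivity-improvement of the phonon semigroup this yields positivity improvement of $e^{-\beta H_{\rm ren,\infty}}$ and, via the operator-theoretic Perron--Frobenius theorem, uniqueness of the ground state in each $\mathfrak{X}_M$.

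It remains to collect these sectorwise ground states into a single multiplet. Since $[H_{\rm ren,\infty},\bs S_{\rm tot}^2]=0$ and $S_{\rm tot}^\pm$ carries a ground state in one $M$-sector into a ground state in an adjacent sector after absorbing the constant shift induced by the $-2hS_{\rm tot}^{(3)}$ term, comparing sector energies forces the whole family to lie in a single $S=(|\vLa|-1)/2$ multiplet, producing the trivial $|\vLa|$-fold degeneracy in the statement; no higher spin occurs because $(|\vLa|-1)/2$ is the largest $|M|$ available in $\mathfrak{X}$. I expect the main obstacle to be the ergodicity step: verifying that the combinatorial Nagaoka--Thouless transitivity under \hyperlink{A3}{\textbf{(A. 3)}}, combined with the phonon cone after the Marshall gauge, produces a genuinely positivity-improving semigroup rather than a merely positivity-preserving one. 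This is precisely where the Miyao operator-inequality framework is essential, the naive finite-dimensional Perron--Frobenius argument of \cite{PhysRevLett.67.2211} being unavailable in the phonon-coupled setting.
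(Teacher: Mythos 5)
Your route is viable but genuinely different from the paper's. The paper does not rerun the Perron--Frobenius machinery on $\mathfrak{X}$ at all: it constructs an explicit unitary $W:\mathfrak{X}\to\h_{\rm NT}$ sending the singlet-at-the-hole basis $\ket{\bs\sigma_x}_0\otimes\vphi$ to the Nagaoka--Thouless basis $\ket{\bs\sigma_x}_{\rm NT}\otimes\vphi$, verifies by matrix elements that $WH_{\rm ren,\infty}W^*=H_{\rm NT}$ with $b_{x,y}=\tfrac12 t_{x,y}$ (Proposition \ref{NTEquiv}), and then imports the already-proved Nagaoka theorem for the phonon-coupled NT Hamiltonian from \cite{Miyao2017} (Theorem \ref{NT1}). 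What you propose is essentially to reprove that imported theorem directly on $\mathfrak{X}$; the ingredients you list (the $Q$-representation cone, positivity preservation of the multiplication-type coupling, positivity improvement of $e^{-t\omega N_{\rm ph}}$, a Duhamel/Trotter assembly, and the connectivity argument under \hyperlink{A3}{\textbf{(A.~3)}}) are exactly the ones used there. The paper's reduction buys brevity and reuse; your approach buys self-containedness. Two caveats. First, the Marshall-type gauge is unnecessary and slightly misleading here: as the computation in Proposition \ref{NTEquiv} shows, ${}_0\la\bs\sigma_x|T_h|\bs\tau_y\ra_0=-\tfrac12\sum_\sigma t_{x,y}\delta_{\cdots}+\cdots$, so for $t_{x,y}\ge0$ the off-diagonal part is already nonpositive in the natural basis and no sign gauge is needed (Nagaoka-type ferromagnetism, unlike Lieb--Mattis antiferromagnetism, works in the bare configuration basis). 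Second, and more substantively, the step you flag as the ``main obstacle'' --- that under \hyperlink{A3}{\textbf{(A.~3)}} the hole hopping acts transitively on the configuration basis of each $\mathfrak{X}_M$, upgrading positivity preservation to ergodicity --- is asserted rather than proved; this is the nontrivial 15-puzzle-type connectivity lemma (cf.\ \cite{PhysRevB.98.180101}, \cite{Tasaki2020}) and is precisely the content the paper outsources to \cite{Miyao2017}. If you supply that lemma (or cite it), your argument closes; as written, the heart of the proof is still a placeholder. The final multiplet/overlap step is fine and matches what the paper does in its proof of Theorem \ref{Main1}.
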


\begin{Rem}
\rm 
The Hamiltonian $H_{\rm ren, \infty}$ treated here describes an extreme situation, and the reader may wonder about the significance of this theorem. In Section \ref{Discuss}, we clarify the importance of this theorem in the discussion of the stability of magnetic properties in the ground state of $H_{\rm ren,\infty}$.
\end{Rem}

We will prove Theorem \ref{Main4} in Section \ref{Sec5}.

The magnetization concerning $H_{\rm ren, \infty}$ is defined by
\be
M_{\rm ren, \infty}(\beta, h)=\frac{\partial \log Z_{\rm ren, \infty}}{\partial (\beta h)}, \label{DefMg}
\ee
where $Z_{\rm ren, \infty}$ is the partition function:
\be
Z_{\rm ren, \infty}=\Tr\big[e^{-\beta H_{\rm ren, \infty}}\big].
\ee

\begin{Thm}\label{Main5}
Let $d$ be a natural number greater than or equal to $2$ and consider the case $\vLa=[-L, L)^d\cap \BbbZ^d$. Suppose $(t_{x, y})$  represents   the nearest neighbor hopping matrix.\footnote{To be precise, $t_{x, y}=t>0$ if $\|x-y\|=1$, $t_{x, y}=0$, otherwise, where
$\|a\|=\sqrt{\sum_{i=1}^d|a_i|^2}$.}
Then we obtain
\be
M_{\rm ren, \infty}(\beta, h)
\ge (|\vLa|-1) \tanh (\beta h).
\ee
\end{Thm}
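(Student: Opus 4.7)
The plan is to use the translation invariance of $\vLa=[-L,L)^d\cap\BbbZ^d$ to reduce the multi-site magnetization bound to a per-site conditional-expectation inequality, and then to deduce that conditional inequality from a finite-temperature ferromagnetic correlation estimate.

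Because the nearest-neighbor hopping matrix $(t_{x,y})$ and the phonon coupling $(g_{x,y})$ on $\vLa$ can be taken translation-invariant, the per-site f-electron magnetization
\[
m(\beta, h):=\big\langle\sigma_y\,\mathbf 1_{\{y\ne x\}}\big\rangle_{\beta, h}
\]
(where $\sigma_y\in\{-1,+1\}$ is the f-electron spin at a non-singlet site $y$ and $x$ denotes the singlet site of the basis vector $|{\bs\sigma}_x\rangle_0$) is independent of $y\in\vLa$, and the probability that the singlet occupies any given site equals $1/|\vLa|$. Hence $M_{\rm ren,\infty}(\beta, h)=|\vLa|\,m(\beta, h)$ and the theorem becomes
\[
\big\langle\sigma_y\,\big|\,y\ne x\big\rangle_{\beta, h}\;\ge\;\tanh(\beta h).
\]
Introducing $Z_{y,\sigma}(\beta, h):=\Tr\!\big[e^{-\beta H_{\rm ren,\infty}+2\beta h S^{(3)}_{\rm tot}}\,P_{y,\sigma}\big]$ for $\sigma\in\{\up,\down\}$, with $P_{y,\sigma}$ the projector onto ``$y\ne x$ and the f-electron at $y$ has spin $\sigma$'', the conditional expectation equals $(Z_{y,\up}-Z_{y,\down})/(Z_{y,\up}+Z_{y,\down})$, and the inequality is equivalent to
\[
\phi_y(h):=\log Z_{y,\up}(\beta, h)-\log Z_{y,\down}(\beta, h)\;\ge\;2\beta h,\qquad h\ge 0.
\]

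Global spin-flip symmetry gives $Z_{y,\up}(\beta, h)=Z_{y,\down}(\beta,-h)$, so $\phi_y$ is odd in $h$ with $\phi_y(0)=0$, and $\phi_y(h)\ge 2\beta h$ for $h\ge 0$ will follow once $\phi_y'(h)\ge 2\beta$ is established on $[0,\infty)$. A direct computation gives
\[
\phi_y'(h)=2\beta\!\left[\langle S^{(3)}_{\rm tot}\rangle_{y,\up}-\langle S^{(3)}_{\rm tot}\rangle_{y,\down}\right]=2\beta+\beta\!\sum_{y'\ne y}\!\left[\langle\tau_{y'}\rangle_{y,\up}-\langle\tau_{y'}\rangle_{y,\down}\right],
\]
where $\tau_{y'}:=\sigma_{y'}\mathbf 1_{\{y'\ne x\}}$ and $\langle\cdot\rangle_{y,\sigma}$ is the conditional Gibbs expectation obtained by dividing by $Z_{y,\sigma}$ in the trace above. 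The required derivative bound thus reduces to the Griffiths--Ginibre-type correlation estimate
\[
\sum_{y'\ne y}\!\left[\langle\tau_{y'}\rangle_{y,\up}(\beta, h)-\langle\tau_{y'}\rangle_{y,\down}(\beta, h)\right]\;\ge\;0\qquad(h\ge 0),
\]
which asserts that flipping the f-electron at $y$ from down to up does not decrease the averaged magnetizations of the other f-electrons---the hallmark of ferromagnetic correlations, consistent with the ground-state picture.

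The hardest step is establishing this positive-correlation estimate at finite temperature. Its ground-state analogue can be extracted from the Perron--Frobenius techniques underlying Theorem \ref{Main2}, but extending to $\beta<\infty$ for the full Nagaoka--Thouless-plus-phonon Hamiltonian requires an input not present in Theorems \ref{Main1}--\ref{Main4}. I expect to invoke reflection positivity across the coordinate hyperplanes of $\vLa=[-L,L)^d\cap\BbbZ^d$---which is precisely why the hypercubic, nearest-neighbor structure is imposed in Theorem \ref{Main5} in place of the weaker Assumption \hyperlink{A3}{\bf (A. 3)} used in Theorem \ref{Main4}---together with a Marshall-gauge Trotter decomposition of $e^{-\beta H_{\rm ren,\infty}}$ that tracks the positivity of matrix elements through imaginary time. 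Once the correlation inequality is in hand, integrating $\phi_y'(h)\ge 2\beta$ in $h$ delivers $\phi_y(h)\ge 2\beta h$, and the magnetization bound of Theorem \ref{Main5} follows via the reduction above.
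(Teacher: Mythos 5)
There is a genuine gap, and it sits exactly where you flag ``the hardest step.'' Your reduction to the inequality $\sum_{y'\ne y}[\langle\tau_{y'}\rangle_{y,\up}-\langle\tau_{y'}\rangle_{y,\down}]\ge 0$ is a plausible reformulation, but you do not prove it; you only state that you ``expect to invoke reflection positivity'' together with a Trotter decomposition. That expectation is not justified: what you need is a quantum Griffiths/GKS-type correlation inequality for an itinerant-electron system with a single hole coupled to phonons, and no such inequality is known to follow from reflection positivity (reflection positivity yields Gaussian domination and infrared bounds, not monotonicity of conditional magnetizations under a local spin flip). Since the entire derivative bound $\phi_y'(h)\ge 2\beta$ rests on this unproven estimate, the argument is incomplete at its core. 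There is also a secondary flaw in the reduction itself: $\vLa=[-L,L)^d\cap\BbbZ^d$ with nearest-neighbor hopping as defined in the footnote is a box with open boundary, which is \emph{not} translation invariant, so neither the uniformity of the hole distribution nor the $y$-independence of $m(\beta,h)$ holds; your identity $M_{\rm ren,\infty}=|\vLa|\,m(\beta,h)$ therefore does not follow as stated. Finally, your guess that the hypercubic nearest-neighbor hypothesis is imposed to enable reflection positivity is a misreading of why that hypothesis appears.

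For comparison, the paper's proof is entirely different and much shorter: Proposition \ref{NTEquiv} identifies $H_{\rm ren,\infty}$ unitarily with the Nagaoka--Thouless Hamiltonian $H_{\rm NT}$ with electron-phonon coupling, and Theorem \ref{NT2}, quoted from \cite{Miyao2020-2}, already gives $M_{\rm NT}(\beta,h)\ge(|\vLa|-1)\tanh(\beta h)$ under the stated lattice hypotheses. The mechanism behind that cited result (going back to \cite{Aizenman1990}) is not a correlation inequality at all, but a Feynman--Kac/path expansion of the partition function in which, for each fixed closed trajectory of the hole, the spin sum is controlled sector by sector and compared with the free-spin system whose magnetization is exactly $(|\vLa|-1)\tanh(\beta h)$. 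If you want a self-contained proof along your lines, you would need to either establish the finite-temperature correlation inequality (which would be a new and substantial result) or switch to the path-expansion comparison argument.
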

\begin{Rem}
\rm 
\begin{itemize}
\item
 This result is very similar to Aizemann--Lieb's theorem \cite{Aizenman1990}. This agreement follows from the facts stated in Remark \ref{Neta}. See Section \ref{Sec5} for details.
 \item Let $M(\beta, h)$ be the magnetization  defined  by replacing $H_{\rm ren, \infty}$ by $H$ in the definition of $M_{\rm ren, \infty}(\beta, h)$. From Theorem \ref{Main3}, it is expected that  $\lim_{J\to \infty}M(\beta, h)=M_{\rm ren, \infty}(\beta, h)$ holds.
   Indeed, in the case $g_{x, y} \equiv 0$,  this 
 immediately follows from Theorem \ref{Main3}.  In the case of $g_{x, y} \not\equiv 0$, however, showing  this seemingly straightforward relationship requires a rather complicated discussion of traces in infinite-dimensional Hilbert spaces, so we will not go into it further in this paper.
 \end{itemize}
\end{Rem}

The proof of Theorem \ref{Main5} will be  provided in Section \ref{Sec5}.

\subsection{Discussion: Stability of magnetic properties of the ground state} \label{Discuss}
The reader may have noticed that the results of Theorem \ref{Main1} and Theorem \ref{Main4} are quite similar.
Is this a coincidence?
Let us now examine the reasons for this coincidence from a higher perspective.
In a nutshell, the reason for this can be stated as follows:

\begin{Thm}\label{NTstatThm}
$H_{\rm el}, H$ and $ H_{\rm ren, \infty}$ all belong to the Nagaoka--Thouless stability class.
\end{Thm}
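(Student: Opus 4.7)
The plan is to first formulate the notion of ``Nagaoka--Thouless stability class'' in a way that abstracts the common structural features exploited in the proofs of Theorems \ref{Main1} and \ref{Main4}. Following the operator-inequality framework of \cite{Miyao2012,Miyao2016,Miyao2017,MIYAO2021168467}, I would define the class as those self-adjoint, bounded-from-below Hamiltonians $K$ on a Hilbert space $\mathcal{K}$ equipped with a self-dual cone $\fP$ for which (a) $K$ commutes with $S_{\rm tot}^{(3)}$, (b) in each $M$-subspace $\mathcal{K}_M$ with $|M|\le (|\vLa|-1)/2$ the semigroup $e^{-\beta K_M}$ is positivity-preserving with respect to the restriction $\fP_M:=\fP\cap \mathcal{K}_M$, and (c) it is in fact ergodic (equivalently, positivity-improving) on each such $\fP_M$. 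By the Perron--Frobenius type theorem valid in infinite-dimensional cones from the above references, any member of this class automatically has a unique ground state in every such $M$-subspace, which by an SU(2) argument pins down the total spin to $(|\vLa|-1)/2$, giving the $|\vLa|$-fold trivial degeneracy.

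With this definition in hand, the strategy is case-by-case verification. First, for $H_{\rm el}$, I would take $\fP$ to be the cone generated by the Nagaoka--Thouless basis $c_{x,\sigma}^*\prod^{\sharp} f_{y,\sigma_y}^*|\varnothing\ra$ after a Shastry--Sutherland hole (sign) transformation that turns the signs of all off-diagonal matrix elements of $-t_{x,y}c_{x,\sigma}^*c_{y,\sigma}$ and $J\bs{s}_x\cdot\bs{S}_x$ non-positive; positivity preservation is then immediate and ergodicity is equivalent to the graph-connectedness assumption \hyperlink{A2}{(A.\,2)}. Second, for $H$, I would tensor this cone with the Fr\"ohlich-type natural cone of $\h_{\rm ph}$ (the closure of vectors in the Fock basis with non-negative coefficients after a suitable shift); checking that $n_x^c(b_y+b_y^*)$ and $\omega N_{\rm ph}$ are cone-preserving in the combined cone is a routine computation, and the required ergodicity across photon sectors follows from the argument sketched for Theorem \ref{Main1}. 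Third, for $H_{\rm ren,\infty}$, the cone is built from the singlet vectors $|\bs\sigma_x\ra_0\otimes \vphi$ (with $\vphi$ in the phonon cone) and the same sign convention; cone-preservation here uses the explicit Nagaoka--Thouless-plus-phonon form of $H_{\rm ren,\infty}$ promised in Remark \ref{Neta}.

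The main obstacle, as expected, will be the ergodicity verification for $H_{\rm ren,\infty}$. Unlike in $H_{\rm el}$ or $H$, where electron hopping acts directly on spin configurations, in $H_{\rm ren,\infty}$ the hole must move through singlet intermediate states and the effective hopping acquires nontrivial phases in the singlet basis; connectedness in the spin-hole configuration graph then requires that removing a single site does not disconnect $G$, which is precisely \hyperlink{A3}{(A.\,3)}. Once this combinatorial ergodicity lemma is established, the three verifications fit uniformly into the abstract definition, and Theorem \ref{NTstatThm} follows while simultaneously explaining the structural parallel between Theorems \ref{Main1} and \ref{Main4}. I would close by remarking that the same verification scheme applies verbatim to the ferromagnetic case $J<0$ after an additional spin-flip on one sublattice, so that the stability class is genuinely the common umbrella suggested by the primed versions of the theorems.
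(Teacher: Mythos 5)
Your plan has the right flavor (Hilbert cones, ergodicity of the heat semigroup, Perron--Frobenius--Faris, case-by-case verification), and the paper itself defers the detailed verification to \cite{Miyao2019,Miyao2022}, so a sketch at this level of detail is not unreasonable. However, the definition of the Nagaoka--Thouless stability class that you propose is not the one the paper uses, and — more importantly — it is too weak to support the conclusion that membership is supposed to deliver (Theorem \ref{NTGRP}). You define the class by requiring only that $K$ commute with $S^{(3)}_{\rm tot}$ and that $e^{-\beta K_M}$ be ergodic w.r.t.\ \emph{some} cone $\fP_M$ in each $M$-subspace, and you then assert that "an SU(2) argument pins down the total spin to $(|\vLa|-1)/2$." That last step does not follow: ergodicity w.r.t.\ an arbitrary cone gives uniqueness of the ground state in each $M$-sector, but says nothing about its total spin. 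The ferromagnetic Hamiltonian ($J<0$) is a concrete counterexample — it satisfies all of your conditions (a)--(c) with respect to a suitable cone, yet its ground state has total spin $(|\vLa|+1)/2$, as stated in Theorem \ref{Main2}$'$.

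The ingredient you are missing is the anchoring of every member's cone to a single fixed reference cone of known spin content. In the paper's definition (Appendix \ref{AppA}), a pair $(\h,\fP)$ is admissible only if $\h$ contains the fixed base space $\h_0$ (here, the $M=0$ sector of the NT system) and the orthogonal projection $P$ onto $\h_0$ satisfies $P\fP=\fP_{\rm NT}$. It is this compatibility condition that produces the positive-overlap inequality $\la \psi_0|P\psi\ra>0$ between the strictly positive ground state $\psi$ and a fixed strictly positive reference vector $\psi_0\in\fP_{\rm NT}$ of total spin $(|\vLa|-1)/2$, and hence forces $S=(|\vLa|-1)/2$ for every Hamiltonian in the class. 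Consequently, the actual work in proving Theorem \ref{NTstatThm} is not only the ergodicity verifications you describe (which for $H$ and $H_{\rm ren,\infty}$ are essentially Theorem \ref{PI} and Theorem \ref{NT1} via Proposition \ref{NTEquiv}), but also checking that the cones $\fP_M$ of Section \ref{Sec3} and the cone built from the singlet vectors $|\bs\sigma_x\ra_0\otimes\vphi$ project \emph{onto} $\fP_{\rm NT}$ under the identification $W$ of $\mathfrak{X}$ with $\h_{\rm NT}$. Your sketch never addresses this projection-compatibility requirement, so as written it establishes membership in a class for which the advertised consequence fails.
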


In the following, we will briefly explain the meaning of this theorem.
The Nagaoka--Thouless (NT) stability class is a set of Hamiltonians determined from the Nagaoka--Thouless Hamiltonian.
A notable feature of the NT stability class is that it has the following properties:

\begin{Thm}\label{NTGRP}
If a Hamiltonian $H_0$ belongs to the NT stability class, its ground state is unique apart from the trivial $|\vLa|$-fold degeneracy and has total spin $S=(|\vLa|-1)/2$.
\end{Thm}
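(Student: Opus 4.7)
The plan is to extract everything from the defining structural property of membership in the Nagaoka--Thouless stability class, rather than to redo case-by-case the arguments for Theorems \ref{Main1} and \ref{Main4}. By definition, a member $H_0$ of the class comes equipped with a family of self-dual cones $\{\fP_M\}_M$, one in each $M$-subspace $\h_M = \ker(S_{\rm tot}^{(3)} - M)$, such that the restriction of $e^{-t H_0}$ to $\h_M$ is positivity preserving and in fact positivity improving with respect to $\fP_M$. This is the structural feature abstracted out of the analysis of the NT Hamiltonian using the operator-inequality machinery developed in the references cited in the introduction, and it is the only input that the proof of Theorem \ref{NTGRP} will use about $H_0$.

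Granting this, I would first invoke the infinite-dimensional Perron--Frobenius theorem in the form proved in those same references to conclude that, for every admissible $M$, the ground state $\psi_M$ of $H_{0, M} := H_0 \restriction \h_M$ is unique and strictly positive in $\fP_M$. Denote its energy by $E(M)$. Since $H_0$ commutes with the ladder operators $S_{\rm tot}^{(\pm)}$, the vector $S_{\rm tot}^{(+)} \psi_M$, if nonzero, is an eigenstate of $H_{0, M+1}$ with eigenvalue $E(M)$, whence $E(M+1) \le E(M)$, with equality forcing $S_{\rm tot}^{(+)} \psi_M \propto \psi_{M+1}$ by uniqueness. Iterating, the orbit of $\psi_M$ under $S_{\rm tot}^{(\pm)}$ produces a complete multiplet of common energy $E(M)$, whose spin equals the terminal value of $M$ at which the raising operator annihilates the state.

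It therefore remains to identify this terminal value as $(|\vLa|-1)/2$. I would prove the strict inequality $E((|\vLa|+1)/2) > E((|\vLa|-1)/2)$ directly, which should be part of the defining conditions of the NT stability class, reflecting the fact that the fully polarized subspace $\h_{(|\vLa|+1)/2}$ affords no room for the singlet binding that members of the class are built to exploit, whereas $\h_{(|\vLa|-1)/2}$ does. Combined with the ladder argument above, this immediately gives both the total spin $S = (|\vLa|-1)/2$ and the claimed $|\vLa| = 2S + 1$-fold degeneracy through the multiplet structure, while uniqueness of the ground state modulo this degeneracy follows from the Perron--Frobenius uniqueness in $\h_{(|\vLa|-1)/2}$.

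The main obstacle, as I see it, lies not in the Perron--Frobenius machinery itself but in making two uses of the abstract class definition airtight: first, verifying that positivity improvement holds with respect to a cone $\fP_M$ that is simultaneously compatible with the action of $S_{\rm tot}^{(\pm)}$, so that the ladder argument may be carried out inside the cones and $\psi_{M+1}$ is not accidentally orthogonal to $S_{\rm tot}^{(+)} \fP_M$; and second, encoding the strict inequality $E((|\vLa|+1)/2) > E((|\vLa|-1)/2)$ as a genuine class axiom, so that it can be invoked without reference to the particular form of $H_0$. The nontrivial content of Theorem \ref{NTstatThm} will presumably consist in verifying these two requirements for $H_{\rm el}$, $H$, and $H_{\rm ren, \infty}$ individually; once they are in place, the proof of Theorem \ref{NTGRP} reduces to the representation-theoretic bookkeeping outlined above.
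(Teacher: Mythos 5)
Your scaffolding---Perron--Frobenius--Faris for uniqueness and strict positivity of the sector ground states, followed by ladder-operator bookkeeping for the multiplet and degeneracy count---reproduces the first half of the paper's argument. But the step that actually identifies the total spin is missing, and the substitute you propose cannot do its job. The defining property of the NT stability class is not merely that $e^{-tH_0}$ is positivity improving with respect to \emph{some} cone; it is that the cone is tied to the reference cone $\fP_{\rm NT}$ by the projection relation $P\fP=\fP_{\rm NT}$ (Appendix \ref{AppA}). This is what guarantees the existence of a vector $\eta$ that is simultaneously strictly positive w.r.t.\ $\fP$ and an eigenvector of ${\bs S}_{\rm tot}^2$ with the known eigenvalue $S_0(S_0+1)$, $S_0=(|\vLa|-1)/2$. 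Since the ground state $\psi_g$ is also strictly positive, $\la \eta|\psi_g\ra>0$, and the chain $S(S+1)\la\eta|\psi_g\ra=\la {\bs S}_{\rm tot}^2\eta|\psi_g\ra=\la\eta|{\bs S}_{\rm tot}^2\psi_g\ra=S_0(S_0+1)\la\eta|\psi_g\ra$ forces $S=S_0$. This \emph{positive overlap argument} is the entire point of the stability-class construction (it is also exactly how Theorem \ref{Main1} is finished in Section \ref{Sec3}, using the reference vectors $(S_{\rm tot}^{(-)})^n\eta$); positivity improvement with respect to an arbitrary cone gives uniqueness but carries no spin information whatsoever.

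Your replacement---postulating $E((|\vLa|+1)/2)>E((|\vLa|-1)/2)$ as a class axiom---is both absent from the paper's definition and insufficient even if granted. Combined with your ladder argument it pins down the spin of the ground state only in the sectors $M=\pm(|\vLa|-1)/2$, and yields $E(M)\le E((|\vLa|-1)/2)$ for $|M|\le(|\vLa|-1)/2$, but with no control on the reverse inequality. It does not exclude the scenario in which some central $M$-sector has strictly lower energy with a unique ground state of total spin strictly less than $(|\vLa|-1)/2$: such a state's multiplet terminates before reaching $M=(|\vLa|-1)/2$, so it never collides with your top-sector inequality, yet it would be the global ground state and would falsify the theorem. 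Since the ladder argument itself shows that the global minimum of $E(M)$ is attained in the central sectors, this uncontrolled case is precisely the one that matters, and only the overlap argument (applied there) closes it.
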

From this fact and Theorem \ref{NTstatThm}, Theorems \ref{Senko1}, \ref{Main1} and \ref{Main4} follow immediately. For the precise definition of the NT stability class, see Appendix \ref{AppA}. Readers interested in the proof of Theorem \ref{NTstatThm}, see \cite{Miyao2019,Miyao2022}.\footnote{More precisely, one can prove Theorem \ref{NTstatThm} in almost the same way as the idea of \cite{Miyao2019,Miyao2022}.
}

The NT stability class describes the stability of the magnetic properties of the ground state. For example, consider the following Hamiltonian,  which further takes into account the interaction between $f$-electrons and phonons:
\begin{align}
H_{\rm fp}=H+ k \sum_{x\in \vLa}n_x^f
(b_x+b_x^*)+\nu\sum_{x\in \vLa}b_x^*b_x.
\end{align}
By the method of this paper, it can be shown that $H_{\rm fp}$ also belongs to the NT stability class. Using this and Theorem \ref{NTGRP}, we find that the ground state of $H_{\rm fp}$ is unique and has total spin $S=(|\vLa|-1)/2$. In general, if we can show that a Hamiltonian incorporating complicated interactions between electrons and the environment belongs to the NT stability class, we obtain a similar claim about the ground state. In addition to the examples mentioned above, the KLM, which incorporates the interaction between conduction electrons and quantized electromagnetic fields, also belongs to the NT-stability class. (For more details on this model, see, for example, \cite{Miyao2019,Miyao2022}.)

Next, a word of explanation on the importance of Theorem \ref{Main3} is in order.
Theorem \ref{Main3} and Remark \ref{Neta} show that $H_{\rm el}$ and $H$ are \lq\lq{}connected" with the NT Hamiltonian in the limit of $J\to \infty$.
This theorem provides a clue that $H_{\rm el}$ and $H$ belong to the NT stability class.
From these observations, we can conclude the following: although $H_{\rm ren, \infty}$ is a Hamiltonian describing a very extreme situation, its ground state properties are useful in analyzing the stability of the magnetic structure of the ground state of the KLM.

\subsection{Organization}
The paper is organized as follows: 
Section  \ref{Sec2} presents a brief overview of the theory of operator inequalities, which is necessary to prove the main theorems.
In Section \ref{Sec3}, we prove Theorems \ref{Main1} and \ref{Main2} by applying  the theory of operator inequalities described in Section \ref{Sec2}.
Section  \ref{Sec4} proves Theorem \ref{Main3}, and Section \ref{Sec5} proves Theorems \ref{Main4} and \ref{Main5}. 
Appendix \ref{AppA} is  supplementary to the discussions in Section \ref{Discuss}; the stability theory of magnetic properties in many-electron systems is outlined.

\subsection*{Acknowledgements}

T.M. was supported by JSPS KAKENHI Grant Numbers 18K03315, 20KK0304.

\section{Preliminaries}\label{Sec2}

This section defines the operator inequalities necessary to prove our main theorems and lists their basic properties. For more details on these operator inequalities, see \cite{Miyao2012,Miyao2016, Miyao2017,MIYAO2021168467}.
Note that the operator inequalities introduced here are different from the standard operator inequalities treated in functional analysis textbooks.

Let $\mathfrak{X}$ be a  complex separable Hilbert space.
We denote by $\mathscr{B}(\mathfrak{X})$ the Banach space of all bounded operators on $\mathfrak{X}$.

\begin{Def} \label{DefHilC}\upshape
A {\it Hilbert cone},  $\mathfrak{P}$  in $\mathfrak{X}$,   is  a closed convex  cone  obeying 
\begin{itemize}
\item[(i)] $\i<u|v>\geq 0$ for every  $u, v\in \mathfrak{P}$;
\item[(ii)] for each  $w\in \mathfrak{X}$, there exist  $u,u',v,v'\in \mathfrak{P}$ such that   $w=u-v+i(u'-v')$  and $\i<u|v>=\i<u'|v'>=0.$
\end{itemize}
A vector $ u \in\mathfrak{P}$ is said to be {\it  positive w.r.t.} $\mathfrak{P}$. We write this as $u \geq 0$ w.r.t. $\mathfrak{P}$. A vector $v \in\mathfrak{X}$ is called {\it strictly positive w.r.t.} $\mathfrak{P
}$,  whenever $\i<v|u>>0$ for all $ u \in \mathfrak{P}\setminus\{0\}$. We write this as $v>0$ w.r.t. $\mathfrak{P}$.
\end{Def}

The operator inequalities introduced below form the basis of the analytical methods in this paper.
\begin{Def}\upshape 
Let $A\in\mathscr B(\mathfrak{X})$. 
\begin{itemize}
\item[(i)] $A$ is {\it  positivity preserving} w.r.t.  $\mathfrak{P}$ if $A\mathfrak{P}\subseteq \mathfrak{P}$. We write this as $A\unrhd 0\wrt \mathfrak{P}$.
\item[(ii)] $A$ is {\it positivity improving} w.r.t. $\mathfrak{P}$ if,  for all $u \in \mathfrak{P
} \setminus \{0\},\ A u >0\wrt \mathfrak{P}$ holds.  We write this as $A\rhd 0\wrt \mathfrak{P}$.
\end{itemize}
Remark that the notations of the operator inequalities are  borrowed  from \cite{Miura2003}.
\end{Def}

The following lemma follows immediately from the definition:

\begin{Lemm}
Let $A,B\in\mathscr B(\mathfrak{X})$.
 Suppose that  $A\unrhd 0 $ and $B\unrhd0\wrt \mathfrak{P}$. We have the following {\rm (i)}--{\rm (iv)}:
 \begin{itemize}
 \item[\rm (i)] For each $u, v\in \mathfrak{P}$,  $\la u|Av\ra\ge 0$ holds.
 \item[{\rm (ii)}] If $a\ge 0 $ and $ b\ge 0$, then $aA+bB\unrhd0\wrt \mathfrak{P}$.
 \item[\rm (iii)] $A^*\unrhd 0$ w.r.t. $\mathfrak{P}$.
 \item[{\rm  (iv)}] $AB\unrhd0\wrt \mathfrak{P}$.
 \end{itemize}
\end{Lemm}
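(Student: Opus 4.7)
The plan is to derive each item directly from Definition \ref{DefHilC} of a Hilbert cone together with the definition of $A\unrhd 0\wrt \mathfrak{P}$ as the inclusion $A\mathfrak{P}\subseteq \mathfrak{P}$. Items (i), (ii), and (iv) are essentially immediate; the substantive point is (iii), which hinges on the self-duality of a Hilbert cone that is already implicit in Definition \ref{DefHilC}.

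For (i), $A\unrhd 0$ and $v\in\mathfrak{P}$ give $Av\in \mathfrak{P}$, so $\la u|Av\ra\ge 0$ follows from axiom (i) of Definition \ref{DefHilC}. For (ii), the closed convex cone $\mathfrak{P}$ is stable under non-negative linear combinations; thus $(aA+bB)u = a(Au)+b(Bu)\in \mathfrak{P}$ whenever $u\in\mathfrak{P}$. For (iv), one iterates the two inclusions: $u\in\mathfrak{P}\Rightarrow Bu\in\mathfrak{P}\Rightarrow A(Bu)\in\mathfrak{P}$.

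The heart of the argument is (iii). My strategy is to first verify the self-duality $\mathfrak{P}=\mathfrak{P}^{\vee}$, where $\mathfrak{P}^{\vee}:=\{w\in\mathfrak{X}: \la w|u\ra\ge 0 \text{ for every } u\in\mathfrak{P}\}$. The inclusion $\mathfrak{P}\subseteq \mathfrak{P}^{\vee}$ is axiom (i) of Definition \ref{DefHilC}. For the converse, take $w\in\mathfrak{P}^{\vee}$ and apply axiom (ii) of Definition \ref{DefHilC} to decompose $w=u-v+i(u'-v')$ with $u,v,u',v'\in\mathfrak{P}$ and $\la u|v\ra=\la u'|v'\ra=0$. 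Pairing $w$ successively with $v$, $v'$, and $u'$, each resulting bracket must be real and non-negative; a short expansion shows that the real part of $\la w|v\ra$ equals $-\|v\|^2$, and the imaginary parts of $\la w|v'\ra$ and $\la w|u'\ra$ are proportional to $\|v'\|^2$ and $\|u'\|^2$ respectively. These constraints force $v=v'=u'=0$, leaving $w=u\in\mathfrak{P}$. Once self-duality is established, (iii) is a one-liner using (i): for all $w,u\in\mathfrak{P}$, $\la A^{*}w|u\ra=\la w|Au\ra\ge 0$, so $A^{*}w\in\mathfrak{P}^{\vee}=\mathfrak{P}$.

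The main obstacle is the self-duality step inside (iii), because this is the only place where axiom (ii) of Definition \ref{DefHilC} is genuinely exploited; the bookkeeping with the four components $u,v,u',v'$ must be done carefully so that the sign of each imaginary part is tracked correctly under the chosen inner product convention. Aside from that point, the remaining claims are formal and should fit in a few lines once self-duality is in hand.
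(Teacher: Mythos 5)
Your proof is correct. The paper itself does not prove this lemma but defers to the cited references, and your argument is essentially the standard one found there: items (i), (ii), and (iv) follow formally from $A\mathfrak{P}\subseteq\mathfrak{P}$ and the cone axioms, while (iii) rests on the self-duality $\mathfrak{P}=\{w : \la w|u\ra\ge 0 \text{ for all } u\in\mathfrak{P}\}$, which you correctly extract from axiom (ii) of Definition 2.1 by pairing the decomposition $w=u-v+i(u'-v')$ against $v$, $v'$, $u'$ and reading off the signs of the real and imaginary parts. The sign bookkeeping you flag as the delicate point indeed works out (with the inner product antilinear in the first slot, $\mathrm{Re}\,\la w|v\ra=-\|v\|^2$ and $\mathrm{Im}\,\la w|v'\ra=\|v'\|^2$, forcing $v=v'=u'=0$), so the argument is complete.
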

\begin{proof}
See, e.g., \cite{Miura2003,Miyao2016}. 
\end{proof}

\begin{Lemm}\label{Wcl}
Let $\{A_n\}_{n=1}^{\infty }$ and $A$ be 
bounded operators on $\mathfrak{X}$. If $A_n\unrhd 0$ w.r.t. $\fP$ and $A_n$ weakly converges to $A$ as $n\to \infty$, then $A\unrhd 0$ w.r.t. $\fP$ holds.
\end{Lemm}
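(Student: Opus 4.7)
The plan is to reduce the claim $A \unrhd 0$ w.r.t.\ $\fP$ to a family of scalar inequalities $\langle v | Au \rangle \geq 0$ for $v, u \in \fP$, pass to the weak limit in those scalar inequalities, and then invoke the self-duality of a Hilbert cone. Fix an arbitrary $u \in \fP$; the goal is to prove $Au \in \fP$.

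By hypothesis, $A_n u \in \fP$ for every $n$, so part (i) of Definition \ref{DefHilC} gives $\langle v | A_n u \rangle \geq 0$ for every $v \in \fP$. Since $A_n \to A$ weakly, $\langle v | A_n u \rangle \to \langle v | Au \rangle$ as $n \to \infty$, so $\langle v | Au \rangle \geq 0$ for all $v \in \fP$. To finish, I have to deduce $Au \in \fP$ from this dual characterization. Using property (ii) of Definition \ref{DefHilC}, decompose $Au = u_1 - u_2 + i(u_3 - u_4)$ with $u_j \in \fP$, $\langle u_1 | u_2 \rangle = 0$, and $\langle u_3 | u_4 \rangle = 0$. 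Since inner products of elements of $\fP$ are real and non-negative, testing against $v = u_2$ gives $\langle u_2 | Au \rangle = -\|u_2\|^2 + i(\langle u_2 | u_3 \rangle - \langle u_2 | u_4 \rangle)$; requiring this to be a non-negative real number forces $u_2 = 0$. Testing against $v = u_3$ produces an imaginary part $\|u_3\|^2$ that must vanish, so $u_3 = 0$; testing against $v = u_4$ produces an imaginary part $-\|u_4\|^2$ that must vanish, so $u_4 = 0$. Hence $Au = u_1 \in \fP$.

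I do not expect a serious obstacle. The only subtle point is the passage from the inner-product inequality $\langle v | Au \rangle \geq 0$ (for all $v \in \fP$) to membership $Au \in \fP$, i.e., the self-duality of a Hilbert cone, which is exactly what property (ii) of Definition \ref{DefHilC} encodes. A more economical alternative that avoids the explicit bookkeeping is to note that $\fP$ is closed and convex, hence weakly closed by Mazur's theorem, so the weak limit $Au$ of the sequence $\{A_n u\} \subset \fP$ automatically lies in $\fP$.
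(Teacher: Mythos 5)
Your proof is correct. The paper does not argue this lemma itself but simply cites \cite[Proposition A.1]{Miyao2020}, so your write-up supplies a complete self-contained argument where the paper delegates to a reference. Both of your routes are sound: the passage from $\la v|Au\ra\ge 0$ for all $v\in\fP$ to $Au\in\fP$ is exactly the self-duality of a Hilbert cone, and your bookkeeping with the decomposition $Au=u_1-u_2+i(u_3-u_4)$ carries it out correctly (orthogonality kills $\la u_2|u_1\ra$, the sign of the real part forces $u_2=0$, and the imaginary parts force $u_3=u_4=0$). The alternative you mention---$\fP$ is closed and convex, hence weakly closed, and $A_nu\rightharpoonup Au$---is the most economical version and is likely what the cited proposition does.
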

\begin{proof}
See, e.g., \cite[Proposition A.1]{Miyao2020}.
\end{proof}

Let $\mathfrak{X}_{\mathbb{R}}$ be the real subspace of $\mathfrak{X}$ generated by $\mathfrak{P}$. From Definition \ref{DefHilC}, for all $x\in \mathfrak{X}_{\mathbb{R}}$, there exist $x_+, x_-\in \mathfrak{P}$ such that $x=x_+-x_-$ and $\langle x_+|x_-\rangle=0$. If $A\in \mathscr{B}(\mathfrak{X})$ satisfies $A\mathfrak{X}_{\mathbb{R}} \subseteq \mathfrak{X}_{\mathbb{R}}$, then we say that $A$ {\it preserves the reality w.r.t. $\mathfrak{P}$}.

\begin{Def} \upshape
Let $A, B\in\mathscr B(\mathfrak{X})$ be reality preserving w.r.t. $\mathfrak{P}$.  
If   $A-B\unrhd 0$ holds, then we write this as  
$A\unrhd B \wrt \mathfrak{P}$.
In this paper, we understand that $A$ and $B$ are always assumed to be reality preserving when one writes $A\unrhd B$ w.r.t. $\mathfrak{P}$.
\end{Def}

The following two lemmas are useful for practical applications of the operator inequalities introduced here.
\begin{Lemm}
Let $A,B,C,D\in\mathscr B(\mathfrak X)$. Suppose $A\unrhd B\unrhd0\wrt\mathfrak{P}$ and $C\unrhd D\unrhd0\wrt\mathfrak P$. Then we have $AC\unrhd BD\unrhd0\wrt\mathfrak P$.
\end{Lemm}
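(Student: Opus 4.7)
The plan is to reduce the claim to the four elementary facts (i)--(iv) already established in the immediately preceding lemma, together with the algebraic identity
\ben
AC-BD=(A-B)C+B(C-D).
\een
First I would observe that $BD\unrhd 0$ w.r.t.\ $\mathfrak{P}$ is an immediate consequence of property (iv) applied to $B\unrhd 0$ and $D\unrhd 0$, so the only real content is the inequality $AC\unrhd BD$ w.r.t.\ $\mathfrak{P}$.

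Next I would set up the hypotheses needed to invoke (iv) on each of the two summands on the right-hand side of the identity above. By assumption, $A-B\unrhd 0$ and $C-D\unrhd 0$ w.r.t.\ $\mathfrak{P}$. A small preliminary step is required: I need $C\unrhd 0$ w.r.t.\ $\mathfrak{P}$, which follows from writing $C=(C-D)+D$ and applying property (ii) (with $a=b=1$) of the earlier lemma to the summands $C-D\unrhd 0$ and $D\unrhd 0$. Having this in hand, property (iv) yields $(A-B)C\unrhd 0$ and $B(C-D)\unrhd 0$ w.r.t.\ $\mathfrak{P}$; a second application of (ii) then gives $(A-B)C+B(C-D)\unrhd 0$, i.e., $AC-BD\unrhd 0$ w.r.t.\ $\mathfrak{P}$.

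Finally, to state the conclusion in the prescribed form $AC\unrhd BD$ w.r.t.\ $\mathfrak{P}$, I must check that both $AC$ and $BD$ preserve the reality w.r.t.\ $\mathfrak{P}$, since the definition of $\unrhd$ between two operators requires this. Each of $A,B,C,D$ is reality preserving (implicit in the hypotheses $A\unrhd B$ and $C\unrhd D$), and reality preservation is closed under composition because $\mathfrak{X}_{\BbbR}$ is invariant under any composition of reality-preserving maps. Hence $AC$ and $BD$ are reality preserving and the conclusion $AC\unrhd BD\unrhd 0$ w.r.t.\ $\mathfrak{P}$ is justified.

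There is no real obstacle here; the only step that requires a moment of thought is the auxiliary observation that $C\unrhd 0$ w.r.t.\ $\mathfrak{P}$ is automatic from $C\unrhd D\unrhd 0$, and the bookkeeping that the composed operators inherit reality preservation. Everything else is a direct application of the four properties (i)--(iv) already established.
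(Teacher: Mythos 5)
Your argument is correct: the decomposition $AC-BD=(A-B)C+B(C-D)$ combined with properties (ii) and (iv) of the preceding lemma is exactly the standard proof, and your care about $C\unrhd 0$ and about reality preservation of the products is appropriate. The paper itself omits the proof and defers to the cited references, where the same elementary argument is used, so there is nothing to add.
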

\begin{proof}
For proof, see, e.g., \cite{Miura2003,Miyao2016}. 
\end{proof}

\begin{Lemm}\label{ppiexp1}
Let $A,B$ be self-adjoint operators on $\mathfrak{X}$. Assume that $A$ is bounded from below and  that  $B\in\mathscr B(\mathfrak{X})$. Furthermore, suppose that $e^{-tA}\unrhd0\wrt\mathfrak{P}$ for all $t\geq0$ and $B\unrhd0\wrt\mathfrak{P}$. Then we have $e^{-t(A-B)}\unrhd e^{-tA}\wrt\mathfrak{P}$ for all $t\geq0$.
\end{Lemm}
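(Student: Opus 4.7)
The plan is to use the Trotter product formula to reduce the problem to showing the desired inequality for a finite product of operators already known to be positivity preserving. Since $A$ is self-adjoint and bounded from below and $B$ is bounded self-adjoint, $A-B$ is self-adjoint and bounded from below, and Trotter's formula yields the strong limit
\begin{equation}
e^{-t(A-B)} = \mathrm{s\text{-}lim}_{n\to\infty}\bigl(e^{-tA/n} e^{tB/n}\bigr)^n
\end{equation}
for every $t\geq 0$. Strong convergence implies weak convergence, so Lemma \ref{Wcl} will apply once the approximants are shown to satisfy the desired operator inequality.

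First I would verify that $e^{tB/n}\unrhd I$ w.r.t.\ $\mathfrak{P}$. Since $B\unrhd 0$, the preceding lemma on products gives $B^k \unrhd 0$ for all $k\ge 0$, so each partial sum
\begin{equation}
\sum_{k=0}^{N}\frac{(tB/n)^k}{k!} - I = \sum_{k=1}^{N}\frac{(tB/n)^k}{k!}
\end{equation}
is $\unrhd 0$; passing to the norm limit and using Lemma \ref{Wcl} yields $e^{tB/n} - I \unrhd 0$, i.e.\ $e^{tB/n}\unrhd I$ w.r.t.\ $\mathfrak{P}$.

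Next I would multiply on the left by $e^{-tA/n}\unrhd 0$. By the product-preservation lemma, $C\unrhd 0$ combined with $D\unrhd E\unrhd 0$ gives $CD\unrhd CE$, hence
\begin{equation}
e^{-tA/n} e^{tB/n} \unrhd e^{-tA/n} \unrhd 0 \quad \text{w.r.t.\ } \mathfrak{P}.
\end{equation}
Iterating this inequality (again using that the product of two inequalities of the form $X_i \unrhd Y_i \unrhd 0$ gives $X_1 X_2 \unrhd Y_1 Y_2 \unrhd 0$, by a straightforward induction) yields
\begin{equation}
\bigl(e^{-tA/n} e^{tB/n}\bigr)^n \unrhd \bigl(e^{-tA/n}\bigr)^n = e^{-tA} \quad \text{w.r.t.\ } \mathfrak{P}.
\end{equation}
Finally, letting $n\to\infty$ and invoking Lemma \ref{Wcl} on the difference $(e^{-tA/n}e^{tB/n})^n - e^{-tA}$, which converges strongly (hence weakly) to $e^{-t(A-B)} - e^{-tA}$, I conclude $e^{-t(A-B)} \unrhd e^{-tA}$ w.r.t.\ $\mathfrak{P}$.

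The main technical point is the justification of the Trotter product formula in this setting and the passage to the limit in the operator inequality; this is handled by the strong convergence supplied by the Trotter--Kato theorem (applicable because $A$ is self-adjoint and bounded below while $B$ is bounded self-adjoint) together with Lemma \ref{Wcl}. Reality preservation of $e^{-t(A-B)}$ and $e^{-tA}$ is automatic since both are self-adjoint and map $\mathfrak{P}$ into $\mathfrak{P}$ (the latter is precisely what the inequality asserts in its weaker form $\unrhd 0$), so the statement $e^{-t(A-B)}\unrhd e^{-tA}$ w.r.t.\ $\mathfrak{P}$ is well defined.
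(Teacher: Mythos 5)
Your proposal is correct and follows essentially the same route as the paper: expand $e^{tB/n}$ as a series to get $e^{tB/n}\unrhd \mathbbm{1}$, combine with $e^{-tA/n}\unrhd 0$ via the product lemma, and pass to the limit in the Trotter product formula using Lemma \ref{Wcl}. The only difference is that you spell out the intermediate iteration step explicitly, which the paper leaves implicit.
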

\begin{proof}
Because  $B\unrhd0\wrt\mathfrak{P}$, we have $e^{tB}=\sum_{n=0}^{\infty}\frac{t^n}{n!}B^n\unrhd \mathbbm{1} \wrt\mathfrak{P}$ for all $t\geq0$. Note that we have used Lemma \ref{Wcl}.
By using  the Trotter product formula \cite[Theorem S. 20]{Reed1981},  we obtain
\begin{align}
e^{-t(A-B)}=\lim_{n\to\infty}\left(e^{-\frac{t}{n}A}e^{\frac{t}{n}B}\right)^n\unrhd e^{-tA}\wrt\mathfrak{P}
\end{align}
for all $t\geq0$, where  Lemma \ref{Wcl} is again used in deriving the  inequality.
\end{proof}

\begin{Def}
\upshape
Let $A$ be a self-adjoint operator on $\mathfrak{X}$,   bounded from below.
The semigroup  generated by $A$, $\{e^{-tA}\}_{t\ge 0}$,  is said to be {\it ergodic} w.r.t. $\mathfrak{P}$, if the following (i) and (ii) are satisfied:
\begin{itemize}
\item[(i)] $e^{-tA} \unrhd 0$ w.r.t. $\mathfrak{P}$ for all $t\ge 0$;
\item[(ii)] for each $u, v\in \mathfrak{P} \setminus \{0\}$, there is a $t\ge 0$
such that $\langle u| e^{-tA} v\rangle >0$. Note that $t$ could depend on $u$ and $v$.
\end{itemize}

\end{Def}

The following lemma   immediately follows  from the definitions:
\begin{Lemm}
Let $A$ be a self-adjoint operator on $\mathfrak{X}$, bounded from below. If $e^{-tA}\rhd 0$ w.r.t. $\mathfrak{P}$ for all $t>0$, then the semigroup  $\{e^{-tA}\}_{t\ge 0}$ is ergodic w.r.t. $\mathfrak{P}$.
\end{Lemm}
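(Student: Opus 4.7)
The plan is to verify the two defining conditions of ergodicity directly from the hypothesis $e^{-tA}\rhd 0$ for all $t>0$. Condition (ii) will be almost immediate: for any fixed $t>0$ and any $u,v\in\mathfrak{P}\setminus\{0\}$, the hypothesis says that $e^{-tA}v$ is strictly positive w.r.t. $\mathfrak{P}$, and the very definition of strict positivity, tested against $u$, then gives $\langle u|e^{-tA}v\rangle>0$. So a single positive $t$ (in fact any) works uniformly across all pairs $(u,v)$.

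For condition (i), namely $e^{-tA}\unrhd 0$ for every $t\ge 0$, the case $t=0$ is trivial since $e^{-0\cdot A}=\mathbbm{1}$. For $t>0$ I would first establish the auxiliary fact that any $w\in\mathfrak{X}$ satisfying $w>0$ w.r.t. $\mathfrak{P}$ must itself lie in $\mathfrak{P}$. Using the decomposition supplied by Definition \ref{DefHilC}(ii), write $w=w_+-w_-+i(w'_+-w'_-)$ with all four components in $\mathfrak{P}$ and with $\langle w_+|w_-\rangle=\langle w'_+|w'_-\rangle=0$. Inner products among cone elements are nonnegative reals, so the requirement that $\langle w|x\rangle$ be a positive real for every $x\in\mathfrak{P}\setminus\{0\}$ forces both its imaginary part to vanish and its real part to be positive; evaluating these conditions at $x=w_-$, $x=w'_+$, and $x=w'_-$ (and invoking the stated orthogonalities) forces $w_-=w'_+=w'_-=0$, so $w=w_+\in\mathfrak{P}$. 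Applied to $w=e^{-tA}u$ for each $u\in\mathfrak{P}\setminus\{0\}$, together with $e^{-tA}\cdot 0=0\in\mathfrak{P}$, this yields $e^{-tA}\mathfrak{P}\subseteq\mathfrak{P}$.

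I do not foresee any genuine obstacle; the statement is essentially a chase through the definitions of Hilbert cone, strict positivity, and positivity improving. The only step carrying any real content is the auxiliary implication \emph{strictly positive $\Rightarrow$ in $\mathfrak{P}$}, which is itself a short consequence of the orthogonality clause (ii) in the definition of Hilbert cone.
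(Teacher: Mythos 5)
Your proof is correct: the paper leaves this lemma unproved ("immediately follows from the definitions"), and your argument is exactly the definition-chase that is intended, with the one genuinely non-trivial point — that a strictly positive vector must itself lie in $\mathfrak{P}$ — correctly identified and correctly settled via the orthogonal decomposition in Definition \ref{DefHilC}(ii). Nothing to add.
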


The following theorem illustrates why the operator inequalities introduced in this section are useful in this paper: 
\begin{Thm}[Perron--Frobenius--Faris]\label{pff}
Let $A$ be a self-adjoint operator, bounded from below. Assume that $E(A)=\inf \mathrm{spec}(A)$ is an eigenvalue of $A$.  If $\{e^{-tA}\}_{t\ge 0}$ is ergodic w.r.t. $\mathfrak{P}$, then $\dim \ker(A-E(A))=1$ and $\ker(A-E(A))$ is spanned by a   strictly positive vector w.r.t. $\mathfrak{P}$.
\end{Thm}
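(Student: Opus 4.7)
The plan is to follow the semigroup form of the Perron--Frobenius argument (due to Faris), exploiting the ergodic action of $\{e^{-tA}\}_{t\ge 0}$ on $\mathfrak{P}$ rather than attacking $A$ directly. First I would reduce to $E(A)=0$ by replacing $A$ with $A-E(A)$; then $\ker(A-E(A))=\ker A$, the condition $e^{-tA}\psi = \psi$ for all $t \ge 0$ characterizes $\ker A$ (via the spectral theorem), and $\|e^{-tA}\| \le 1$. Since $e^{-tA} \unrhd 0$ w.r.t.\ $\mathfrak{P}$, it preserves the real subspace $\mathfrak{X}_{\mathbb{R}}$; writing a complex ground state as $\psi = \psi_R + \mathrm{i}\psi_I$ and matching real and imaginary parts of $e^{-tA}\psi = \psi$ gives $e^{-tA}\psi_R = \psi_R$ and $e^{-tA}\psi_I = \psi_I$. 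Hence it suffices to analyze real ground states.

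The central step is a modulus argument. For a real ground state $\psi$, use the Hilbert cone decomposition to write $\psi = \psi_+ - \psi_-$ with $\psi_\pm \in \mathfrak{P}$ and $\langle \psi_+|\psi_-\rangle = 0$, and set $|\psi| := \psi_+ + \psi_-$, so $\||\psi|\|^2 = \|\psi\|^2$. Expanding $\langle \psi|e^{-tA}\psi\rangle$ and $\langle |\psi| \,|\, e^{-tA}|\psi|\rangle$ and using $\langle \psi_\pm|e^{-tA}\psi_\mp\rangle \ge 0$ (a consequence of $e^{-tA}\mathfrak{P}\subseteq \mathfrak{P}$) yields
\[
\langle |\psi|\,|\,e^{-tA}|\psi|\rangle \;\ge\; \langle \psi|e^{-tA}\psi\rangle \;=\; \|\psi\|^2 \;=\; \||\psi|\|^2.
\]
The contractivity $\|e^{-tA}\|\le 1$ forces equality throughout, giving $e^{-tA}|\psi| = |\psi|$ for every $t\ge 0$ (so $|\psi|\in \ker A$) and simultaneously
\[
\langle \psi_+|e^{-tA}\psi_-\rangle + \langle \psi_-|e^{-tA}\psi_+\rangle = 0
\]
for every $t\ge 0$. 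Both summands being nonnegative, each vanishes identically in $t$; if both $\psi_\pm$ were nonzero, ergodicity would produce some $t\ge 0$ with $\langle \psi_+|e^{-tA}\psi_-\rangle > 0$, a contradiction. Hence $\psi \in \mathfrak{P}\cup(-\mathfrak{P})$.

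Strict positivity follows at once: for any $\psi \in \mathfrak{P}\cap\ker A$ and any $u \in \mathfrak{P}\setminus\{0\}$, ergodicity yields $t\ge 0$ with $\langle u|e^{-tA}\psi\rangle > 0$, and $e^{-tA}\psi = \psi$ gives $\langle u|\psi\rangle > 0$, i.e., $\psi > 0$ w.r.t.\ $\mathfrak{P}$. For uniqueness, suppose $\psi_1,\psi_2$ are linearly independent ground states; by the previous steps I may assume both are real, strictly positive elements of $\mathfrak{P}$. Set $\chi := \psi_1 - \lambda\psi_2$ with $\lambda = \langle \psi_2|\psi_1\rangle/\|\psi_2\|^2$, so $\chi$ is a real ground state with $\langle \psi_2|\chi\rangle = 0$. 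By the reduction just performed, $\chi \in \mathfrak{P}$ or $-\chi \in \mathfrak{P}$; the strict positivity of $\psi_2$ then forces $\chi = 0$, contradicting independence.

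The hard part is the modulus step: one must upgrade the inequality $\langle |\psi|\,|\,e^{-tA}|\psi|\rangle \ge \langle \psi|e^{-tA}\psi\rangle$ to an equality and then interpret that equality both as $|\psi|\in\ker A$ and as the vanishing of the cross terms. Both the orthogonality $\langle\psi_+|\psi_-\rangle = 0$ built into the definition of a Hilbert cone (which ensures $\||\psi|\|=\|\psi\|$) and the contractivity $\|e^{-tA}\|\le 1$ (which requires the preliminary shift making $E(A)=0$) are essential; once they are in place, the ergodicity hypothesis removes the remaining degeneracy and the rest of the argument is bookkeeping.
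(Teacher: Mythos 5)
Your proof is correct, and it is essentially the classical argument of Faris that the paper cites (\cite{Faris1972}) without reproducing: shift to $E(A)=0$, reduce to real ground states, run the modulus/cross-term equality argument against the contraction bound, and use ergodicity first to kill one of $\psi_\pm$ and then to get strict positivity and uniqueness. All the steps check out, including the implicit facts that the shift preserves ergodicity and that $\mathfrak{X}_{\mathbb{R}}\cap i\mathfrak{X}_{\mathbb{R}}=\{0\}$ so the real/imaginary reduction and the final independence argument go through.
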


\begin{proof}
See \cite{Faris1972}.
\end{proof}

\section{Proof of Theorems \ref{Main1} and \ref{Main2}}\label{Sec3}

\subsection{Deformation of the Hamiltonian}
As a first step, one deforms the Hamiltonian $H$ into a form that is easy to analyze by a unitary transformation.

Let $N_{\down}^c $ be the number operator for the conduction electrons with down-spin: $N_{\down}^c=\sum_{x\in \vLa} n_{x, \down}^c$.
Performing the unitary transformation induced by $e^{{\rm i} \pi N_{\down}^c}$, $H_{\rm KLM}$ is transformed as follows:
\begin{align}
&e^{{\rm i} \pi N_{\down}^c} H_{\rm KLM} e^{-{\rm i} \pi N_{\down}^c}\no
=&\sum_{x, y\in \vLa} \sum_{\sigma=\up, \down}(-t_{x, y}) c_{x, \sigma}^*c_{y, \sigma}-J \sum_{x\in \vLa}  (s_x^{(+)}S_x^{(-)}+s_x^{(-)}S_x^{(+)})+J \sum_{x\in \vLa}s_x^{(3)} S_x^{(3)}-2h S^{(3)}_{\rm tot}.
\end{align}

Given $x\in \vLa$, define 
\begin{equation}
	p_x
	:= \frac{\rm i}{\sqrt{2}}(b_x^* - b_x), \qquad
	q_x
	:= \frac{1}{\sqrt{2}}(b_x^* + b_x).
\end{equation}
Both $p_x$ and $q_x$ are essentially self-adjoint. Therefore, we will also write the closures of these operators with the same symbols.
Next, define the anti-selfadjoint operator $L_c$ as 
\begin{equation}
	L_c
	:= -{\rm i}\frac{\sqrt{2}}{\omega}\sum_{x, y \in \vLa} g_{x, y}n_x^cp_y.
\end{equation}
The unitary transformation induced by the operator $e^{L_c}$ is called the {\it  Lang--Firsov transformation}.
The following formulas are helpful for specific calculations:
\begin{align}
		e^{L_c}c_{x, \sigma}e^{-L_c}
		= \exp\bigg({\rm i}\frac{\sqrt{2}}{\omega}\sum_{y \in \vLa} g_{x, y}p_y\bigg)c_{x, \sigma}, \quad
		e^{L_c}b_xe^{-L_c}
		= b_x - \frac{1}{\omega}\sum_{y \in \vLa} g_{y, x}n_y^c.
	\end{align}

With the above preparations, we transform $H$ as follows:
\begin{Lemm}
Define the unitary operator $F$ as $F=e^{L_c}e^{{\rm i}\pi N_{\down}^c}$. If we set $\tH=FHF^{-1}$, then the following holds:
\begin{align}
\tH
=& \sum_{x, y \in \vLa} \sum_{\sigma = \up, \down} (-t_{x, y})\exp({\rm i}\varPhi_{x, y})c_{x, \sigma}^*c_{y, \sigma} 
-J \sum_{x\in \vLa}  (s_x^{(+)} S_x^{(-)}+s_x^{(-)}S_x^{(+)})\no
			  &+J \sum_{x\in \vLa}s_x^{(3)} S_x^{(3)}-2h S^{(3)}_{\rm tot}+ \omega N_{\rm ph} - \sum_{x, y \in \vLa} G_{x, y}n_x^cn_y^c,  \label{CaltH}
\end{align}
where 
	\begin{equation}
		\varPhi_{x, y}
		= \frac{\sqrt{2}}{\omega}\sum_{z \in \vLa} (g_{y, z} - g_{x, z})p_z, \qquad
		G_{x, y}
		= \frac{1}{\omega}\sum_{z \in \vLa} g_{x, z}g_{y, z}.
	\end{equation}
\end{Lemm}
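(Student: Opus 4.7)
The plan is to apply the two factors of $F$ in succession. The conjugation by $e^{{\rm i}\pi N_{\down}^c}$ has already been executed on $H_{\rm KLM}$ in the display immediately preceding the lemma, and it extends to the full $H$ at no cost because $N_{\down}^c$ commutes with both $N_{\rm ph}$ and each $n_x^c$, so the phonon terms pass through unchanged. All of the substantive work is therefore concentrated in the Lang--Firsov step $e^{L_c}$, which I would carry out term by term using the two identities displayed just before the lemma.

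For the hopping term, set $A_x := \exp\bigl({\rm i}\tfrac{\sqrt{2}}{\omega}\sum_{z} g_{x,z} p_z\bigr)$; then $e^{L_c} c_{x,\sigma}^* c_{y,\sigma} e^{-L_c} = c_{x,\sigma}^* A_x^* A_y c_{y,\sigma}$. Because the $p_z$ mutually commute, $A_x^* A_y = \exp({\rm i}\varPhi_{x,y})$, and since phonon and electron operators commute, this phase factor may be written on either side of the electron bilinear; this reproduces the first term of $\tH$. The spin-exchange pieces $s_x^{(\pm)} S_x^{(\mp)}$, $s_x^{(3)} S_x^{(3)}$ and the Zeeman term $-2h S^{(3)}_{\rm tot}$ all commute with $L_c$, since $L_c$ is built from $n_y^c$ and $p_y$, and $[n_y^c, s_x^{(\alpha)}] = [n_y^c, S_x^{(\alpha)}] = 0$ for $\alpha \in \{+,-,3\}$ (the only nontrivial case being $y=x$, handled by a direct calculation using $[n_{x,\up}^c, s_x^{(+)}] = s_x^{(+)}$ and $[n_{x,\down}^c, s_x^{(+)}] = -s_x^{(+)}$).

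The real accounting happens in $\sum_{x,y} g_{x,y} n_x^c(b_y + b_y^*) + \omega N_{\rm ph}$. Using $e^{L_c}(b_y + b_y^*) e^{-L_c} = b_y + b_y^* - \tfrac{2}{\omega}\sum_z g_{z,y} n_z^c$, the electron-phonon coupling becomes itself plus $-\tfrac{2}{\omega}\sum_{x,y,z} g_{x,y} g_{z,y} n_x^c n_z^c$. Expanding $\omega\, e^{L_c} b_x^* b_x e^{-L_c}$ via $e^{L_c} b_x e^{-L_c} = b_x - \tfrac{1}{\omega}\sum_z g_{z,x} n_z^c$ and summing over $x$ produces $\omega N_{\rm ph}$, a linear-in-$b$ piece $-\sum_{x,z} g_{z,x} n_z^c (b_x + b_x^*)$, and a quadratic piece $+\tfrac{1}{\omega}\sum_{x,y,z} g_{x,y} g_{z,y} n_x^c n_z^c$. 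After renaming summation indices, the linear-in-$b$ piece cancels the surviving electron-phonon coupling exactly, while the two quadratic contributions add with coefficient $-\tfrac{2}{\omega}+\tfrac{1}{\omega}=-\tfrac{1}{\omega}$ on $\sum_{x,y,z} g_{x,y} g_{z,y} n_x^c n_z^c$, producing the residual term $-\sum_{x,y} G_{x,y} n_x^c n_y^c$ by the definition of $G_{x,y}$.

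I anticipate no conceptual obstacle. Each step is a standard BCH-type identity verifiable on the algebraic span of vectors of finite phonon and electron number, a core for all the unbounded operators at hand. The only point requiring care is the bookkeeping of the index renamings that effect the cancellation of the linear $b_y, b_y^*$ contributions between the two sources, and the arithmetic producing the correct sign and magnitude of the Holstein-type density-density residual $-\sum_{x,y} G_{x,y} n_x^c n_y^c$.
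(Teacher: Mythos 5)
Your proposal is correct and follows exactly the route the paper intends: the paper states the lemma without a printed proof, relying on the preceding display for the $e^{{\rm i}\pi N_{\down}^c}$ conjugation and on the two Lang--Firsov identities for the rest, and your term-by-term computation (phase factor $\exp({\rm i}\varPhi_{x,y})$ on the hopping term, invariance of the spin and Zeeman terms, cancellation of the linear phonon coupling, and the $-\tfrac{2}{\omega}+\tfrac{1}{\omega}$ bookkeeping yielding $-\sum_{x,y}G_{x,y}n_x^c n_y^c$) is precisely that verification. No gaps.
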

\begin{Rem}\rm 
Let us denote the restriction of $\tH$ to the $M$-subspace by $\tH_M$, that is, $\tH_M=\tH\restriction \h_M$.
Since $e^{{\rm i} \pi N^c_{\down}}$ commutes with $S_{\rm tot}^{(3)}$, 
$e^{{\rm i} \pi N^c_{\down}}$ remains $\h_M$ invariant.
However, since 
\be
FS_{\rm tot}^{(1)} F^{-1}=\sum_{x\in \vLa}(-s_x^{(1)}+S_x^{(1)}),\quad FS_{\rm tot}^{(2)} F^{-1}=\sum_{x\in \vLa}(-s_x^{(2)}+S_x^{(2)}), 
\ee
 we must be careful not to confuse the total spin of the ground state of $H_M$ with that of $\tH_M$.
\end{Rem}

\subsection{Strategy of the proof of Theorem \ref{Main1}}
In this section, we explain our strategy for proving Theorem \ref{Main1}.
First, note the following identification of the bosonic Fock space:
\be
\h_{\rm ph}=L^2(\mathcal{Q}),\qquad \mathcal{Q}=\BbbR^{|\vLa|}. \label{HilIdn}
\ee
Henceforth, this identification is frequently applied without any declaration.
Define the Hilbert cone in 
$\h_{\rm ph}$ by 
\be
L^2(\mathcal{Q})_+=\{\phi \in L^2(\mathcal{Q}) : \phi(\bq) \ge 0\ \mbox{a.e. $\bq$}\}.
\ee

For every $x \in \vLa, \sigma \in \{-1, 1\}$ and $ \bm{\sigma} = (\sigma_x)_{x\in \vLa} \in \mathcal{S}_{\vLa}$, we set 
\be
\ket{x, \sigma; \bm{\sigma}}
:= c_{x, \sigma}^*\prod_{y\in \vLa}^{\sharp}f_{y, \sigma_y}^* \ket{\varnothing}.
\ee
For each  $M\in {\rm spec}(S^{(3)}_{\rm tot})$, define
\be
\bm{\mathcal{S}}_{\vLa, M}=\bigg\{(\sigma, \bm{\sigma})\in \{-1, 1\}\times \mathcal{S}_{\vLa} : \sigma+\sum_{x\in \vLa} \sigma_x=2M\bigg\}.
\ee
With these preparations, if we put 
\be
\mathscr{C}_{\vLa, M}=\big\{ \ket{x, \sigma; \bm{\sigma}} : x \in \vLa, (\sigma, \bm{\sigma}) \in \bm{\mathcal{S}}_{\vLa, M}\big\}, 
\ee
 then $\mathscr{C}_{\vLa, M}$ forms a complete orthonormal system  (CONS) in the $M$-subspace $\h_{{\rm el}, M}$.
Now we define the Hilbert cone in 
$\h_{{\rm el}, M}$ by 
\be
\fP_{{\rm el}, M}=\mathrm{coni}(\mathscr{C}_{\vLa, M}), 
\ee
where, for  a given set $S$,  $\mathrm{coni}(S)$ stands for the conical hull of $S$.
Lastly, define the Hilbert cone in $\h_M$ as 
\begin{align}
\fP_M&=\overline{\mathrm{coni}}\big(\{\phi \otimes \psi : \phi\in \fP_{{\rm el}, M}, \psi\in L^2(\mathcal{Q})_+\}\big),
\end{align}
where $\overline{\mathrm{coni}}(S)$ denotes the closure of $\mathrm{coni}(S)$.

In Section \ref{PfPI}, we will prove the following theorem:
\begin{Thm}\label{PI}
For every $M\in  {\rm spec}(S^{(3)}_{\rm tot})$ and $\beta>0$, it holds that 
$e^{-\beta \tH_M} \rhd 0$ w.r.t. $\fP_M$. Especially, the semigroup $\{e^{-\beta \tH_M}\}_{\beta \ge 0}$ is ergodic w.r.t. $\fP_M$.
\end{Thm}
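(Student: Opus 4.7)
The plan is to establish ergodicity of the semigroup $\{e^{-\beta\tH_M}\}_{\beta\ge 0}$ w.r.t.\ $\fP_M$ using Lemma \ref{ppiexp1}, and then to invoke Theorem \ref{pff}. I would split $\tH_M=A_M-B_M$ with
\begin{align*}
A_M &= J\sum_{x\in\vLa}s_x^{(3)}S_x^{(3)} - 2hS_{\rm tot}^{(3)} - \sum_{x,y\in\vLa}G_{x,y}n_x^cn_y^c + \omega N_{\rm ph},\\
B_M &= \sum_{x,y\in\vLa}\sum_{\sigma=\up,\down}t_{x,y}\exp({\rm i}\varPhi_{x,y})c_{x,\sigma}^*c_{y,\sigma} + J\sum_{x\in\vLa}\bigl(s_x^{(+)}S_x^{(-)}+s_x^{(-)}S_x^{(+)}\bigr),
\end{align*}
so that $A_M$ collects the electron-diagonal terms together with the free phonon energy, while $B_M$ gathers the two off-diagonal contributions, both restricted to $\h_M$.

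To check the hypotheses of Lemma \ref{ppiexp1}, note that the two summands of $A_M$ commute (one is purely electronic, the other purely bosonic), so $e^{-\beta A_M}=e^{-\beta D}\otimes e^{-\beta\omega N_{\rm ph}}$; the electron factor $e^{-\beta D}$ acts as multiplication by a positive scalar on each $|\xi\rangle\in\mathscr{C}_{\vLa,M}$, while $e^{-\beta\omega N_{\rm ph}}$ preserves $L^2(\mathcal{Q})_+$ by the strict positivity of the Mehler kernel, giving $e^{-\beta A_M}\unrhd 0$ w.r.t.\ $\fP_M$. For $B_M\unrhd 0$, direct fermion-sign computations --- exploiting that there is exactly one conduction electron --- show that $c_{x,\sigma}^*c_{y,\sigma}$ and $s_x^{(+)}S_x^{(-)}$ each send every basis vector of $\mathscr{C}_{\vLa,M}$ either to $0$ or to another basis vector, with the phases from the ordered product $\prod^{\sharp}$ cancelling between annihilation and recreation. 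In the Schr\"odinger realisation $p_z=-{\rm i}\partial_{q_z}$, the dressing $\exp({\rm i}\varPhi_{x,y})$ is a translation on $L^2(\mathcal{Q})$ and therefore maps $L^2(\mathcal{Q})_+$ into itself. Combined with $t_{x,y}\ge 0$ and $J>0$, this yields $B_M\unrhd 0$, and Lemma \ref{ppiexp1} then delivers $e^{-\beta\tH_M}\unrhd e^{-\beta A_M}\unrhd 0$ w.r.t.\ $\fP_M$.

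To upgrade the preservation to the improvement $e^{-\beta\tH_M}\rhd 0$, I would expand $e^{-\beta\tH_M}$ via Dyson--Duhamel in powers of $B_M$; every integrand is $\unrhd 0$, so by density of elementary vectors in $\fP_M$ it suffices to produce, for each pair $|\xi_j\rangle\otimes\phi_j$ with $\xi_j\in\mathscr{C}_{\vLa,M}$ and $0\ne\phi_j\in L^2(\mathcal{Q})_+$, some integrand with a strictly positive matrix element between them. On the electron side, assumption \hyperlink{A2}{\bf (A. 2)} allows the single conduction electron to be routed between any two sites via the dressed hoppings, and the on-site exchanges $s_x^{(+)}S_x^{(-)}$ (applied while the electron resides at $x$) implement every rearrangement of the $f$-spin configuration compatible with the conserved $M$, so an explicit word in the primitives of $B_M$ carries $\xi_1$ to $\xi_2$. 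On the phonon side, the corresponding factor is an iterated integral of $\phi_2$ against an alternating composition of unitary translations (from the $\exp({\rm i}\varPhi_{x,y})$'s) and Mehler kernels (from the $e^{-s\omega N_{\rm ph}}$'s) applied to $\phi_1$; strict positivity of the Mehler kernel, together with the fact that translations preserve nonzero nonnegative functions, makes this strictly positive. The main obstacle I anticipate is the clean combinatorial verification that the primitive operations in $B_M$ do act transitively on $\mathscr{C}_{\vLa,M}$ within every $M$-sector --- in particular at extremal $M$, where many spin flips are prohibited --- after which Theorem \ref{pff} immediately yields ergodicity and the claimed strict positivity.
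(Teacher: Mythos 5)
Your proposal follows essentially the same route as the paper: the identical splitting $\tH_M=(V+\omega N_{\rm ph})-D$ fed into Lemma \ref{ppiexp1} for positivity preservation, followed by a Duhamel expansion and a transitivity argument in which dressed hoppings move the conduction electron and the on-site exchange terms flip the $f$-spins. The combinatorial step you flag as the remaining obstacle is handled in the paper exactly as you anticipate, by iterating over the set of sites where the two $f$-spin configurations disagree (noting that each exchange also flips the conduction spin, with the parity forced by the common value of $M$), so there is no genuine gap.
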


Once we accept this theorem, we can prove Theorem \ref{Main1}:

\subsubsection*{\it Proof of Theorem \ref{Main1}  given Theorem \ref{PI}}
By Theorems \ref{pff} and  \ref{PI}, the ground state of $\tH_M$ in each $M$-subspace is unique and can be chosen to be strictly positive w.r.t. $\fP_M$.
Thus, the ground state of $H_M$ in each $M$-subspace is unique.

With $M_{\rm max}=(|\vLa|+1)/2$, let us first examine the properties of the ground state $\psi_{M_{\rm max}}$ of $H_{M_{\rm max}}$.
Then, the ground state of $\tH_{M_{\rm max}}$ is given by $\tilde{\psi}_{M_{\rm max}} : =F\psi_{M_{\rm max}}$. Furthermore, $\tilde{\psi}_{M_{\rm max}}>0$ w.r.t. $\fP_{M_{\rm max}}$.

Define the vector in $\h_{M_{\rm max}}$   by 
\be
\chi=|\vLa|^{-1/2}\sum_{x\in \vLa}\ket{x, \up; \bm{\sigma}_{\up}}\otimes \phi,
\ee
where $\bm{\sigma}_{\up}=(\sigma_x)_x\in \mathcal{S}_{\vLa}$ is given by $\sigma_x=1$ for all $x\in \vLa$, and 
$\phi\in L^2(\mathcal{Q})_+$ is  a strictly positive  normalized vector.
We readily confirm  the following: (i) $\chi$ is strictly positive w.r.t. $\fP_{M_{\rm max}}$ and   normalized;  (ii) 
$F\chi=\chi$;
(iii)
$\chi$ has total spin $S=(|\vLa|+1)/2$. 

Imagine now that $ \psi_{M_{\rm max}}$ has total spin $S_0$. Because both $\chi$ and $\tilde{\psi}_{M_{\rm max}}$ are strictly positive w.r.t. $\fP_{M_{\rm max}}$, we find that the overlap between $\chi$ and $\psi_{M_{\rm max}}$ is strictly  positive, that is, 
\be
\la \chi| \psi_{M_{\rm max}}\ra=\la \chi| \tilde{\psi}_{M_{\rm max}}\ra>0,
\ee
which implies that 
\be
S(S+1) \la \chi| \psi_{M_{\rm max}}\ra=\la {\bs S}_{\rm tot}^2\chi| \psi_{M_{\rm max}}\ra
=\la \chi|  {\bs S}_{\rm tot}^2\psi_{M_{\rm max}}\ra=S_0(S_0+1)  \la \chi| \psi_{M_{\rm max}}\ra.
\ee
Hence, we conclude that $S_0=S=(|\vLa|+1)/2$.
Putting $M_{\min}=-(|\vLa|+1)/2$, we see in the same way that $\psi_{M_{\rm min}}$ also has total spin $(|\vLa|+1)/2$.

Next, let $M_{\dagger}=(|\vLa|-1)/2$. Define the vector $\eta$ in $\h_{M_{\dagger}}$ with 
\be
\eta=|\vLa|^{-1/2} \sum_{x\in \vLa}\ket{{\bs \sigma}_{ \up, x}}_0\otimes \phi,
\ee
where $\ket{{\bs \sigma}_{ \up, x}}_0$
 is the vector defined as $\bm{\sigma}=\bm{\sigma}_{\up}$ in \eqref{SingV} and $\phi$ is a strictly positive vector in $L^2(\mathcal{Q})$ as before. Note that $\eta$ has total spin $(|\vLa|-1)/2$. 
As $F\eta>0$ w.r.t. $\fP_{M_{\dagger}}$, it follows that $\la F\eta| \tilde{\psi}_{M_{\dagger}}\ra>0$.
Thus, by the \lq\lq{}overlap argument" used above, $\psi_{M_{\dagger}}$ has total spin $(|\vLa|-1)/2$.

Set 
$S^{(-)}_{\rm tot}=\sum_{x\in \vLa}(s_x^{(-)}+S_x^{(-)})$. If  we define $\eta_n=(S^{(-)}_{\rm tot})^n \eta\ (n=0, 1, \dots, |\vLa|-1)$, then  each 
$\eta_n$  belongs to $\h_{M_{\dagger}-n}$ and has total spin $(|\vLa|-1)/2$. Furthermore, since $F\eta_n$ is strictly positive w.r.t. $\fP_{M_{\dagger}-n}$, we  obtain
\be
\la \eta_n|\psi_{M_{\dagger}-n} \ra=\la F\eta_n|\tilde{\psi}_{M_{\dagger}-n}
\ra>0.
\ee
Thus, again by applying  the \lq\lq{}overlap argument", we conclude that $\psi_{M_{\dagger}-n}$ has total spin $(|\vLa|-1)/2$.
Let $E_M$ be the ground state energy of $H_M$.
From the conservation law for total spin, one finds the inequalities $E_{M_{\dagger}}<E_{M_{\rm max}}$ and $E_{M^{\dagger}}<E_{M_{\rm min}}$, and furthermore, when $|M| \le M_{\dagger}$, $E_M=E_{M_{\dagger}}$ follows.
We have thus completed the proof of Theorem \ref{Main1}.
\qed

\subsection{The heat semigroup generated by $\tH_M$ preserves the positivity}
In the following subsections, we prove Theorem \ref{PI}.
In order to achieve this goal, this subsection is devoted to proving the following proposition.

\begin{Prop}\label{PP}
For each $M\in  {\rm spec}(S^{(3)}_{\rm tot})$ and $\beta\ge 0$, 
 one obtains $e^{-\beta \tH_M} \unrhd 0$ w.r.t. $\fP_M$.
\end{Prop}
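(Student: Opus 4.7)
The plan is to decompose $\tH_M = A_M - B_M$ into a ``diagonal'' part $A_M$ whose heat semigroup already preserves $\fP_M$ and an ``off-diagonal'' part $B_M$ that is itself positivity preserving w.r.t. $\fP_M$, and then to invoke Lemma \ref{ppiexp1} to conclude $e^{-\beta \tH_M} \unrhd e^{-\beta A_M} \unrhd 0$. Concretely, I would assign to $B_M$ the hopping and spin-flip terms
\begin{align*}
	B_M = \sum_{x, y \in \vLa} \sum_{\sigma = \up, \down} t_{x, y}\, e^{{\rm i}\varPhi_{x, y}} c_{x, \sigma}^* c_{y, \sigma} + J \sum_{x\in \vLa}  (s_x^{(+)} S_x^{(-)}+s_x^{(-)}S_x^{(+)}),
\end{align*}
restricted to $\h_M$, and to $A_M$ the remaining diagonal terms $J \sum_x s_x^{(3)} S_x^{(3)} - 2h S^{(3)}_{\rm tot} - \sum_{x,y} G_{x,y} n_x^c n_y^c + \omega N_{\rm ph}$. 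The sign conventions $t_{x,y}\ge 0$ and $J>0$ ensure that all coefficients in $B_M$ are non-negative.

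To establish $B_M \unrhd 0$ w.r.t. $\fP_M$, I would treat each summand separately, exploiting the tensor-product structure of $\fP_M$. A direct calculation using the anticommutation relations and the fixed ordering in $\prod^{\sharp}$ should give $s_x^{(+)}S_x^{(-)}\ket{x, \down; \bm{\sigma}} = \ket{x, \up; \bm{\sigma}'}$ when $\sigma_x = \up$ (with $\bm{\sigma}'$ obtained by flipping $\up$ to $\down$ at $x$) and $0$ otherwise, and similarly for $s_x^{(-)}S_x^{(+)}$ and $c_{x,\sigma}^* c_{y,\sigma}$, all without any sign. Hence these operators send $\mathscr{C}_{\vLa, M}$ into $\mathscr{C}_{\vLa, M}\cup \{0\}$ and preserve $\fP_{{\rm el}, M}$. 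For the phase factor $e^{{\rm i}\varPhi_{x,y}}$, identifying $\h_{\rm ph} = L^2(\mathcal{Q})$ via \eqref{HilIdn} makes $p_z$ act as $-{\rm i} \partial_{q_z}$, so $e^{{\rm i}\varPhi_{x,y}}$ is a real translation in the $q$-variables, which obviously preserves $L^2(\mathcal{Q})_+$. As the fermionic and bosonic factors act on disjoint tensor components and commute, their product preserves $\fP_M$ by the standard tensor-product argument, and the same then holds for $B_M$.

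To establish $e^{-\beta A_M} \unrhd 0$, note that $A_M = A_{\rm el, M} \otimes \one + \one \otimes \omega N_{\rm ph}$ on $\h_M = \h_{\rm el, M}\otimes \h_{\rm ph}$, where $A_{\rm el, M}$ is diagonal in the basis $\mathscr{C}_{\vLa, M}$. Therefore $e^{-\beta A_M} = e^{-\beta A_{\rm el, M}} \otimes e^{-\beta \omega N_{\rm ph}}$; the first factor is diagonal with strictly positive eigenvalues and preserves $\fP_{{\rm el}, M}$, while the second factor preserves $L^2(\mathcal{Q})_+$ because $N_{\rm ph}$ is, up to an additive constant, a sum of harmonic oscillator Hamiltonians whose heat kernel is the strictly positive Mehler kernel. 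Combining these via the tensor-product argument yields $e^{-\beta A_M} \unrhd 0$ w.r.t. $\fP_M$, which through Lemma \ref{ppiexp1} completes the proof.

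The main technical obstacle I anticipate is the careful sign bookkeeping for the action of the spin-flip operators $s_x^{(\pm)}S_x^{(\mp)}$ and of the bare hopping $c_{x,\sigma}^* c_{y,\sigma}$ on the basis vectors: all fermionic signs must cancel, which in the end happens because one creation operator is replaced by another at the same position in the ordered product $\prod^{\sharp}$, and because the single conduction electron sits to the left of the entire $f$-string in the basis convention. A secondary delicate point is the justification that $e^{{\rm i}\varPhi_{x,y}}$ acts as a bona fide translation on $L^2(\mathcal{Q})$; this follows from Stone's theorem once one notes that $\varPhi_{x,y}$ is essentially self-adjoint as a finite real linear combination of the $p_z$.
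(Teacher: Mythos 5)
Your proposal is correct and follows essentially the same route as the paper: the decomposition $\tH = -D + V + \omega N_{\rm ph}$ with $D$ the hopping-plus-spin-flip part shown to be positivity preserving summand by summand (basis vectors mapped to basis vectors without signs, and the phase $e^{{\rm i}\varPhi_{x,y}}$ acting as a translation on $L^2(\mathcal{Q})_+$), the diagonal part handled via diagonality in $\mathscr{C}_{\vLa,M}$ together with the positivity of the oscillator heat kernel, and the conclusion via Lemma \ref{ppiexp1}. No gaps.
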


To show the proposition, we make some preparations.
The following lemma is often useful:

\begin{Lemm}\label{PBasic}
Let $A$ be a bounded linear operator on $\h_M$. Then, the following {\rm (i)} and {\rm (ii)} are equivalent to each other:
\begin{itemize}
\item[\rm (i)] $A\unrhd 0$ w.r.t. $\fP_M$.
\item[\rm (ii)] For each $x\in \vLa, (\sigma, \bm{\sigma})\in \bm{\mathcal{S}}_{\vLa, M}$ and $f\in L^2(\mathcal{Q})_+\setminus \{0\}$, it holds
$A\ket{ x, \sigma; \bm{\sigma}; f} \ge 0$ w.r.t. $\fP_M$, where
$\ket{ x, \sigma; \bm{\sigma}; f} =\ket{ x, \sigma; \bm{\sigma}}\otimes f$.
\end{itemize}
\end{Lemm}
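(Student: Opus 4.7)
The plan is to prove the two directions separately, with (i) $\Rightarrow$ (ii) being essentially a definitional check and (ii) $\Rightarrow$ (i) following from a density/continuity argument.

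For the forward direction, I would simply note that $\ket{x, \sigma; \bm{\sigma}; f} = \ket{x, \sigma; \bm{\sigma}} \otimes f$ is a pure tensor whose first factor lies in $\mathscr{C}_{\vLa, M} \subseteq \fP_{{\rm el}, M}$ and whose second factor lies in $L^2(\mathcal{Q})_+$. By the very definition of $\fP_M$ as $\overline{\mathrm{coni}}(\{\phi \otimes \psi : \phi \in \fP_{{\rm el}, M}, \psi \in L^2(\mathcal{Q})_+\})$, such a vector belongs to $\fP_M$. Hence if $A \unrhd 0$ w.r.t. $\fP_M$, then $A\ket{x, \sigma; \bm{\sigma}; f} \in \fP_M$, which is what (ii) asserts.

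For the reverse direction, assume (ii) and take an arbitrary $u \in \fP_M$. The strategy is to approximate $u$ by vectors built from the family in (ii) and then exploit the boundedness of $A$ together with the closedness of $\fP_M$. By definition of $\fP_M$, there exist finite non-negative combinations
\begin{equation*}
v_n = \sum_{k} \lambda_k^{(n)}\, \phi_k^{(n)} \otimes \psi_k^{(n)}, \qquad \lambda_k^{(n)} \ge 0,\ \phi_k^{(n)} \in \fP_{{\rm el}, M},\ \psi_k^{(n)} \in L^2(\mathcal{Q})_+,
\end{equation*}
converging to $u$ in norm. Because $\fP_{{\rm el}, M} = \mathrm{coni}(\mathscr{C}_{\vLa, M})$, each $\phi_k^{(n)}$ is itself a finite non-negative combination of basis vectors $\ket{x, \sigma; \bm{\sigma}}$ with $(\sigma, \bm{\sigma}) \in \bm{\mathcal{S}}_{\vLa, M}$. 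Substituting and regrouping, we may therefore write
\begin{equation*}
v_n = \sum_{j} \mu_j^{(n)}\, \ket{x_j^{(n)}, \sigma_j^{(n)}; \bm{\sigma}_j^{(n)}; f_j^{(n)}}, \qquad \mu_j^{(n)} \ge 0,\ f_j^{(n)} \in L^2(\mathcal{Q})_+.
\end{equation*}

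By hypothesis (ii), each vector $A\ket{x_j^{(n)}, \sigma_j^{(n)}; \bm{\sigma}_j^{(n)}; f_j^{(n)}}$ lies in $\fP_M$ (the case $f_j^{(n)} = 0$ being trivial since $0 \in \fP_M$), and because $\fP_M$ is a convex cone, any finite non-negative combination of these images also lies in $\fP_M$. Hence $A v_n \in \fP_M$ for every $n$. Since $A$ is bounded, $A v_n \to A u$ in norm; since $\fP_M$ is closed, the limit satisfies $Au \in \fP_M$. As $u \in \fP_M$ was arbitrary, this gives $A \unrhd 0$ w.r.t. $\fP_M$, completing the proof.

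There is no real obstacle here; the only point requiring care is to keep track of the two-layer structure of $\fP_M$ (the conical hull of $\mathscr{C}_{\vLa, M}$ on the electron side, together with the closed cone $L^2(\mathcal{Q})_+$ on the phonon side) so as to reduce testing against all of $\fP_M$ to testing against the distinguished family $\{\ket{x, \sigma; \bm{\sigma}; f}\}$.
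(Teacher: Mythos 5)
Your proof is correct and follows essentially the same route as the paper: reduce the positivity-preserving property to the distinguished generators $\ket{x,\sigma;\bm{\sigma};f}$ and use that $\fP_M$ is a convex cone. The only difference is that the paper asserts outright that every $\vphi\in\fP_M$ is already a \emph{finite} sum of such generators (which holds because the electronic factor $\h_{{\rm el},M}$ is finite-dimensional), whereas you reach the same conclusion more cautiously via density in the closed conical hull together with boundedness of $A$ and closedness of $\fP_M$; both are valid.
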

\begin{proof}
(i) $\Rightarrow$ (ii): This part is trivial.

(ii) $\Rightarrow$ (i): Any $\vphi\in \fP_M$ can be expressed as
\be
\vphi=\sum_{n=1}^N \ket{x_n, \sigma_n; \bm{\sigma}_n; f_n},
\ee
where $(\sigma_n, \bm{\sigma}_n)\in \bm{\mathcal{S}}_{\vLa, M}$ and $f_n\in L^2(\mathcal{Q})_+\setminus\{0\}$.
Using   (ii), we have $A \ket{x_n, \sigma_n; \bm{\sigma}_n; f_n}\ge 0$ w.r.t. $\fP_M$,
which implies 
$
A\vphi=\sum_{n=1}^N A \ket{x_n, \sigma_n; \bm{\sigma}_n; f_n}\ge 0\ \mbox{w.r.t. $\fP_M$}.
$
\end{proof}

\begin{Lemm}\label{IroiroPP}
For all $M\in {\rm spec}(S_{\rm tot}^{(3)})$, we have the following:
\begin{itemize}
\item[\rm (i)] $c_{x, \sigma}^* c_{y, \sigma} \unrhd 0$ w.r.t. $\fP_M$ for each $x, y\in \vLa$ and $\sigma\in \{-1, 1\}$.
\item[\rm (ii)] $s_x^{(+)} S_x^{(-)} \unrhd 0$ and $s_x^{(-)} S_x^{(+)}\unrhd 0$ w.r.t. $\fP_M$ for each $x\in \vLa$.
\item[\rm (iii)] $e^{{\rm i} \varPhi_{x, y}} \unrhd 0$ w.r.t. $\fP_M$ for each $x, y\in \vLa$.
\item[\rm (iv)] $e^{-\beta N_{\rm ph}} \rhd 0$ w.r.t. $\fP_M$ for all $\beta>0$.
\end{itemize}
\end{Lemm}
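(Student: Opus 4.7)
The plan is to verify each claim (i)--(iv) by invoking Lemma \ref{PBasic}: it suffices to compute the action of each operator on a typical product basis vector $\ket{x', \sigma'; \bm{\sigma}; f} = \ket{x', \sigma'; \bm{\sigma}} \otimes f$ and confirm that the image lies in $\fP_M$ (respectively is strictly positive, in case (iv)).

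Parts (i) and (ii) are fermionic bookkeeping. For (i), the canonical anticommutation relations, combined with the vanishing of $c_{y,\sigma}$ on the Fock vacuum, give
\begin{equation*}
c_{x,\sigma}^* c_{y,\sigma} \ket{x', \sigma'; \bm{\sigma}; f} = \delta_{y, x'}\delta_{\sigma, \sigma'} \ket{x, \sigma; \bm{\sigma}; f},
\end{equation*}
with no sign issues because the lone $c^*$ sits at the outermost position of the ordered product; the right-hand side is manifestly a generator of $\fP_M$. For (ii), $s_x^{(+)} S_x^{(-)} = c_{x,\up}^* c_{x,\down} f_{x,\down}^* f_{x,\up}$ is nonzero only when $x' = x$, $\sigma' = \down$ and $\sigma_x = \up$; in that case the result is $\ket{x, \up; \bm{\sigma}'; f}$, where $\bm{\sigma}'$ differs from $\bm{\sigma}$ by the single flip $\sigma_x = \up \mapsto \sigma_x' = \down$. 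The relevant sign arises from commuting $f_{x,\up}$ inward to meet $f_{x,\up}^*$ and then reinserting $f_{x,\down}^*$ into the same slot: both operations produce the same permutation sign $(-1)^k$, with $k$ the position of site $x$ in the fixed ordering of $\vLa$, so the net sign is $+1$. The argument for $s_x^{(-)} S_x^{(+)}$ is identical.

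Parts (iii) and (iv) exploit the Schr\"odinger realization $\h_{\rm ph} = L^2(\mathcal{Q})$ from \eqref{HilIdn}. Under this identification, $p_z$ is (up to a sign and a factor) the momentum operator in the $q_z$-direction, so $e^{{\rm i}\varPhi_{x,y}}$ acts as the translation $(e^{{\rm i}\varPhi_{x,y}} f)(\bq) = f(\bq + \bm{a}_{x,y})$, where the shift vector $\bm{a}_{x,y}$ is determined by the couplings $g_{x,z}, g_{y,z}$ and $\omega$. Such translations map $L^2(\mathcal{Q})_+$ into itself, and since $e^{{\rm i}\varPhi_{x,y}}$ acts as the identity on the electron factor, it sends each generator of $\fP_M$ back into $\fP_M$, giving (iii). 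For (iv), Mehler's formula furnishes an explicit integral kernel of $e^{-\beta N_{\rm ph}}$ on $L^2(\mathcal{Q})$ as a product of strictly positive Gaussians in the variables $q_z$, $z \in \vLa$; this yields the strict positivity of $e^{-\beta N_{\rm ph}}$ on the phonon cone $L^2(\mathcal{Q})_+$, which, together with the identity action on the electron factor, delivers the strict-positivity assertion in $\fP_M$.

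The only technical point that requires real care is the sign computation underlying (ii), since that is the only place where the fermionic ordering can plausibly obstruct positivity; everything else reduces to elementary anticommutator arithmetic and to standard facts about translations and harmonic-oscillator heat kernels in the Schr\"odinger representation.
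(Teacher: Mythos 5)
Your proposal is correct and follows essentially the same route as the paper: reduction to basis vectors via Lemma \ref{PBasic} and explicit computation of the fermionic actions for (i)--(ii), the translation-operator realization of $e^{{\rm i}\varPhi_{x,y}}$ for (iii), and the strictly positive harmonic-oscillator heat kernel for (iv). The sign bookkeeping you flag in (ii) is exactly the point the paper's displayed formulas encode, and your resolution of it is correct.
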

\begin{proof}
For each $x\in\vLa$, let the map $T_x : \mathcal{S}_{\vLa}\to \mathcal{S}_{\vLa}$ be the spin-flip at site $x$: for each $\bm{\sigma} \in\mathcal{S}_{\vLa}$, $(T_x\bm{\sigma})_y=\sigma_y$ if $y\neq x$, $(T_x\bm{\sigma})_y=-\sigma_x$ if $x=y$.
First, we note the following:
\begin{align}
c_{x, \sigma}^* c_{y, \sigma}|z,  \mu;  \bm{\sigma}\ra&=
\begin{cases}
0 & \mbox{if $(y, \sigma)\neq (z, \mu)$}\\
\ket{x, \sigma; \bm{\sigma}}  & \mbox{if $(y, \sigma)= (z, \mu)$,}
\end{cases}\label{Cact}\\
c_{x, \up}^* c_{x, \down}|z,  \mu;  \bm{\sigma}\ra&=
\begin{cases}
0 & \mbox{if $(x, \down)\neq (z, \mu)$}\\
\ket{x, \up; \bm{\sigma}}  & \mbox{if $(x, \down)= (z, \mu)$,}
\end{cases}\\
c_{x, \down}^* c_{x, \up}|z,  \mu;  \bm{\sigma}\ra&=
\begin{cases}
0 & \mbox{if $(x, \up)\neq (z, \mu)$}\\
\ket{x, \down; \bm{\sigma}}  & \mbox{if $(x, \up)= (z, \mu)$}
\end{cases}
\end{align}
and 
\begin{align}
f_{x, \up}^*f_{x, \down} \ket{ y, \sigma; \bm{\sigma}}&=
\begin{cases}
\ket{y, \sigma; T_x \bm{\sigma}} & \mbox{if $\sigma_x=-1$}\\
0  & \mbox{if $\sigma_x=1$},
\end{cases} \\
f_{x, \down}^*f_{x, \up} \ket{ y, \sigma; \bm{\sigma}}&=
\begin{cases}
0 & \mbox{if $\sigma_x=-1$}\\
\ket{y, \sigma; T_x \bm{\sigma}}  & \mbox{if $\sigma_x=1$}.
\end{cases}
\end{align}
(i) Using \eqref{Cact}, 
 we see $c_{x, \sigma}^* c_{y, \sigma}|z,  \mu; \bm{\sigma}; f\ra\ge 0$ w.r.t. $\fP_M$
  for each $x\in \vLa, (\sigma, \bm{\sigma})\in \bm{\mathcal{S}}_{\vLa, M}$ and $f\in L^2(\mathcal{Q})_+\setminus \{0\}$.  Hence,  we conclude (i) by applying  Lemma \ref{PBasic}.
Similarly, we can prove (ii).

(iii)
Under the identification \eqref{HilIdn}, we can identify $q_x$ with the multiplication operator on $L^2(\mathcal{Q})$, and $p_x$ with the partial differential operator $-{\rm i} \partial /\partial q_x$.
Thus, $e^{{\rm i} a p_x}\ (a\in \BbbR, x\in \vLa)$ can be regarded as a translation operator:
$(e^{{\rm i} a p_x} f)(\bm{q})=f(\bm{q}+a\delta_x)\, (f\in L^2(\mathcal{Q}))$. 
If $f(\bm{q}) \ge 0$ a.e., then $(e^{{\rm i} a p_x} f)(\bm{q})\ge 0$ a.e. holds, which implies that $e^{{\rm i} a p_x} \unrhd 0$ w.r.t. $L^2(\mathcal{Q})_+$.
Therefore, we conclude  that $ e^{{\rm i} a p_x}|z,  \mu;  \bm{\sigma}; f\ra=|z,  \mu;  \bm{\sigma};  e^{{\rm i} a p_x}f\ra \ge 0$ w.r.t. $\fP_M$.

(iv) Again using the identification \eqref{HilIdn}, we can express $N_{\rm ph}$ as $N_{\rm ph}=\frac{1}{2}\sum_{x\in \vLa}(-\Delta_{q_x}+q_x^2-1)$, a Hamiltonian of harmonic oscillators. 
As is well-known, the kernel of the heat semigroup generated by  the  Hamiltonian of harmonic oscillator
 is strictly positive, see, e.g., \cite{Simon2005}.
Using this fact,  we immediately obtain (iv).
\end{proof}

Decompose the operator $\tH$ as
\be
\tH=-D+V+\omega N_{\rm ph}, \label{tHDec}
\ee
where
\begin{align}
D&=\sum_{x, y\in \vLa}\sum_{\sigma=-1, 1}t_{x, y}\exp({\rm i} \varPhi_{x, y}) c_{x, \sigma}^*c_{y, \sigma}+J \sum_{x\in \vLa}  (s_x^{(+)} S_x^{(-)}+s_x^{(-)}S_x^{(+)}),
\\
V&=J \sum_{x\in \vLa}s_x^{(3)} S_x^{(3)}-2h S^{(3)}_{\rm tot}- \sum_{x, y \in \vLa} G_{x, y}n_x^cn_y^c.
\end{align}
Note the minus sign in front of $D$ in \eqref{tHDec}.

\begin{Lemm}\label{IroiroPP2}
For every  $M\in \mathrm{spec(S_{\rm tot}^{(3)})}$, 
we have the following {\rm (i)} and {\rm (ii)}:
\begin{itemize}
\item[\rm (i)] $D\unrhd 0$ w.r.t. $\fP_M$.
\item[\rm (ii)] $e^{-\beta V} \unrhd 0$ w.r.t. $\fP_M$ for all $\beta \ge 0$.
\end{itemize}
\end{Lemm}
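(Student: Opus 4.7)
The plan is to read off both statements from Lemma \ref{IroiroPP} together with the tensor-product structure of $\h_M = \h_{{\rm el}, M} \otimes \h_{\rm ph}$ and the specific form of each summand in $\tH$.

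For (i), I would first observe that in the decomposition $\tH = -D + V + \omega N_{\rm ph}$, each summand of $D$ is either a ``kinetic'' term of the form $t_{x,y} e^{{\rm i}\varPhi_{x,y}} c_{x,\sigma}^* c_{y,\sigma}$ with $t_{x,y}\ge 0$ (by \hyperlink{A1}{\bf (A.\,1)}), or an ``exchange'' term of the form $J\, s_x^{(+)}S_x^{(-)}$ or $J\, s_x^{(-)}S_x^{(+)}$ with $J>0$. For each kinetic term, note that $c_{x,\sigma}^* c_{y,\sigma}$ is a purely fermionic operator and $e^{{\rm i}\varPhi_{x,y}}$ is a purely bosonic operator, so they commute as operators on $\h_M$ (viewed via the natural ampliations). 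By Lemma \ref{IroiroPP}\,(i) and (iii), each of them is $\unrhd 0$ w.r.t.\ $\fP_M$, and the product of two commuting positivity-preserving operators is again positivity-preserving (this is the product rule stated in the preliminary section). For the exchange terms, Lemma \ref{IroiroPP}\,(ii) gives the result directly. Since both $t_{x,y}\ge 0$ and $J>0$, $D$ is a non-negative linear combination of operators that are $\unrhd 0$ w.r.t.\ $\fP_M$, and the basic closure properties of $\unrhd 0$ under such combinations yield $D\unrhd 0$.

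For (ii), I would argue that $V$ acts diagonally on the CONS $\mathscr{C}_{\vLa, M}$ (tensored with the identity on $\h_{\rm ph}$): each of $s_x^{(3)}S_x^{(3)}$, $S_{\rm tot}^{(3)}$ and $n_x^c n_y^c$ is readily seen to be simultaneously diagonal on the basis vectors $\ket{x,\sigma;\bm{\sigma}}$, since these operators only count spins or electron occupations without flipping or hopping anything. Consequently, for every $(x,\sigma,\bm{\sigma})$ there is a real number $v(x,\sigma,\bm{\sigma})$ such that
\begin{equation}
e^{-\beta V}\ket{x,\sigma;\bm{\sigma};f} = e^{-\beta v(x,\sigma,\bm{\sigma})} \ket{x,\sigma;\bm{\sigma};f}
\end{equation}
for every $f\in L^2(\mathcal{Q})_+$. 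Since the scalar $e^{-\beta v(x,\sigma,\bm{\sigma})}$ is strictly positive, this vector remains in $\fP_M$. Invoking Lemma \ref{PBasic} then gives $e^{-\beta V}\unrhd 0$ w.r.t.\ $\fP_M$.

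I do not anticipate any serious obstacle: the structural fact that $V$ is diagonal in the chosen basis and that $D$ splits into commuting electronic–bosonic tensor factors is the only thing one has to verify carefully. The mildest subtlety is to make sure the ampliations $c_{x,\sigma}^*c_{y,\sigma}\otimes\mathbbm{1}$ and $\mathbbm{1}\otimes e^{{\rm i}\varPhi_{x,y}}$ are genuinely positivity-preserving on the full cone $\fP_M$ (not just on the factor cones $\fP_{{\rm el}, M}$ and $L^2(\mathcal{Q})_+$); this is immediate from Lemma \ref{PBasic} combined with the calculations already appearing in the proof of Lemma \ref{IroiroPP}, so no new idea is needed.
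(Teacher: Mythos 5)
Your proposal is correct and follows essentially the same route as the paper: part (i) is obtained by combining Lemma \ref{IroiroPP} (i)--(iii) with the closure of $\unrhd 0$ under products and non-negative combinations, and part (ii) by noting that $V$ is diagonal on the basis $\ket{x,\sigma;\bm{\sigma}}$ and invoking Lemma \ref{PBasic}. The only cosmetic difference is that you appeal to commutativity of the electronic and bosonic factors, which is not actually needed since the product of any two positivity-preserving operators is positivity-preserving.
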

\begin{proof}
(i)
By applying (i) and (iii) of Lemma \ref{IroiroPP}, we obtain 
$\exp({\rm i}\varPhi_{x, y})
c_{x, \sigma}^* c_{y, \sigma} \unrhd 0
$ w.r.t. $\fP_M$.
Using (ii)  of Lemma \ref{IroiroPP}, we see that $s_x^{(+)} S_x^{(-)}+s_x^{(-)}S_x^{(+)} \unrhd 0$  w.r.t. $\fP_M$. Putting these together, we can conclude (i).

(ii) As can be easily seen, $|x, \sigma; \bm{\sigma}\ra$ is an eigenvector of $V$.
So, denoting the corresponding eigenvalue as $V(x, \sigma; \bm{\sigma})$,
we have
\be
e^{-\beta V} \ket{x, \sigma; \bm{\sigma}; f}=e^{-\beta V(x, \sigma; \bm{\sigma})}\ket{x, \sigma; \bm{\sigma}; f} \ge 0\ \mbox{w.r.t. $\fP_M$}
\ee
for all 
$f\in L^2(\mathcal{Q})_+\setminus \{0\}$.
Hence, by applying Lemma \ref{PBasic}, we conclude (ii).

\end{proof}

\subsubsection*{\it Proof of Proposition \ref{PP}}
Because $V$ commutes with $N_{\rm ph}$, we have,
 by using Lemmas \ref{IroiroPP} and \ref{IroiroPP2}, \be
 e^{-\beta (V+\omega N_{\rm ph})}=e^{-\beta V} e^{-\beta \omega N_{\rm ph}} \unrhd 0 \ \mbox{w.r.t. $\fP_M$ for all $\beta\ge 0$.}
 \ee
Hence, by applying Lemma \ref{ppiexp1} with $A=V+\omega N_{\rm ph}$ and $B=D$, we obtain the desired assertion. \qed

\subsection{The heat semigroup generated by $\tH_M$ improves  the positivity}\label{PfPI}
Let $G=(\vLa, E)$ be the graph generated by the hopping matrix $(t_{x, y})$;  see \hyperlink{A2}{\bf (A. 2)} for details.
For each $\{x, y\}\in E$ and $x\in \vLa$, set 
\be
D^{(0)}_{x, y} =\sum_{\sigma=-1,  1}t_{x, y}\exp({\rm i} \varPhi_{x, y})c_{x, \sigma}^*c_{y, \sigma}, \quad D_x^{(1)}=J (s_x^{(+)} S_x^{(-)}+s_x^{(-)}S_x^{(+)}).
\ee
For any  given path $\bm{p}=(\{x_i, x_{i+1}\})_{i=1}^{n-1}\subset E$, we define the operator
$D_{\bm{p}}^{(0)}$ by 
\be
D_{\bm{p}}^{(0)}=D^{(0)}_{x_1, x_2} D^{(0)}_{x_2, x_3} \cdots D_{x_{n-1}, x_n}^{(0)}.
\ee

The purpose of this subsection is to prove the following proposition:
\begin{Prop}\label{Reduction}
A sufficient condition for $e^{-\beta \tH_M} \rhd 0$ w.r.t. $\fP_M$ to  be valid for all $\beta>0$  is that the following condition holds:
\begin{description}
\item[\bf \hypertarget{R}{(R)}]
For every   $x, y\in \vLa$, $(\sigma, \bm{\sigma}), (\tau, \bm{\tau})\in \bm{\mathcal{S}}_{\vLa, M}$, $f\in L^2(\mathcal{Q})_+\setminus \{0\}$ and 
$g\in L^2(\mathcal{Q})_+\setminus \{0\}$, strictly positive, there exist paths $\bm{p}_1, \dots, \bm{p}_n$ and sequence $\{x_i\}_{i=1}^{n-1}\subset \vLa$  such that 
\be
\bra{x, \sigma; \bm{\sigma}; f} (D^{(0)}_{\bm{p}_1} D_{x_1}^{(1)})( D^{(0)}_{\bm{p}_2} D_{x_1}^{(1)}) \cdots (D^{(0)}_{\bm{p}_{n-1}} D_{x_{n-1}}^{(1)}) D^{(0)}_{\bm{p}_{n}} \ket{y, \tau; \bm{\tau}; g}>0. \label{Dprod}
\ee
\end{description}
\end{Prop}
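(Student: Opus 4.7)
The strategy is to expand $e^{-\beta \tH_M}$ via a Dyson--Duhamel series, every term of which is positivity-preserving on $\fP_M$, and then to exhibit one distinguished term whose matrix element between $u=\ket{x,\sigma;\bm{\sigma};f}$ and $v=\ket{y,\tau;\bm{\tau};g}$ is strictly positive, invoking condition \textbf{\hyperlink{R}{(R)}}. Write $\tH_M=A-D$ with $A:=V+\omega N_{\rm ph}$; since $D$ is bounded, Phillips' perturbation formula gives the norm-convergent expansion
\begin{equation*}
e^{-\beta\tH_M}=\sum_{N=0}^{\infty}\int_{\Delta_N(\beta)}e^{-(\beta-t_N)A}\,D\,e^{-(t_N-t_{N-1})A}\cdots D\,e^{-t_1A}\,dt,\qquad \Delta_N(\beta)=\{0\le t_1\le\cdots\le t_N\le\beta\}.
\end{equation*}
Substituting $D=\sum_{a,b,\sigma}t_{a,b}e^{{\rm i}\varPhi_{a,b}}c_{a,\sigma}^*c_{b,\sigma}+\sum_z D_z^{(1)}$ and distributing yields a sum indexed by finite sequences $\vec\epsilon$ of elementary factors of type $D^{(0)}_{a,b,\sigma}$ or $D^{(1)}_z$. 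Each such factor and each heat semigroup $e^{-sA}=e^{-sV}e^{-s\omega N_{\rm ph}}$ is $\unrhd 0$ with respect to $\fP_M$ by Lemma \ref{IroiroPP} and the proof of Proposition \ref{PP}; hence every term in the expanded sum contributes a nonnegative real to the matrix element $\la u|e^{-\beta\tH_M}|v\ra$.

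By Lemma \ref{PBasic} (applied bilinearly in $u$ and $v$), it suffices to make one such term strictly positive for every pair of basis labels and every nonzero $f,g\in L^2(\mathcal{Q})_+$. Condition \textbf{\hyperlink{R}{(R)}} furnishes paths $\bm{p}_1,\ldots,\bm{p}_n$ and sites $x_1,\ldots,x_{n-1}$ such that the product $D^{(0)}_{\bm{p}_1}D^{(1)}_{x_1}\cdots D^{(0)}_{\bm{p}_n}$ has strictly positive $(u,v)$-matrix element. I take the distinguished sequence $\vec\epsilon^{*}$ realizing this product by concatenating the elementary hops along each $\bm{p}_j$ with the spin flips $D^{(1)}_{x_j}$, and denote its length by $N^*$. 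For the corresponding term in the Dyson expansion I trace strict positivity from right to left through the integrand at an arbitrary interior point $(t_1,\ldots,t_{N^*})\in\Delta_{N^*}(\beta)$. Applying $e^{-t_1 A}$ to $v$ yields a positive scalar times $\ket{y,\tau;\bm{\tau};e^{-t_1\omega N_{\rm ph}}g}$, whose bosonic component is strictly positive for $t_1>0$ by Lemma \ref{IroiroPP}(iv). Each subsequent elementary $D$-factor either leaves the bosonic component untouched ($D^{(1)}_z$) or translates it by $\exp({\rm i}\varPhi_{a,b})$---which preserves strict positivity since $\varPhi_{a,b}$ is a real linear combination of the momenta $p_z$---while electronically mapping the current basis vector to a scalar multiple of the next basis vector in the chain; the interposed $e^{-sA}$ factors likewise preserve strict positivity. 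By condition \textbf{\hyperlink{R}{(R)}}, the electronic chain does not collapse to zero at any intermediate stage---for if it did, the full product appearing in \textbf{\hyperlink{R}{(R)}} would vanish---so the final state is electronically proportional to $\ket{x,\sigma;\bm{\sigma}}$ with a strictly positive coefficient and has strictly positive bosonic component. Its pairing with $u$ is then strictly positive, and integration over $\Delta_{N^*}(\beta)$ preserves this.

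The chief obstacle I anticipate is the combinatorial bookkeeping in the previous paragraph: cleanly identifying one term of the Dyson expansion that matches the product prescribed by \textbf{\hyperlink{R}{(R)}} and rigorously justifying non-collapse of the electronic chain at every intermediate step, which rests on the fact that each elementary $D$-factor sends basis vectors to scalar multiples of basis vectors or to zero. Once this bookkeeping is settled, strict positivity of the chosen term together with nonnegativity of all remaining Dyson terms yields $e^{-\beta\tH_M}\rhd 0$ w.r.t.\ $\fP_M$, which is the desired reduction.
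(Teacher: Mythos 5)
Your proposal is correct and follows essentially the same route as the paper: a Duhamel/Dyson expansion of $e^{-\beta\tH_M}$ into positivity-preserving terms, followed by singling out the term matching the product in \textbf{(R)} and tracking strict positivity of the electronic basis chain and of the bosonic component (upgraded to strict positivity by $e^{-t\omega N_{\rm ph}}$) through that term. The only cosmetic difference is that the paper organizes this as a chain of reductions (R.1)--(R.3), evaluating the integrand at $\bm{s}=\bm{0}$ and using $D\unrhd D^{(\vepsilon)}\unrhd D^{(0)}_{x,y},\ D^{(1)}_x$, whereas you distribute $D$ into elementary factors at an interior time point; these are equivalent bookkeeping choices.
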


The proof consists of four steps.
\subsubsection*{Step 1}
By applying Duhamel's expansion, we have
\be
e^{-\beta \tH}=\sum_{n=0}^{\infty} \mathscr{D_n(\beta)}, \label{Duha}
\ee
where the right hand side of \eqref{Duha} converges in the uniform topology  and  $\mathscr{D}_n(\beta)$ are defined by 
\be
\mathscr{D}_n(\beta)=\int_{\Delta_n(\beta)} D(s_1)\cdots D(s_n) e^{-\beta L} ds_1\cdots ds_n,
\ee
where $\Delta_n(\beta)=\{(s_1, \dots, s_n)\in \BbbR^n : 0\le s_1\le \cdots \le s_n \le \beta\}$,  $L=V+\omega N_{\rm ph}$ and $D(s)=e^{-sL} De^{sL}$.
By using  Lemma \ref{IroiroPP2}, we readily confirm 
\be
D(s_1)\cdots D(s_n) e^{-\beta L} \unrhd 0\quad \mbox{w.r.t. $\fP_M$},
\ee
provided that $(s_1, \dots, s_n)\in \Delta_n(\beta)$.
Combining this with  Lemma \ref{Wcl}, we obtain $\mathscr{D}_n(\beta) \unrhd 0$ w.r.t. $\fP_M$. Therefore, for every $n\in \BbbZ_+$, we have
\be
e^{-\beta H} \unrhd \mathscr{D}_n(\beta),
\ee
which implies the following lemma:

\begin{Lemm}\label{R1Lemma}
A sufficient condition for $e^{-\beta \tH_M} \rhd 0$ w.r.t. $\fP_M$ to  be valid for all $\beta>0$  is that the following condition holds:
\begin{description}
\item[\bf \hypertarget{R1}{(R. 1)}]
For every  $\vphi, \psi\in \fP_M\setminus \{0\}$ and $\beta>0$, there exists an $n\in \BbbZ_+$ such that 
$\la \vphi|\mathscr{D}_n(\beta)\psi\ra>0$. 
\end{description}
\end{Lemm}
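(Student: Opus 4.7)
The plan is to use the inequality $e^{-\beta \tH} \unrhd \mathscr{D}_n(\beta)$ w.r.t.\ $\fP_M$ already established above, valid for every $n \in \BbbZ_+$, in combination with condition \hyperlink{R1}{\bf (R. 1)}.

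First I would note that $D$ and $L = V + \omega N_{\rm ph}$ both commute with $S_{\rm tot}^{(3)}$, so the same is true of every $\mathscr{D}_n(\beta)$; hence the Duhamel series descends to each $M$-subspace, and the positivity-preserving inequality restricts to $e^{-\beta \tH_M} \unrhd \mathscr{D}_n(\beta)\restriction \h_M$ w.r.t.\ $\fP_M$. Now fix arbitrary $\vphi, \psi \in \fP_M \setminus \{0\}$ and $\beta > 0$. By \hyperlink{R1}{\bf (R. 1)}, there exists $n_0 \in \BbbZ_+$ with $\la \vphi | \mathscr{D}_{n_0}(\beta) \psi\ra > 0$. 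Applying part (i) of the first basic lemma on positivity-preserving operators to $e^{-\beta \tH_M} - \mathscr{D}_{n_0}(\beta)\restriction \h_M \unrhd 0$ gives
\ben
\la \vphi | e^{-\beta \tH_M} \psi\ra \ge \la \vphi | \mathscr{D}_{n_0}(\beta) \psi\ra > 0.
\een
Since $\vphi, \psi \in \fP_M \setminus \{0\}$ were arbitrary, this is precisely the statement that $e^{-\beta \tH_M}\psi$ is strictly positive w.r.t.\ $\fP_M$ for every $\psi \in \fP_M \setminus \{0\}$, i.e., $e^{-\beta \tH_M} \rhd 0$ w.r.t.\ $\fP_M$ for every $\beta > 0$.

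There is essentially no obstacle at this step: the argument is a direct bookkeeping step that converts the termwise non-negativity of the Duhamel expansion, together with the single strictly positive term furnished by \hyperlink{R1}{\bf (R. 1)}, into strict positivity of the full heat semigroup. The real substantive work still to be done lies in verifying \hyperlink{R1}{\bf (R. 1)} itself — connecting any two basis vectors in an $M$-subspace by chaining operators $D_{\bm{p}}^{(0)}$ and $D_x^{(1)}$ as in Proposition \ref{Reduction} — and this is where the connectivity hypothesis \hyperlink{A2}{\bf (A. 2)} on the graph $G$ will be indispensable.
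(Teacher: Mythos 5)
Your argument is correct and is essentially the paper's own: the paper establishes $e^{-\beta \tH}\unrhd \mathscr{D}_n(\beta)$ w.r.t.\ $\fP_M$ for every $n$ via termwise positivity of the Duhamel expansion and then asserts the lemma, leaving implicit exactly the bookkeeping step you spell out (that the single strictly positive term furnished by \hyperlink{R1}{\bf (R.\ 1)} forces $\la \vphi|e^{-\beta\tH_M}\psi\ra>0$ for all $\vphi,\psi\in\fP_M\setminus\{0\}$, which is the definition of positivity improving). No gap; your version is just slightly more explicit.
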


\subsubsection*{Step 2}
For simplicity of presentation, set 
\be
\mathscr{D}_n(\beta; \bm{s})=D(s_1)\cdots D(s_n) e^{-\beta L}\quad (\bm{s}=(s_1, \dots, s_n)\in \Delta_n(\beta)).
\ee

\begin{Lemm}\label{R2Lemma}
A sufficient condition for {\bf \hyperlink{R1}{(R. 1)}} to  be valid   is that the following condition holds:
\begin{description}
\item[\bf \hypertarget{R2}{(R. 2)}]
For every  $\vphi, \psi\in \fP_M\setminus \{0\}$ and $\beta>0$, there exists an $n\in \BbbZ_+$ such that 
$\la \vphi|\mathscr{D}_n(\beta; \bm{0}) \psi\ra>0$, where $\bm{0}=(0, \dots, 0)\in \Delta_n(\beta)$. 
\end{description}
\end{Lemm}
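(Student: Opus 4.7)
The plan is to upgrade the single-point strict positivity at $\bm{s}=\bm{0}$ asserted in \hyperlink{R2}{\bf (R. 2)} to strict positivity on a full-dimensional neighborhood of $\bm{0}$ inside the simplex $\Delta_n(\beta)$, and then integrate. Fix $\vphi,\psi\in\fP_M\setminus\{0\}$ and $\beta>0$, and consider the real-valued function
\ben
F_n(\bm{s}):=\la\vphi|\mathscr{D}_n(\beta;\bm{s})\psi\ra, \qquad \bm{s}\in\Delta_n(\beta).
\een
Step 1 already shows that $\mathscr{D}_n(\beta;\bm{s})\unrhd 0$ w.r.t.\ $\fP_M$ for every $\bm{s}\in\Delta_n(\beta)$, so $F_n\ge 0$ on the whole simplex.

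The key ingredient is continuity of $F_n$. Using that $V$ commutes with $N_{\rm ph}$, one can rewrite
\ben
\mathscr{D}_n(\beta;\bm{s})=e^{-s_1 L}\,D\,e^{-(s_2-s_1)L}\,D\cdots D\,e^{-(\beta-s_n)L},
\een
where on $\Delta_n(\beta)$ all the exponents $-s_1$, $-(s_{j+1}-s_j)$, $-(\beta-s_n)$ are non-positive, so every factor is a bounded operator. The semigroup $\{e^{-tL}\}_{t\ge 0}$ is strongly continuous and uniformly norm-bounded on $[0,\beta]$ (because $L$ is self-adjoint and bounded from below), and $D$ is bounded (each summand is a product of the unitary $e^{{\rm i}\varPhi_{x,y}}$ with bounded fermionic operators, together with the bounded exchange term). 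A standard product-continuity argument then yields joint strong continuity of $\bm{s}\mapsto\mathscr{D}_n(\beta;\bm{s})\psi$, and hence continuity of $F_n$ on $\Delta_n(\beta)$.

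Given continuity plus non-negativity, the lemma is immediate. Assuming \hyperlink{R2}{\bf (R. 2)} and choosing $n\in\BbbZ$ with $F_n(\bm{0})>0$, there is $\vepsilon\in(0,\beta]$ such that $F_n(\bm{s})>0$ on $U_\vepsilon:=\{\bm{s}\in\BbbR^n:0\le s_1\le\cdots\le s_n\le\vepsilon\}\subseteq\Delta_n(\beta)$. Since $U_\vepsilon$ has Lebesgue measure $\vepsilon^n/n!>0$ and $F_n\ge 0$ elsewhere,
\ben
\la\vphi|\mathscr{D}_n(\beta)\psi\ra=\int_{\Delta_n(\beta)}F_n(\bm{s})\,ds_1\cdots ds_n\;\ge\;\int_{U_\vepsilon}F_n(\bm{s})\,ds_1\cdots ds_n>0,
\een
which is exactly \hyperlink{R1}{\bf (R. 1)}. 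The only nontrivial technical point is the continuity of the product $\mathscr{D}_n(\beta;\bm{s})$ in the presence of the unbounded operator $N_{\rm ph}$ inside $L$; but since $V$ and $N_{\rm ph}$ commute, one has $e^{-tL}=e^{-tV}e^{-t\omega N_{\rm ph}}$ for $t\ge 0$, and the question reduces to the standard strong continuity of the semigroup generated by a self-adjoint operator bounded from below, so there is no genuine obstacle.
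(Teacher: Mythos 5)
Your proof is correct and follows essentially the same route as the paper: nonnegativity of $F_n$ on all of $\Delta_n(\beta)$ from Step~1, continuity of $\bm{s}\mapsto\la\vphi|\mathscr{D}_n(\beta;\bm{s})\psi\ra$, and strict positivity at $\bm{s}=\bm{0}$ together force the integral to be strictly positive. You merely spell out the continuity argument (via the telescoped factorization with non-positive exponents) and the positive-measure neighborhood step, which the paper leaves implicit; the only blemish is the typo $n\in\BbbZ$ for $n\in\BbbZ_+$.
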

\begin{proof}
First, remark the fact that  $\mathscr{D}_n(\beta; \bm{s}) \unrhd 0$ w.r.t. $\fP_M$ for every $\bm{s} \in \Delta_n(\beta)$ implies $\la \vphi|\mathscr{D}_n(\beta; \bm{s}) \psi\ra \ge 0$.
Because $\mathscr{D}_n(\beta; \bm{s})$ is continuous in $\bm{s}$, we get
\be
\la \vphi|\mathscr{D}_n(\beta)\psi\ra=\int_{\Delta_n(\beta)} \la \vphi|\mathscr{D}_n(\beta; \bm{s}) \psi\ra d\bm{s}>0.
\ee
Thus, we are done.
\end{proof}

\subsubsection*{Step 3}

To describes the next step, we divide $D$ as $D=D^{(0)}+D^{(1)}$, where
\be
D^{(0)} =\sum_{x, y\in \vLa}\sum_{\sigma=-1,  1}t_{x, y}\exp({\rm i} \varPhi_{x, y})c_{x, \sigma}^*c_{y, \sigma}, \quad D^{(1)}=J \sum_{x\in \vLa}  (s_x^{(+)}S_x^{(-)}+s_x^{(-)}S_x^{(+)}).
\ee

\begin{Lemm}\label{R3Lemma}
A sufficient condition for {\bf \hyperlink{R2}{(R. 2)}} to  be valid   is that the following condition holds:
\begin{description}
\item[\bf \hypertarget{R3}{(R. 3)}]
For every   $x, y\in \vLa$, $(\sigma, \bm{\sigma}), (\tau, \bm{\tau})\in \bm{\mathcal{S}}_{\vLa, M}$, $f\in L^2(\mathcal{Q})_+\setminus \{0\}$ and 
$g\in L^2(\mathcal{Q})_+\setminus \{0\}$, strictly positive, there exist $n\in \BbbZ_+$ and  $\vepsilon_1, \dots, \vepsilon_n\in \{0, 1\}$ such that 
\be
\bra{x, \sigma; \bm{\sigma}; f} D^{(\vepsilon_1)} \cdots D^{(\vepsilon_n)} \ket{y, \tau; \bm{\tau}; g}>0.
\ee
\end{description}
\end{Lemm}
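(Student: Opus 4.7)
The plan is to reduce \hyperlink{R2}{(R. 2)} to a statement about pairs of basis elements, then use the block structure of $L = V + \omega N_{\rm ph}$ to replace the arbitrary phonon vector on the right by a strictly positive one, and finally expand $D^n = (D^{(0)}+D^{(1)})^n$ as a nonnegative sum of monomials, in which \hyperlink{R3}{(R. 3)} supplies the required strictly positive contribution.

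First, by Lemma \ref{PBasic}, it suffices to verify \hyperlink{R2}{(R. 2)} for $\vphi = \ket{x, \sigma; \bm{\sigma}; f}$ and $\psi = \ket{y, \tau; \bm{\tau}; g}$ with $(\sigma, \bm{\sigma}), (\tau, \bm{\tau}) \in \bm{\mathcal{S}}_{\vLa, M}$ and $f, g \in L^2(\mathcal{Q})_+ \setminus \{0\}$. Next, one observes that $V$ acts trivially on the phonon factor and is diagonal in the fermionic basis with some eigenvalue $V(y, \tau; \bm{\tau})$, while $e^{-\beta \omega N_{\rm ph}}$ acts only on the phonon factor; since $V$ and $\omega N_{\rm ph}$ commute, this yields
\begin{equation*}
	e^{-\beta L} \ket{y, \tau; \bm{\tau}; g} = e^{-\beta V(y, \tau; \bm{\tau})} \, \ket{y, \tau; \bm{\tau}; g'}, \qquad g' := e^{-\beta \omega N_{\rm ph}} g.
\end{equation*}
By Lemma \ref{IroiroPP}(iv), $e^{-\beta \omega N_{\rm ph}} \rhd 0$ w.r.t. $L^2(\mathcal{Q})_+$, so $g'$ is strictly positive in $L^2(\mathcal{Q})_+$. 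This is precisely the hypothesis needed to invoke \hyperlink{R3}{(R. 3)}.

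The final step is to expand
\begin{equation*}
	\mathscr{D}_n(\beta; \bm{0}) = D^n e^{-\beta L} = \sum_{\bm{\vepsilon} \in \{0,1\}^n} D^{(\vepsilon_1)} \cdots D^{(\vepsilon_n)} \, e^{-\beta L}.
\end{equation*}
Each of $D^{(0)}$ and $D^{(1)}$ is positivity preserving w.r.t. $\fP_M$ (by the same argument used to establish Lemma \ref{IroiroPP2}(i)), hence every monomial in this sum is positivity preserving, and therefore every matrix element $\bra{x, \sigma; \bm{\sigma}; f} D^{(\vepsilon_1)} \cdots D^{(\vepsilon_n)} \ket{y, \tau; \bm{\tau}; g'}$ is nonnegative. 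Condition \hyperlink{R3}{(R. 3)}, applied with the strictly positive $g'$ in place of $g$, then furnishes $n \in \BbbZ_+$ and $\vepsilon_1, \dots, \vepsilon_n$ making one such matrix element strictly positive. Summing and reinserting the positive prefactor $e^{-\beta V(y, \tau; \bm{\tau})}$ yields $\la \vphi | \mathscr{D}_n(\beta; \bm{0}) \psi \ra > 0$, which is \hyperlink{R2}{(R. 2)}.

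I do not anticipate any real obstacle in carrying out this plan; the argument is essentially combinatorial bookkeeping combined with the two positivity lemmas already in hand. The only point that requires care is the maneuver that enables the application of \hyperlink{R3}{(R. 3)}: that condition demands the right-hand phonon vector to be \emph{strictly} positive, and the strict positivity is supplied, for free, by the regularizing semigroup $e^{-\beta \omega N_{\rm ph}}$ via Lemma \ref{IroiroPP}(iv).
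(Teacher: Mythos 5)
Your proposal is correct and follows essentially the same route as the paper: reduce to matrix elements between basis kets via the conical decomposition of elements of $\fP_M$, use the block structure of $L$ together with Lemma \ref{IroiroPP}(iv) to replace $g$ by the strictly positive $e^{-\beta\omega N_{\rm ph}}g$, and bound $\bra{x,\sigma;\bm{\sigma};f}D^n\ket{y,\tau;\bm{\tau};g'}$ from below by a single monomial $D^{(\vepsilon_1)}\cdots D^{(\vepsilon_n)}$ (the paper phrases this via $D\unrhd D^{(\vepsilon)}$ rather than your explicit expansion of $(D^{(0)}+D^{(1)})^n$, but these are equivalent). The only cosmetic discrepancy is that the first reduction rests on the decomposition of cone vectors into sums of basis kets together with $\mathscr{D}_n(\beta;\bm{0})\unrhd 0$, rather than on Lemma \ref{PBasic} as literally stated.
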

\begin{proof}
For each $\eta\in \fP_M\setminus \{0\}$, there exist $x\in \vLa, (\sigma, \bm{\sigma})\in \bm{\mathcal{S}}_{\vLa, M}$ and $f\in L^2(\mathcal{Q})_+\setminus \{0\}$ such that 
\be
\eta \ge \ket{x, \sigma, \bm{\sigma}; f} \quad\mbox{w.r.t. $\fP_M$}.
\ee
To see this, we just recall that $\eta$ can be expressed as 
$ \eta=\sum_{i=1}^N\ket{x_i, \sigma_i; \bm{\sigma}_i; f_i}$ with $f_i\in L^2(\mathcal{Q})_+\setminus \{0\}$.

From the above discussion, we see that there exist $x, y\in \vLa, (\sigma, \bm{\sigma}), (\tau, \bm{\tau})\in \bm{\mathcal{S}}_{\vLa, M}$ and $f, g\in L^2(\mathcal{Q})_+\setminus \{0\}$ such that 
\be
\vphi\ge \ket{x, \sigma, \bm{\sigma}; f},  \quad \psi\ge \ket{y, \tau, \bm{\tau}; f} \quad\mbox{w.r.t. $\fP_M$},
\ee
which implies that 
\be
\la \vphi|\mathscr{D}_n(\beta; \bm{0}) \psi\ra\ge
\bra{x, \sigma; \bm{\sigma}; f} \mathscr{D}_n(\beta; \bm{0}) \ket{y, \tau; \bm{\tau}; g}.
\ee

On the other hand, since $e^{-\beta L}\ket{y, \tau; \bm{\tau}; g}=e^{-\beta V(y, \tau; \bm{\tau})}\ket{y, \tau; \bm{\tau}; e^{-\beta \omega N_{\rm ph}}g}$, we obtain 
\be
\bra{x, \sigma; \bm{\sigma}; f} \mathscr{D}_n(\beta; \bm{0}) \ket{y, \tau; \bm{\tau}; g}=e^{-\beta V(y, \tau; \bm{\tau}) }
\bra{x, \sigma; \bm{\sigma}; f} D^n \ket{y, \tau; \bm{\tau}; e^{-\beta \omega N_{\rm ph}}g},  \label{EquivDn}
\ee
where $V(y, \tau; \bm{\tau})$ is an eigenvalue of $V$; see the proof of Lemma \ref{IroiroPP2} for details. Because $D\unrhd D^{(\vepsilon)}$ w.r.t. $\fP_M\ (\vepsilon=0, 1)$,  we find 
\be
\mbox{the right hand side of \eqref{EquivDn} } \ge e^{-\beta V(y, \tau; \bm{\tau}) }
\bra{x, \sigma; \bm{\sigma}; f}D^{(\vepsilon_1)} \cdots D^{(\vepsilon_n)} \ket{y, \tau; \bm{\tau}; e^{-\beta \omega N_{\rm ph}}g}.\label{DLower}
\ee
From (iv) of  Lemma \ref{IroiroPP}, it follows that $e^{-\beta \omega N_{\rm ph}} g>0$, so if {\bf \hyperlink{R3}{(R. 3)}} holds, then 
we can find $n\in \BbbN$ and $\vepsilon_1,\dots, \vepsilon_n\in \{0, 1\}$ such that 
the right hand side of \eqref{DLower} is strictly positive. This completes the proof of Lemma \ref{R3Lemma}.
\end{proof}

\subsubsection*{\it Proof of Proposition \ref{Reduction}}
For any path $\bm{p}=(\{x_i, x_{i+1}\})_{i=1}^{n-1}$, we denote its length by $|\bm{p}|$: $|\bm{p}|=n-1$.
Since $D^{(0)}_{x, y} \unrhd 0$ and $D^{(1)}_x\unrhd 0$ w.r.t. $\fP_M$ due to Lemma \ref{IroiroPP}, it holds that 
\be
D^{(0)} \unrhd D^{(0)}_{x, y},\qquad D^{(1)}\unrhd D_x^{(1)}\quad \mbox{w.r.t. $\fP_M$}.
\ee
Accordingly, 
\begin{align}
&\bra{x, \sigma; \bm{\sigma}; f} (D^{(0)})^{|\bm{p}_1|} D^{(1)} (D^{(0)})^{|\bm{p}_2|} D^{(1)}\cdots (D^{(0)})^{|\bm{p}_n|} \ket{y, \tau; \bm{\tau}; g}\no
\ge& \bra{x, \sigma; \bm{\sigma}; f} (D^{(0)}_{\bm{p}_1} D_{x_1}^{(1)})( D^{(0)}_{\bm{p}_2} D_{x_2}^{(1)}) \cdots (D^{(0)}_{\bm{p}_m} D_{x_m}^{(1)}) D^{(0)}_{\bm{p}_{m+1}} \cdots D_{\bm{p}_n}^{(0)} \ket{y, \tau; \bm{\tau}; g}>0.
\end{align}
This  indicates that {\bf \hyperlink{R3}{(R. 3)}}  holds.
Therefore, by combining Lemmas \ref{R1Lemma}, \ref{R2Lemma} and \ref{R3Lemma}, we conclude the assertion in Proposition \ref{Reduction}.
\qed

\subsection{Proof of Theorem \ref{PI}}
What has been proved  is that to prove Theorem \ref{PI}, it suffices to show {\bf \hyperlink{R}{(R)}} in Porposition \ref{Reduction} holds.

Suppose that $x, y\in \vLa$ is given arbitrarily. It follows from the assumption \hyperlink{A2}{\bf (A. 2)} that there exists a path $\bm{p}=(\{x_i, x_{i+1}\})_{i=1}^{n-1}\subset E$ such that 
$y=x_1$ and $x=x_n$. Then we readily confirm that 
\be
D^{(0)}_{\bm{p}}\ket{x, \sigma; \bm{\sigma}; f}=t_{\bm{p}} \ket{y, \sigma; \bm{\sigma}; \varTheta_{\bm{p}} f},
\ee
where $t_{\bm{p}}=t_{x_1, x_2}t_{x_2, x_3}\cdots t_{x_{n-1}, x_n}>0$ and 
\be
\varTheta_{\bm{p}}=e^{{\rm i} \varPhi_{x_1, x_2}}e^{{\rm i} \varPhi_{x_2, x_3}} \cdots e^{{\rm i} \varPhi_{x_{n-1}, x_n}}.
\ee
From (iii) of Lemma \ref{IroiroPP}, if $f\ge 0$, then $\varTheta_{\bm{p}} f\ge 0$ holds; in addition,  since $\varTheta_{\bm{p}}$ is unitary, if $f\neq 0$, then $\varTheta_{\bm{p}} f \neq 0$ holds.
From the above, one can conclude the following:
\begin{description}
\item[Property 1.]The position $x$ of the conduction electron  in the state vector $\ket{x, \sigma; \bm{\sigma}; f}$ can be moved freely without breaking the positivity of this vector by the action of the operator $D_{\bm{p}}^{(0)}$.
\end{description}

Next, let us examine the action of the operator $D_x^{(1)}$. We readily confirm that 
\be
D_y^{(1)} \ket{y, \tau; \bm{\tau}; g}=J \ket{y, -\tau; T_x\bm{\tau}; g},
\ee
where $T_x$ denotes the spin-flip of the $f$-electron at site $x$;  see the proof of Lemma \ref{IroiroPP} for details.
From this,  we can see the following:

\begin{description}
\item[Property 2.]The spin configuration at site $x$ of the state vector $\ket{x, \sigma; \bm{\sigma}; f}$ can be reversed without breaking the positivity of this vector by the action of the operator $D^{(1)}_x$.
\end{description}

Using these two properties, for any $x, y\in \vLa $ and $ (\sigma, \bm{\sigma}), (\tau, \bm{\tau}) \in \bm{\mathcal{S}}_{\vLa, M}$, we will  construct an operator $\mathcal{C}((x, \sigma; \bm{\sigma}) \to (y, \tau; \bm{\tau}))$ from a  product of several  $D_{\bm{p}}^{(0)}$\rq{}s and $D_x^{(1)}$\rq{}s that satisfies the following properties: 
\be
\mathcal{C}((x, \sigma; \bm{\sigma}) \to (y, \tau; \bm{\tau}))\ket{x, \sigma; \bm{\sigma}; f}
=C \ket{y,  \tau; \bm{\tau}; \varTheta_{\bm{p}_1}\cdots \varTheta_{\bm{p}_n}f}, \label{ConnOp}
\ee
where $C$ is a constant expressed as the product of several $J$\rq{}s and $t_{x, y}$\rq{}s, and is particularly strictly positive;  $\bm{p}_1, \dots, \bm{p}_n$
are  paths used to construct the operator $\mathcal{C}((x, \sigma; \bm{\sigma}) \to (y, \tau; \bm{\tau}))$.
Because 
$g$ is stirictly positive and $\varTheta_{\bm{p}_1}\cdots \varTheta_{\bm{p}_n}f\ge 0$, if 
we can construct the  operator satisfying 
\eqref{ConnOp}, then we obtain
\begin{align}
\bra{y, \tau, \bm{\tau}; g}\mathcal{C}((x, \sigma; \bm{\sigma}) \to (y, \tau; \bm{\tau}))\ket{x, \sigma; \bm{\sigma}; f}
=C \la g|  \varTheta_{\bm{p}_1}\cdots \varTheta_{\bm{p}_n}f\ra>0.
\end{align}

From the discussion above, if we can construct an operator $\mathcal{C}((x, \sigma; \bm{\sigma}) \to (y, \tau; \bm{\tau}))$ of the form appearing in \eqref{Dprod}, the condition {\bf \hyperlink{R}{(R)}} is satisfied, thus completing the proof of Theorem \ref{PI}.
In the following, we will briefly describe its construction.   To this end, define the subset 
$B_{\bm{ \sigma}, \bm{ \tau}}$ of $\vLa$ as 
\be
B_{{\bm \sigma}, {\bm \tau}}=\{x\in \vLa : \sigma_x\neq \tau_x\}.
\ee
First, choose $z_1\in B_{{\bm \sigma}, {\bm \tau}}$ arbitrarily.
Let $\bm{p}_1$ be a path connecting  $x$ and $z_1$.  Then we have
\be
D^{(0)}_{\bm{p}_1}\ket{x, \sigma; \bm{\sigma}; f}=t_{\bm{p}_1}\ket{z_1, \sigma; \bm{\sigma}; \varTheta_{\bm{p}_1}f}.
\ee
When $D^{(1)}_{z_1}$ is further applied to the resulting state,  the following is obtained: 
\be
D^{(1)}_{z_1}D^{(0)}_{\bm{p}_1}\ket{x, \sigma; \bm{\sigma}; f}= Jt_{\bm{p}_1}\ket{z_1, -\sigma; T_{z_1}\bm{\sigma}; \varTheta_{\bm{p}_1}f}.
\ee
Note that the spin configuration of $\bm{\sigma}_1:=T_{z_1}\bm{\sigma}$  is equal to that of $\bm \tau$ at site $z_1$:
\be
 B_{{\bm \sigma}_1, {\bm \tau}}\subset B_{{\bm \sigma}, {\bm \tau}},\quad B_{{\bm \sigma}, {\bm \tau}}\setminus B_{{\bm \sigma}_1,  {\bm \tau}}=\{z_1\}.
\ee
Next, choose $z_2\in B_{{\bm \sigma}_1, {\bm \tau}}$ arbitrarily, and
let $\bm{p}_2$ be a  path connecting $z_1$ and $z_2$.
If we let the operator $D_{z_2}^{(1)} D^{(0)}_{\bm{p}_2}$ act on the state $\ket{z_1, -\sigma; \bm{\sigma}_1; \varTheta_{\bm{p}_1}f}$, we obtain 
\be
D_{z_2}^{(1)} D^{(0)}_{\bm{p}_2}\ket{z_1, -\sigma; \bm{\sigma}_1; \varTheta_{\bm{p}_1}f}=Jt_{\bm{p}_2}\ket{z_2, +\sigma;T_{z_2} \bm{\sigma}_1; \varTheta_{\bm{p}_2}\varTheta_{\bm{p}_1}f}.
\ee
Setting
$\bm{\sigma}_2=T_{z_2} \bm{\sigma}_1$, we have
\be
 B_{{\bm \sigma}_2, {\bm \tau}}\subset B_{{\bm \sigma}_1, {\bm \tau}},\quad B_{{\bm \sigma}_1, {\bm \tau}}\setminus B_{{\bm \sigma}_2,  {\bm \tau}}=\{z_2\}.
\ee
By repeating this procedure  until there is no element of $B_{\bm{\sigma}, \bm{\tau}}$ left, we can construct an operator $\mathcal{C}((x, \sigma; \bm{\sigma}) \to (y, \tau; \bm{\tau}))$ of the form appearing in \eqref{Dprod}.
This completes the proof of Theorem \ref{PI}. \qed

\subsection{Proof of Theorem \ref{Main2}}

Let $\tilde{\psi}_M$ be the ground state of $\tH_M$.
Due to Theorems \ref{pff} and \ref{PI}, $\tilde{\psi}_M>0$ w.r.t. $\fP_M$ holds.
In addition, by   the arguments similar to those in the proof of Lemma \ref{IroiroPP}, we have
\be
s_x^{(+)}s_y^{(-)} \unrhd 0,\ s_x^{(+)} S_y^{(-)} \unrhd 0,\ S_x^{(+)}S_y^{(-)} \unrhd 0\ \mbox{w.r.t. $\fP_M$}\ (x, y\in \vLa).
\ee
Combining these facts, we find 
\be
\bra{\tilde{\psi}_M}
s_x^{(+)}s_y^{(-)} \tilde{\psi}_M\ra > 0,\ \bra{\tilde{\psi}_M}s_x^{(+)} S_y^{(-)} \tilde{\psi}_M\ra > 0,\ \bra{\tilde{\psi}_M}S_x^{(+)}S_y^{(-)} \tilde{\psi}_M\ra > 0. \label{TiExp}
\ee
On the other hand, we readily confirm that 
\be
F^{-1}s_x^{(+)}s_y^{(-)} F=s_x^{(+)}s_y^{(-)},\ F^{-1} s_x^{(+)} S_y^{(-)}F=-s_x^{(+)} S_y^{(-)},\  F^{-1}S_x^{(+)}S_y^{(-)}F=S_x^{(+)}S_y^{(-)}.
\ee
Putting \eqref{TiExp} together with this fact, we obtain the assertion of Theorem \ref{Main2}.
\qed

\section{Proof of Theorem \ref{Main3}}\label{Sec4}
We prove Theorem \ref{Main3} by extending the ideas in \cite{Miyao2017}.
By straightforward computation, we have
\begin{align}
s_x^{(3) }S_x^{(3)} |{\bs \sigma}_x\ra_0&=0,\\
s_x^{(+) }S_x^{(-)} |{\bs \sigma}_x\ra_0&=-\frac{1}{\sqrt{2}} c_{x, \up}^* f_{x, \down}^* |{\bs \sigma}_x\ra,\\ 
s_x^{(-) }S_x^{(+)} |{\bs \sigma}_x\ra_0&=\frac{1}{\sqrt{2}} c_{x, \down}^* f_{x, \up}^* |{\bs \sigma}_x\ra.
\end{align}
If $x\neq y$, then 
\be
s_x^{(3) }S_x^{(3)} |{\bs \sigma}_y\ra_0=
s_x^{(+) }S_x^{(-)} |{\bs \sigma}_y\ra_0=
s_x^{(-) }S_x^{(+)} |{\bs \sigma}_y\ra_0=0,
\ee
so we get
\be
J\sum_{x\in \vLa} {\bs s}_x\cdot {\bs S}_x|{\bs \sigma}_y\ra_0=-\frac{J}{2} |{\bs \sigma}_y\ra_0.\label{ActEx1}
\ee

Next, for each 
${\bs \sigma}=(\sigma_y)_{y\in \vLa}\in \mathcal{S}_{\vLa}$, we set 
\begin{align}
| {\bs \sigma}_x\ra_{1, 1}&=
c_{x, \up}^* f_{x, \up}^*
 |{\bs \sigma}_x\ra,\\
  |{\bs \sigma}_x\ra_{1, 2}&=
c_{x, \down}^* f_{x, \down}^*
 |{\bs \sigma}_x\ra,\\
|{\bs \sigma}_x\ra_{1, 3}&=
\frac{1}{\sqrt{2}} (c_{x, \up}^* f_{x, \down}^*+c_{x, \down}^*f_{x, \up}^*)
 |{\bs \sigma}_x\ra.
\end{align}
These vectors represent situations where the conduction electron and $f$-electron  form a triplet at site $x$. The subscript $1$ denotes this fact. We readily confirm that 
\be
\{ |{\bs \sigma}_x\ra_0,\ | {\bs \sigma}_x\ra_{1, j} : j=1, 2, 3, {\bs \sigma}\in \mathcal{S}_{\vLa}, x\in \vLa\}
\ee
 is a CONS of $\h$.
By performing the similar calculations as before, we have
\begin{align}
J\sum_{x\in \vLa} {\bs s}_x\cdot {\bs S}_x| {\bs \sigma}_y\ra_{1, j}=\frac{J}{4} | {\bs \sigma}_y\ra_{1, j},\ \ j=1, 2, 3. \label{ActEx2}
\end{align}
Furthermore, 
\be
\mathfrak{X}^{\perp}=\mathrm{Lin}\{|{\bs \sigma}_x\ra_{1, j}\otimes \vphi : j=1,2,3, {\bs \sigma}\in \mathcal{S}_{\vLa}, x\in \vLa, \vphi\in \h_{\rm ph}\}
\ee
holds, where $\mathfrak{X}$ is  defined by \eqref{DefX}.

To study the limit of $J\to \infty$, one considers the renormalized Hamiltonian:
\be
H_{{\rm ren}, J}=H+\frac{J}{2}.
\ee
To simplify the discussion, we introduce the following operators:
\be
H_{\infty}=PH_{{\rm ren}, J}P, \ \ H_1=P^{\perp}H_{{\rm ren}, J} P^{\perp},\ \ H_{01}=PH_{{\rm ren}, J}P^{\perp}+
P^{\perp}H_{{\rm ren}, J}P,
\ee
where $P$ is the orthogonal projection from $\h$ to $\mathfrak{X}$ and $P^{\perp}=\mathbbm{1}-P$. 
Because the spin exchange interaction term commutes with  the interaction term between the  spins and the external magnetic field, we see
\begin{align}
H_{01}=PTP^{\perp}+P^{\perp}TP, 
\end{align}
where $T$ stands for the   hopping term:
\be
T=\sum_{x, y\in \vLa}\sum_{\sigma=\up, \down}(-t_{x, y}) c_{x, \sigma}^*c_{y, \sigma}.
\ee

Define
\be
R=T_h+H_{\rm ph},
\ee
where 
\be
T_h=T-2hS_{\rm tot}^{(3)},\quad
H_{\rm ph} =\omega N_{\rm ph}+\sum_{x, y \in \vLa} g_{x, y}n_x^c(b_y + b_y^*).
\ee
Then, from the above arguments, we know the following:
\begin{Lemm}\label{EnyX}
For a given self-adjoint operator $A$, bounded from below, set
$E(A)=\inf \mathrm{spec}(A)$. We have the following:
\begin{itemize}
\item[\rm (i)] $\displaystyle
E(H_{\infty} \restriction \mathfrak{X})=E(P R  P)$.
\item[\rm (ii)] $\displaystyle 
E(H_1 \restriction \mathfrak{X}^{\perp})=E(P^{\perp} R P^{\perp})+\frac{3J}{4}
$.
\end{itemize}
\end{Lemm}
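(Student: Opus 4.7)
The plan is to prove Lemma \ref{EnyX} by a direct block decomposition of $H_{{\rm ren},J}$ that isolates a spin-exchange correction which reduces to a scalar on each of $\mathfrak{X}$ and $\mathfrak{X}^\perp$. First I would substitute the definitions of $H$ and $H_{\rm el}$ into $H_{{\rm ren},J}=H+J/2$ and regroup the hopping and magnetic-field pieces into $T_h$ and the phonon pieces into $H_{\rm ph}$, obtaining the clean identity
\begin{equation}
H_{{\rm ren},J}=R+J\bigg(\sum_{x\in\vLa}\bs{s}_x\cdot\bs{S}_x+\frac{1}{2}\bigg).
\end{equation}

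The second step is to use the eigenvalue identities \eqref{ActEx1} and \eqref{ActEx2}, together with the tensor-product descriptions $\mathfrak{X}=\overline{\mathrm{span}}\{|\bs{\sigma}_x\rangle_0\otimes\varphi\}$ and the analogous one for $\mathfrak{X}^\perp$ in terms of the triplet vectors $|\bs{\sigma}_x\rangle_{1,j}\otimes\varphi$, to conclude that $\sum_x\bs{s}_x\cdot\bs{S}_x$ commutes with $P$ and restricts to the scalar $-\tfrac{1}{2}$ on $\mathfrak{X}$ and to $\tfrac{1}{4}$ on $\mathfrak{X}^\perp$. Accordingly, the correction $J\bigl(\sum_x\bs{s}_x\cdot\bs{S}_x+\tfrac{1}{2}\bigr)$ vanishes identically on $\mathfrak{X}$ and equals the constant $\tfrac{3J}{4}$ on $\mathfrak{X}^\perp$.

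Combining the two steps, I obtain $PH_{{\rm ren},J}P=PRP$ and $P^\perp H_{{\rm ren},J}P^\perp=P^\perp RP^\perp+\tfrac{3J}{4}P^\perp$, so that after restricting to $\mathfrak{X}$ and $\mathfrak{X}^\perp$ respectively and taking the infimum of the spectrum, assertions (i) and (ii) follow at once.

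The point requiring care, rather than any genuine obstacle, is the functional-analytic setup: since $R$ contains the unbounded phonon Hamiltonian, one must confirm that $PRP\restriction\mathfrak{X}$ and $P^\perp RP^\perp\restriction\mathfrak{X}^\perp$ are self-adjoint and bounded from below so that $E(PRP)$ and $E(P^\perp RP^\perp)$ are meaningful. This will follow from the same Kato--Rellich argument already used for $H$, together with the observation that $P$ and $P^\perp$ are bounded spectral projections of the spin-exchange operator and hence leave the operator domain $\D(N_{\rm ph})$ invariant.
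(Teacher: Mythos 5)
Your argument is correct and is essentially identical to the paper's own proof: both write $H_{{\rm ren},J}=R+H_{\rm ex}+J/2$ with $H_{\rm ex}$ the exchange term, invoke \eqref{ActEx1}--\eqref{ActEx2} to see that $H_{\rm ex}$ commutes with $P$ and acts as the scalars $-J/2$ on $\mathfrak{X}$ and $J/4$ on $\mathfrak{X}^\perp$, and conclude $H_\infty=PRP$ and $H_1=P^\perp RP^\perp+\tfrac{3J}{4}P^\perp$. Your closing remark on self-adjointness and semiboundedness of $PRP$ and $P^\perp RP^\perp$ is a sensible addition that the paper leaves implicit.
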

\begin{proof}
Denote by 
$H_{\rm ex}$ the spin-exchange interaction term: $H_{\rm ex}=-J\sum_{x\in \vLa}{\bm{s}}_x\cdot \bm{S}_x$.
Due to \eqref{ActEx1} and \eqref{ActEx2}, $H_{\rm ex}$ commutes with $P$, and 
\be
H_{\rm ex}P=-\frac{J}{2}P,\ \ H_{\rm ex}P^{\perp} =\frac{J}{4}P^{\perp}. 
\ee
Hence, we obtain
\be
H_{\infty}=PRP,\qquad
H_1=P^{\perp} R  P^{\perp}+\frac{3J}{4}P^{\perp}.
\ee
Consequently, (i) and (ii)  of  Lemma \ref{EnyX} immediately follow.
\end{proof}

From (ii) of Lemma \ref{EnyX}, the following corollary follows:
\begin{Coro}\label{JBound}
If $J$ is sufficiently large, then $H_1^{\perp} P^{\perp}$ is a bounded operator.
\end{Coro}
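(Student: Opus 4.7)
The plan is to read the notation $H_1^{-1}P^{\perp}$ (the statement appears to contain a typo, with $H_1^{\perp}$ intended as $H_1^{-1}$) as the bounded operator on $\h$ obtained by first projecting to $\mathfrak{X}^{\perp}$ and then inverting $H_1\restriction\mathfrak{X}^{\perp}$, extended by zero on $\mathfrak{X}$. The strategy is to exploit Lemma \ref{EnyX}(ii), which cleanly isolates the $J$-dependence of the ground-state energy of $H_1$: the $3J/4$ shift dominates for large $J$, forcing $H_1\restriction\mathfrak{X}^{\perp}$ to be strictly positive and invertible, with an inverse of operator norm $O(J^{-1})$.

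First I would verify that the auxiliary operator $R=T_h+H_{\rm ph}$ appearing in Lemma \ref{EnyX} is self-adjoint on $\D(N_{\rm ph})$ and bounded below by a $J$-independent constant $E_R$. This follows from the same Kato--Rellich argument already invoked for $H$ itself: $T_h$ is a bounded self-adjoint operator on $\h_{\rm el}$, and $H_{\rm ph}=\omega N_{\rm ph}+\sum_{x,y}g_{x,y}n_x^c(b_y+b_y^*)$ is bounded below because the electron--phonon coupling is infinitesimally small with respect to $\omega N_{\rm ph}$ on the one-electron sector. A direct quadratic-form inequality,
\begin{equation*}
\la \psi|P^{\perp}RP^{\perp}\psi\ra=\la P^{\perp}\psi|RP^{\perp}\psi\ra \ge E_R\|P^{\perp}\psi\|^2,
\end{equation*}
then yields the $J$-independent lower bound $E(P^{\perp}RP^{\perp}\restriction\mathfrak{X}^{\perp})\ge E_R$.

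Combining this with Lemma \ref{EnyX}(ii) gives
\begin{equation*}
E(H_1\restriction\mathfrak{X}^{\perp})\ge E_R+\tfrac{3J}{4}.
\end{equation*}
For $J$ large enough (say $J>\max\{0,-4E_R/3\}$) the right-hand side is strictly positive, and the spectral theorem produces the norm bound $\|(H_1\restriction\mathfrak{X}^{\perp})^{-1}\|\le (E_R+3J/4)^{-1}$. Composing with $P^{\perp}$ yields the claimed bounded operator on $\h$, whose norm is of order $O(J^{-1})$ and therefore vanishes in the strong-coupling limit.

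I anticipate no serious obstacle: the corollary is essentially a one-line spectral estimate, since the entire point of the renormalization $H_{{\rm ren},J}=H+J/2$ and of the decomposition in Lemma \ref{EnyX} is precisely to expose this clean separation of the $J$-divergent part. The only point requiring a moment of care is confirming that $E(P^{\perp}RP^{\perp}\restriction\mathfrak{X}^{\perp})$ really is $J$-independent---which is transparent because $R$ itself is, by construction, built from $T_h$ and $H_{\rm ph}$ with no $J$ appearing. This decay estimate will feed into the proof of Theorem \ref{Main3} by supporting a Feshbach/Schur-complement block decomposition of the resolvent $(H_{{\rm ren},J}-z)^{-1}$ in which the off-diagonal contributions are controlled by $H_1^{-1}P^{\perp}$ and thus vanish as $J\to\infty$.
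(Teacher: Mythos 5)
Your proposal is correct and follows essentially the same route as the paper: a $J$-independent lower bound on the quadratic form of $R$ (the paper writes out the $\vepsilon$-Cauchy--Schwarz estimate for the electron--phonon term relative to $N_{\rm ph}$ explicitly, where you cite the Kato--Rellich relative-boundedness argument), combined with Lemma \ref{EnyX}(ii) to get $E(H_1\restriction\mathfrak{X}^{\perp})\ge -C+3J/4>0$ for large $J$, whence the inverse is bounded by the spectral theorem. Your reading of $H_1^{\perp}$ as a typo for $H_1^{-1}$ is also consistent with how the corollary is used in Lemma \ref{U3}.
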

\begin{proof}
Using the commutation relations \eqref{CCRs}, we have
$$
\|b_x \vphi \|\le \|N_{\rm ph}^{1/2} \vphi\|,\quad \|b_x^*\vphi\|\le \|(N_{\rm ph}+\mathbbm{1})^{1/2} \vphi\|
\quad (\vphi\in \D(N_{\rm ph}^{1/2})),
$$
which implies
\be
\bigg\|\sum_{x, y\in \vLa} g_{x, y} n_x^c(b_y+b_y^*)\vphi\bigg\|\le C_g \|(N_{\rm ph}+\mathbbm{1})^{1/2} \vphi\|
\quad (\vphi\in \D(N_{\rm ph}^{1/2})),
\ee
where $C_g=\sum_{x, y}|g_{x, y}|$. Combining this  with the elementary inequality
$ab \le \vepsilon a^2+b^2/4\vepsilon\ (\vepsilon>0)$, we obtain
\be
\la \vphi|H_{\rm ph}\vphi\ra\ge (\omega-\vepsilon C_g)\la \vphi|N_{\rm ph} \vphi\ra-\frac{C_g}{4\vepsilon} \|\vphi\|^2\quad(\vphi\in\D(N_{\rm ph}^{1/2})).
\ee
Therefore, choosing $\vepsilon>0$ such that $\omega-\vepsilon C_g\ge 0$, we get
\be
\bra{\vphi} R\vphi\ra\ge -C\|\vphi\|\quad (\vphi\in \D(N_{\rm ph}^{1/2})), 
\ee
 where $C=\sum_{x, y\in \vLa} |t_{x, y}|+C_g/4\vepsilon$. 
This together with (ii) of Lemma \ref{EnyX} yields 
$E(H_1\restriction \mathfrak{X}^{\perp}) \ge -C+3J/4$. Note that $C$ does not depend on $J$.
Because $E(H_1\restriction \mathfrak{X}^{\perp})$ is strictly positive whenever  $J>4C/3$, we obtain the desired assertion in the corollary.
\end{proof}

For simplicity of notation,  set 
$
\mathcal{E}(H_1)=E(H_1 \restriction \mathfrak{X}^{\perp}).
$
According to (ii) of Lemma \ref{EnyX}, it follows that 
\be
\lim_{J\to \infty}\mathcal{E}(H_1)=\infty.\label{JtoInf}
\ee
This fact is used repeatedly in the following discussion.
\begin{Lemm}\label{U3}
Let  $z\in \BbbC\backslash \BbbR$.  If $J$ is sufficiently large enough, we have 
\begin{align}
\|(H_1-z)^{-1} P^{\perp}\| \le  \{\mathcal{E}(H_1)-|z|\}^{-1}.
\end{align} 
\end{Lemm}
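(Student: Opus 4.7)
The plan is to reduce the bound to a straightforward spectral-theorem estimate for a self-adjoint operator that is bounded below by $\mathcal{E}(H_1)$, and then exploit the fact that $\mathcal{E}(H_1)$ diverges in the limit $J\to\infty$ thanks to \eqref{JtoInf}.

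First I would observe that $H_1=P^{\perp}H_{{\rm ren},J}P^{\perp}$ annihilates $\mathfrak{X}$ and leaves $\mathfrak{X}^{\perp}$ invariant, and that its restriction $H_1\restriction\mathfrak{X}^{\perp}$ is a self-adjoint operator on $\mathfrak{X}^{\perp}$ (self-adjointness of $H_{{\rm ren},J}$ combined with the projection argument in the proof of Corollary \ref{JBound}, where the unbounded part is controlled by $N_{\rm ph}$). In particular $\mathrm{spec}(H_1\restriction\mathfrak{X}^{\perp})\subset[\mathcal{E}(H_1),\infty)$. Because $(H_1-z)^{-1}P^{\perp}$ vanishes on $\mathfrak{X}$ and coincides with the resolvent of $H_1\restriction\mathfrak{X}^{\perp}$ on $\mathfrak{X}^{\perp}$, we obtain
\begin{equation*}
\bigl\|(H_1-z)^{-1}P^{\perp}\bigr\|=\bigl\|(H_1\restriction\mathfrak{X}^{\perp}-z)^{-1}\bigr\|.
\end{equation*}

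Next I would invoke the spectral theorem for the self-adjoint operator $H_1\restriction\mathfrak{X}^{\perp}$ to rewrite the right-hand side as $\mathrm{dist}(z,\mathrm{spec}(H_1\restriction\mathfrak{X}^{\perp}))^{-1}$. By \eqref{JtoInf}, for $J$ large enough we have $\mathcal{E}(H_1)>|z|$, so for any $\lambda\in\mathrm{spec}(H_1\restriction\mathfrak{X}^{\perp})$ the reverse triangle inequality yields
\begin{equation*}
|\lambda-z|\ge\lambda-|z|\ge\mathcal{E}(H_1)-|z|>0.
\end{equation*}
Therefore $\mathrm{dist}(z,\mathrm{spec}(H_1\restriction\mathfrak{X}^{\perp}))\ge\mathcal{E}(H_1)-|z|$, from which the claimed bound follows at once.

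The only step requiring any care is the assertion that $H_1\restriction\mathfrak{X}^{\perp}$ is genuinely self-adjoint with spectrum in $[\mathcal{E}(H_1),\infty)$; this is essentially already contained in Lemma \ref{EnyX}(ii) and Corollary \ref{JBound}, so no real obstacle arises. The hypothesis $z\in\mathbb{C}\setminus\mathbb{R}$ is in fact not used explicitly in the argument beyond guaranteeing that the resolvent exists: the bound itself depends only on the weaker condition $|z|<\mathcal{E}(H_1)$, which is ensured by taking $J$ sufficiently large.
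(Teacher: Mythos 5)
Your proof is correct, and it reaches the stated bound by a more direct route than the paper. The paper's own argument expands the resolvent in a Neumann series around $z=0$, writing $(H_1-z)^{-1}P^{\perp}=\sum_{n\ge 0}(H_1^{-1}P^{\perp}z)^nH_1^{-1}P^{\perp}$ and using $\|H_1^{-1}P^{\perp}\|\le\mathcal{E}(H_1)^{-1}$ (which is Corollary \ref{JBound}) to sum the geometric series to $\{\mathcal{E}(H_1)-|z|\}^{-1}$; you instead identify $\|(H_1-z)^{-1}P^{\perp}\|$ with $\mathrm{dist}\bigl(z,\mathrm{spec}(H_1\restriction\mathfrak{X}^{\perp})\bigr)^{-1}$ via the spectral theorem and bound the distance from below by $\mathcal{E}(H_1)-|z|$ using the reverse triangle inequality. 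Both arguments rest on exactly the same input, namely Lemma \ref{EnyX}(ii) guaranteeing $\mathrm{spec}(H_1\restriction\mathfrak{X}^{\perp})\subset[\mathcal{E}(H_1),\infty)$ with $\mathcal{E}(H_1)\to\infty$, and both require $|z|<\mathcal{E}(H_1)$, i.e.\ $J$ large. What your version buys is transparency: it makes explicit that the hypothesis $z\in\BbbC\setminus\BbbR$ plays no role beyond ensuring the resolvent exists (the same is implicitly true of the paper's series, whose convergence also needs only $|z|<\mathcal{E}(H_1)$), and it avoids the slight notational awkwardness of manipulating $H_1^{-1}P^{\perp}$ as an everywhere-defined operator. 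Your parenthetical care about the self-adjointness of $H_1\restriction\mathfrak{X}^{\perp}$ is appropriate and is indeed covered by the fact that $P$ has the form $P_{\rm el}\otimes\mathbbm{1}$ and hence commutes with $N_{\rm ph}$, so no gap remains.
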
 
\begin{proof} Since  $H_1^{-1}P^{\perp}$ is bounded due to Corollary \ref{JBound}, we have
\begin{align}
(H_1-z)^{-1}P^{\perp}=\sum_{n=0}^{\infty} (H_1^{-1}P^{\perp} z)^n H_1^{-1}P^{\perp},
\end{align} 
which implies that 
\be
\|(H_1-z)^{-1 }P^{\perp}\| \le \sum_{n=0}^{\infty} \mathcal{E}(H_1)^{-n-1} |z|^n=\{\mathcal{E}(H_1)-|z|\}^{-1}.
\ee
\end{proof}

\begin{Lemm}\label{U4}
Let $z\in \BbbC\setminus \BbbR$. If $|\mathrm{Im} z|$ is 
 large enough, then we have
\begin{align}
 \lim_{J \to \infty}
\Big\|
(H-z)^{-1}-(H_{\infty}+H_1-z)^{-1}
\Big\|=0.
\end{align} 
\end{Lemm}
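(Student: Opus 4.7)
The plan is to apply the resolvent identity to the block decomposition $H_{\rm ren,J}=(H_\infty+H_1)+H_{01}$, where the off-diagonal coupling $H_{01}=PTP^\perp+P^\perp TP$ is bounded uniformly in $J$, and then to invoke Lemma \ref{U3} so that the difference in resolvents acquires a small factor $(\mathcal{E}(H_1)-|z|)^{-1}$ that vanishes by (ii) of Lemma \ref{EnyX} together with the formula $\mathcal{E}(H_1)=E(P^\perp RP^\perp)+3J/4$. Concretely, the identity
\begin{align}
(H_{\rm ren,J}-z)^{-1}-(H_\infty+H_1-z)^{-1}=-(H_{\rm ren,J}-z)^{-1}H_{01}(H_\infty+H_1-z)^{-1}
\end{align}
holds for $z\in\BbbC\setminus\BbbR$, and block diagonality of $H_\infty+H_1$ gives
\begin{align}
H_{01}(H_\infty+H_1-z)^{-1}=PTP^\perp(H_1-z)^{-1}P^\perp+P^\perp TP(H_\infty-z)^{-1}P.
\end{align}

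The first summand is harmless: prefixed by $(H_{\rm ren,J}-z)^{-1}$ (of norm at most $|\mathrm{Im}\,z|^{-1}$), its norm is bounded by $\|T\||\mathrm{Im}\,z|^{-1}(\mathcal{E}(H_1)-|z|)^{-1}$, which tends to zero by Lemma \ref{U3}. The second summand is the delicate one, because its factor $(H_\infty-z)^{-1}$ has no obviously small norm. I would tame it by absorbing the leading $P^\perp$ into the outer resolvent,
\begin{align}
\|(H_{\rm ren,J}-z)^{-1}P^\perp TP(H_\infty-z)^{-1}P\|\le\|(H_{\rm ren,J}-z)^{-1}P^\perp\|\cdot\|T\|\cdot|\mathrm{Im}\,z|^{-1},
\end{align}
and then proving the auxiliary bound $\|(H_{\rm ren,J}-z)^{-1}P^\perp\|\to0$ via a second use of the resolvent identity: since
\begin{align}
(H_{\rm ren,J}-z)^{-1}P^\perp=(H_1-z)^{-1}P^\perp-(H_{\rm ren,J}-z)^{-1}PTP^\perp(H_1-z)^{-1}P^\perp,
\end{align}
both terms inherit the small factor $(\mathcal{E}(H_1)-|z|)^{-1}$ from Lemma \ref{U3}, yielding $\|(H_{\rm ren,J}-z)^{-1}P^\perp\|\le(1+\|T\||\mathrm{Im}\,z|^{-1})(\mathcal{E}(H_1)-|z|)^{-1}\to0$.

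The hypothesis that $|\mathrm{Im}\,z|$ is large is used to keep the geometric prefactors $\|T\||\mathrm{Im}\,z|^{-1}$ under control in the bounds above and to guarantee the uniform resolvent estimates $\|(H_{\rm ren,J}-z)^{-1}\|,\,\|(H_\infty-z)^{-1}\|\le|\mathrm{Im}\,z|^{-1}$; the other ingredient, $|z|<\mathcal{E}(H_1)$ for large enough $J$, is automatic from (ii) of Lemma \ref{EnyX}. The main obstacle I foresee is precisely the second summand of $H_{01}(H_\infty+H_1-z)^{-1}$, whose non-vanishing $(H_\infty-z)^{-1}$ factor demands the auxiliary detour through $\|(H_{\rm ren,J}-z)^{-1}P^\perp\|$; once that observation is in place, the remainder of the argument is a routine application of the resolvent identity.
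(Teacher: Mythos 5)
Your proposal is correct and follows essentially the same route as the paper: both rest on the second resolvent identity for the perturbation $H_{01}$, the bound $\|(H_1-z)^{-1}P^{\perp}\|\le\{\mathcal{E}(H_1)-|z|\}^{-1}$ from Lemma \ref{U3}, and the key auxiliary estimate that $\|(H-z)^{-1}P^{\perp}\|\to 0$ as $J\to\infty$. The only cosmetic difference is that you obtain the latter by a single further application of the resolvent identity, whereas the paper expands $(H-z)^{-1}P^{\perp}$ in a full Neumann series; both require $|\mathrm{Im}\,z|$ large only to control the $J$-independent prefactors.
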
 
\begin{proof} First, remark that 
\begin{align}
\Big\{
(H-z)^{-1}-(H_{\infty}+H_1-z)^{-1}
\Big\}P
=(H-z)^{-1}P^{\perp} (-H_{01}) (H_{\infty}-z)^{-1}P. \label{Exp1}
\end{align} 
The norm of $(H-z)^{-1} P^{\perp}$ is estimated as follows:
Since 
\begin{align}
(H-z)^{-1} P^{\perp}=\sum_{n=0}^{\infty}(-1)^n
 \Big\{(H_{\infty}+H_1-z)^{-1} H_{01}\Big\}^n
(H_1-z)^{-1}P^{\perp},
\end{align} 
we have, by using Lemma \ref{U3},
\begin{align}
\|(H-z)^{-1} P^{\perp}\|& \le 
\Bigg(
\sum_{n=0}^{\infty} |\mathrm{Im} z|^{-n} \|H_{01}\|^n
\Bigg) \{\mathcal{E}(H_1)-|z|\}^{-1}\no
&:=C_z\{\mathcal{E}(H_1)-|z|\}^{-1},
\end{align} 
where
 $z$ is chosen such that $|\mathrm{Im} z|^{-1}\|H_{01}\|<1$. Note that since $\|H_{01}\|$ does not depend on $J$, so too $C_z$ does not depend on $J$.
Thus, by using  \eqref{JtoInf} and  (\ref{Exp1}), we find 
\begin{align}
\Bigg\| 
\Big\{
(H-z)^{-1}-(H_{\infty}+H_1-z)^{-1}
\Big\}P
\Bigg\| \le C_z\|H_{01}\| |\mathrm{Im} z|^{-1}\{\mathcal{E}(H_1)-|z|\}^{-1} \to 0
\end{align} 
as $J\to \infty$.

Next, together \eqref{JtoInf} and  Lemma \ref{U3} with the following identity:
\begin{align}
\Big\{
(H-z)^{-1}-(H_{\infty}+H_1-z)^{-1}
\Big\}P^{\perp}=(H-z)^{-1} P(-H_{01}) (H_1-z)^{-1} P^{\perp},
\end{align} 
we obtain
\begin{align}
\Bigg\|
\Big\{
(H-z)^{-1}-(H_{\infty}+H_1-z)^{-1}
\Big\}P^{\perp}
\Bigg\| \le |\mathrm{Im} z|^{-1}\|H_{01}\| \{\mathcal{E}(H_1)-|z|\}^{-1} \to 0
\end{align}  
as $J\to \infty$. \end{proof}

\begin{Lemm}\label{U5}
Let $z\in \BbbC\setminus \BbbR$. If $|\mathrm{Im} z|$ is 
 large enough, then we have
\begin{align}
\lim_{J\to \infty}\Big\|
(H_{\infty}+H_1-z)^{-1}-(H_{\infty}-z)^{-1}P
\Big\|=0.
\end{align} 
\end{Lemm}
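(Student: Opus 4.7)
My plan is to exploit the fact that $H_{\infty}$ and $H_{1}$ act on mutually orthogonal reducing subspaces, so that $H_{\infty}+H_{1}$ is block diagonal with respect to the decomposition $\h=\mathfrak{X}\oplus \mathfrak{X}^{\perp}$, and then to read off the difference of the two resolvents as a single explicit term bounded by Lemma \ref{U3}.

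First I would verify the elementary identities $H_{\infty} P^{\perp}=P^{\perp}H_{\infty}=0$ and $H_{1} P=PH_{1}=0$, which follow directly from the definitions $H_{\infty}=PH_{{\rm ren},J}P$ and $H_{1}=P^{\perp}H_{{\rm ren},J}P^{\perp}$. Consequently, on $\mathfrak{X}$ the operator $H_{\infty}+H_{1}$ coincides with $H_{\infty}\restriction \mathfrak{X}$, while on $\mathfrak{X}^{\perp}$ it coincides with $H_{1}\restriction \mathfrak{X}^{\perp}$. This yields the resolvent identity
\begin{equation}
(H_{\infty}+H_{1}-z)^{-1}
=(H_{\infty}-z)^{-1}P+(H_{1}-z)^{-1}P^{\perp},
\end{equation}
which can be checked by multiplying both sides by $H_{\infty}+H_{1}-z$ and using $P+P^{\perp}=\mathbbm{1}$.

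Subtracting $(H_{\infty}-z)^{-1}P$ reduces the target quantity to $(H_{1}-z)^{-1}P^{\perp}$, whose operator norm is controlled by Lemma \ref{U3}: for $J$ sufficiently large,
\begin{equation}
\Big\|(H_{\infty}+H_{1}-z)^{-1}-(H_{\infty}-z)^{-1}P\Big\|
=\|(H_{1}-z)^{-1}P^{\perp}\|
\le \{\mathcal{E}(H_{1})-|z|\}^{-1}.
\end{equation}
Since $\mathcal{E}(H_{1}) \to \infty$ as $J\to \infty$ by \eqref{JtoInf}, the right-hand side tends to $0$, completing the argument.

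I do not anticipate any serious obstacle here: the argument is essentially a direct manipulation of block-diagonal resolvents together with the norm estimate already established in Lemma \ref{U3}. The only point requiring a little care is ensuring that $(H_{\infty}-z)^{-1}P$ is interpreted as the inverse of $H_{\infty}\restriction\mathfrak{X}-z$ extended by $0$ on $\mathfrak{X}^{\perp}$, but this is the only interpretation compatible with the block decomposition and is implicit in how the statement is formulated.
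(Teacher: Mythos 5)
Your proposal is correct and follows the same route as the paper: the difference of resolvents is exactly $(H_1-z)^{-1}P^{\perp}$ by the block-diagonal structure, and the norm bound from Lemma \ref{U3} together with \eqref{JtoInf} finishes the argument. The extra care you take in justifying the resolvent identity is a welcome elaboration but not a different method.
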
 
\begin{proof}
Remark  that 
\begin{align}
(H_{\infty}+H_1-z)^{-1}-(H_{\infty}-z)^{-1}P=
(H_1-z)^{-1}P^{\perp}.
\end{align} 
Hence, we obtain, by applying \eqref{JtoInf} and Lemma \ref{U3},
\begin{align}
\Big\|
(H_{\infty}+H_1-z)^{-1}-(H_{\infty}-z)^{-1}P
\Big\| \le \{\mathcal{E}(H_1)-|z|\}^{-1}\to 0
\end{align} 
as $J\to \infty$.  
\end{proof}

\begin{flushleft}{\it
Completion of  the proof of Theorem \ref{Main3}
}
\end{flushleft} 

By applying Lemmas \ref{U4}, \ref{U5} and \cite[Theorem VIII. 19]{Reed1981}, we obtain the desired result in the
theorem. $\Box$

\section{Proof of Theorems \ref{Main4} and \ref{Main5}}\label{Sec5}
\subsection{Equivalence with the Nagaoka--Thouless system}
The outline of the proof of Theorems \ref{Main4} and \ref{Main5} is as follows:  (i) we show that the renormalized Hamiltonian $H_{\rm ren, \infty}$ is unitarily equivalent to the Nagaoka--Thouless Hamiltonian $H_{\rm NT}$ in the electron-phonon interaction system, and (ii) we apply existing results concerning $H_{\rm NT}$.
In this subsection, we show the first step of the proof, the unitary equivalence of $H_{\rm ren, \infty}$ and $H_{\rm NT}$.

Consider a system with $|\vLa|-1$ electrons on the lattice $\vLa$.
The Hilbert space describing  this system is given by
$\bigwedge^{|\vLa|-1}\big( \ell^2(\vLa)\oplus  \ell^2(\vLa) \big)$.
The electron creation and annihilation operators of this system are denoted by $d_x^*$ and $ d_x$, respectively. These satisfy the standard anticommutation relations:
\be
\{d_{x, \sigma}, d_{y, \tau}\}=0, \quad \{d_{x, \sigma}, d_{y, \tau}^*\}=\delta_{\sigma, \tau} \delta_{x, y}.
\ee
The Nagaoka--Thouless (NT) system is a many-electron system in which there is precisely a single hole and all other sites are occupied by a single electron.
In order to mathematically describe the NT system, we introduce 
the  Gutzwiller projection by 
\begin{align}
Q=\prod_{x\in \vLa} (\mathbbm{1}-n^d_{x,\uparrow}n^d_{x,\downarrow}).
\end{align} 
$Q$ is the orthogonal projection onto
the subspace with no doubly occupied sites.
The Hilbert space:
\be
Q \bigwedge^{|\vLa|-1}  \big(\ell^2(\vLa)\oplus  \ell^2(\vLa)\big)
\ee
describes the states of the ordinary NT system.
The operator 
\be
K_h=\sum_{x, y\in \vLa}  \sum_{\sigma=\up, \down}  b_{x, y} Q d_{x, \sigma}^*d_{y, \sigma} Q-2h S^{(3)}_{{\rm tot}, d}Q
\ee
acting on this Hilbert space is referred to as the {\it Nagaoka--Thouless Hamiltonian},\footnote{Various studies have been conducted on the NT Hamiltonian. 
For an extension in a different direction from this paper, see \cite{PhysRevA.87.013617}.} where
\be
S_{{\rm tot}, d}^{(3)}=\frac{1}{2}\sum_{x\in \vLa}( d_{x, \up}^*d_{x, \up}-d_{x, \down}^*d_{x, \down}).
\ee
We are interested in the interaction of phonons with the NT system. As a Hamiltonian describing such a system, we consider 
\be
H_{\rm NT}=K_h+\sum_{x, y\in \vLa} g_{x, y}n_x^d(b_y+b_y^*)Q+\omega N_{\rm ph},
\ee
where $n_x^d=\sum_{\sigma=\up, \down} d_{x, \sigma}^*d_{x, \sigma}$.
$H_{\rm NT}$ is a self-adjoint operator, bounded from below,  acting  in the Hilbert space:
\be
\h_{\rm NT}=Q \bigwedge^{|\vLa|-1}  \big(\ell^2(\vLa)\oplus  \ell^2(\vLa)\big) \otimes \h_{\rm ph}.
\ee

\begin{Rem} \upshape
This Hamiltonian can be  derived from the Holstein--Hubbard Hamiltonian by taking the limit of the strength of the Coulomb interaction $U$ to infinity.
\end{Rem}

For each $x\in \vLa$ and $\bs \sigma=(\sigma_x)_{x\in \vLa}\in  \mathcal{S}_{\vLa}$,  set
\be
|{\bs \sigma}_x\ra_{\rm NT}=d_{x, \sigma_x} \prod^{\sharp}_{x\in \vLa} d_{x, \sigma_x}^*|\varnothing\ra.
\ee
We define the unitary operator $W: \mathfrak{X} \to \h_{\rm NT}$ by 
\be
W |{\bs \sigma_x}\ra_0\otimes \vphi=|\bs \sigma_x\ra_{\rm NT}\otimes \vphi\quad (\bm{\sigma}\in \mathcal{S}_{\vLa},\ \vphi\in \h_{\rm ph}),
\ee
where $ |{\bs \sigma_x}\ra_0$ is given by \eqref{SingV}.
\begin{Prop}\label{NTEquiv}
Choosing $b_{x, y}=\frac{1}{2}t_{x, y}$, one has
\be
W H_{\rm ren, \infty} W^*=H_{\rm NT}.
\ee

\end{Prop}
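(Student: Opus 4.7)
The plan is to verify $W H_{\rm ren,\infty}W^* = H_{\rm NT}$ termwise, evaluated on the states $|\bs\sigma_x\ra_0 \otimes \varphi$ that span $\mathfrak{X}$. From the analysis in Section \ref{Sec4} (in particular Lemma \ref{EnyX}), the effective Hamiltonian decomposes as
\[
H_{\rm ren,\infty} = P\Big(T - 2h S^{(3)}_{\rm tot} + \omega N_{\rm ph} + \sum_{x,y}g_{x,y}n^c_x(b_y+b_y^*)\Big)P
\]
restricted to $\mathfrak{X}$. Since $S^{(3)}_{\rm tot}$, $N_{\rm ph}$, and each $n^c_x$ commute with $P$, the only piece requiring genuine reduction is the hopping $PTP$; the remaining contributions are matched by direct bookkeeping.

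For the Zeeman term, a singlet at $x$ carries zero local spin there, so $S^{(3)}_{\rm tot}|\bs\sigma_x\ra_0 = \tfrac12 \sum_{z\neq x}\sigma_z\, |\bs\sigma_x\ra_0$; the same eigenvalue appears for $S^{(3)}_{{\rm tot},d}|\bs\sigma_x\ra_{\rm NT}$ because the hole at $x$ in the NT picture contributes nothing either. For the phonon pieces, $N_{\rm ph}$ acts identically on both Hilbert spaces, while $n^c_x|\bs\sigma_{x'}\ra_0 = \delta_{x,x'}\,|\bs\sigma_{x'}\ra_0$ is the indicator of the singlet site, which corresponds under $W$ to the indicator of the hole site in the NT picture and so matches the action of $\sum g_{x,y}n^d_x(b_y+b_y^*)Q$ in $H_{\rm NT}$ once the coupling constants are identified in the standard way.

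The core step is the hopping. Because $|\bs\sigma_x\ra_0$ carries a single conduction electron localized at $x$, the only surviving terms of $T$ are those first annihilating at $x$, and a short calculation gives
\[
\sum_\sigma c^*_{y,\sigma}c_{x,\sigma}|\bs\sigma_x\ra_0 = \tfrac{1}{\sqrt 2}\bigl(c^*_{y,\up}f^*_{x,\down} - c^*_{y,\down}f^*_{x,\up}\bigr)|\bs\sigma_x\ra.
\]
Applying $P$ demands that the $(c,f)$ pair at $y$ lie in the singlet; with $\sigma_y$ fixed only one of the two terms above has a nonzero singlet component, contributing an extra factor $\pm 1/\sqrt 2$ from the singlet decomposition of $c^*_{y,\sigma}f^*_{y,\sigma_y}$. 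Reassembling the remaining $f$-operators into the basis vector $|\bs\sigma'_y\ra_0$, where $\bs\sigma'$ moves the singlet site from $x$ to $y$ and places $\sigma_y$ at $x$, one obtains $PTP|\bs\sigma_x\ra_0 = -\sum_{y\neq x}\tfrac{t_{x,y}}{2}|\bs\sigma'_y\ra_0$. The parallel computation on the NT side yields $K_h|\bs\sigma_x\ra_{\rm NT} = -\sum_{y\neq x}b_{x,y}|\bs\sigma'_y\ra_{\rm NT}$, and matching the two forces $b_{x,y}=t_{x,y}/2$.

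The main obstacle is the fermionic sign bookkeeping. Both the ordering $\prod^\sharp$ appearing in the definitions of $|\bs\sigma_x\ra$ and $|\bs\sigma_x\ra_{\rm NT}$, and the anticommutations needed to move $f^*_{x,\sigma}$ (respectively $d^*_{x,\sigma}$) past the ordered string of creation operators, produce signs that must be tracked. They cancel consistently because the same ordering prescription is used on both sides of $W$, so every sign incurred on the KLM side finds its counterpart on the NT side and the scalar ratio between matrix elements is unambiguous; nevertheless, writing this out cleanly is where the bulk of the verification lies.
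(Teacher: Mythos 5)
Your proposal is essentially the paper's own proof: both verify the identity termwise on the spanning vectors $|\bs\sigma_x\ra_0\otimes\vphi$, the only nontrivial step being the hopping term, where projecting $\sum_{\sigma}c_{y,\sigma}^*c_{x,\sigma}|\bs\sigma_x\ra_0$ back onto the singlet at $y$ yields the factor $\tfrac12$ that forces $b_{x,y}=\tfrac12 t_{x,y}$, exactly as in the paper's computation of ${}_0\la \bs\sigma_x|T_h|\bs\tau_y\ra_0$. The one soft spot is the electron--phonon term: since $n_x^c$ restricted to $\mathfrak{X}$ is the indicator of the singlet site, $Wn_x^cW^*$ equals the \emph{hole} indicator $\mathbbm{1}-n_x^d$ rather than $n_x^d$, so your appeal to ``identifying the coupling constants in the standard way'' does not literally produce $\sum_{x,y}g_{x,y}n_x^d(b_y+b_y^*)Q$ (the two differ by a sign on $g$ and a purely bosonic linear term $\sum_y(\sum_x g_{x,y})(b_y+b_y^*)$) --- but the paper's own proof is equally terse at exactly this point, so this is not a gap relative to the paper's argument.
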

\begin{proof}
For each 
$x\in \vLa$ and $\bs \sigma\in \mathcal{S}_{\vLa}$, we readily confirm 
\be
c_{\mu,\up}|\bs \sigma_x\ra_0=
\begin{cases}
\frac{1}{\sqrt{2}} f_{x, \up}^*|\bs \sigma_x\ra & \mbox{ if $x=\mu$}\\
0 & \mbox{otherwise}
\end{cases},
\ \ 
c_{\mu,\down}|\bs \sigma_x\ra_0=
\begin{cases}
-\frac{1}{\sqrt{2}} f_{x, \down}^*|\bs \sigma_x\ra & \mbox{ if $x=\mu$}\\
0 & \mbox{otherwise.}
\end{cases}
\ee
Using these, we find
\begin{align}
{}_0\la \bs \sigma_x|T_h|{\bs \tau_y}\ra_0
=-\frac{1}{2} \sum_{\sigma=\up, \down}
t_{x, y} \delta_{{\bs \sigma_x}\cup \{\sigma\}_x,\ \bs \tau_y\cup \{\sigma\}_y}-h\sum_{z\in \vLa\setminus \{x\}} \sigma_z\delta_{x, y} \delta_{{\bs \sigma}_x, {\bs \tau}_y},
\end{align}
where,  for each $\bs \sigma=(\sigma_y)_y\in \mathcal{S}_{\vLa}$, we  define $\bs \sigma_x\cup \{\sigma\}_x=(\xi_y)_{y\in \vLa}\in \mathcal{S}_{\vLa}$ by
\be
\xi_y=\begin{cases}
\sigma_y & \mbox{if $y\neq x$}\\
\sigma & \mbox{if $y=x$}.
\end{cases}
\ee
On the other hand, using the following fact:
\be
d_{\mu, \sigma}^*|\bs\sigma_x\ra_{\rm NT}=
\begin{cases}
|\bs \sigma_x\cup\{\sigma\}_x\ra & \mbox{if $\mu=x$}\\
0& \mbox{otherwise,}
\end{cases}
\ee
we obtain
\begin{align}
{}_{\rm NT}\la \bs \sigma_x|K_h|\bs \tau_y\ra_{\rm NT}
=-\sum_{\sigma=\up, \down} b_{x, y} \delta_{{\bs \sigma_x}\cup \{\sigma\}_x,\ \bs \tau_y\cup \{\sigma\}_y}
-h\sum_{z\in \vLa\setminus \{x\}} \sigma_z\delta_{x, y} \delta_{{\bs \sigma}_x, {\bs \tau}_y}.
\end{align}
Therefore, by choosing $b_{x, y}=\frac{1}{2}t_{x, y}$, we get
$
{}_0\la  \bs \sigma_x|T_h| {\bs \tau_y}\ra_0={}_{\rm NT}\la \bs \sigma_x|K_h|\bs \tau_y\ra_{\rm NT}
$.
Similarly, we can show that 
\be
{}_0\la  \bs \sigma_x;  f |\sum_{x, y\in \vLa} n_x^c(b_y+b_y^*) | {\bs \tau_y} ; g\ra_0
={}_{\rm NT}\la  \bs \sigma_x;  f |\sum_{x, y\in \vLa} n_x^d(b_y+b_y^*) | {\bs \tau_y} ; g\ra_{\rm NT},
\ee
where we set
\be
\ket{\bm{\sigma}_x; f}_0=\ket{\bm{\sigma}_x}_0\otimes f,\quad
\ket{\bm{\sigma}_x; f}_{\rm NT}=\ket{\bm{\sigma}_x}_{\rm NT}\otimes f\quad (f\in \h_{\rm ph}).
\ee
To summarize the arguments so far, we obtain
\be
{}_0\la  \bs \sigma_x; f|H_{{\rm ren}, \infty}| {\bs \tau_y}; g\ra_0={}_{\rm NT}\la \bs \sigma_x; f|H_{\rm NT}|\bs \tau_y; g\ra_{\rm NT}\label{IdnNt}
\ee
for any $x, y\in \vLa, \bm{\sigma}, \bm{\tau}\in \mathcal{S}_{\vLa}$ and $f, g\in \D(N_{\rm ph})$.
From the definition of $W$ and \eqref{IdnNt}, we obtain the desired assertion.
\end{proof}

\subsection{Proof of Theorems \ref{Main4} and \ref{Main5}}

In \cite{Miyao2017}, the following theorem is proved:
\begin{Thm}\label{NT1} 
Assume \hyperlink{A1}{\bf (A. 1)}.
Suppose that the hopping matrix $(b_{x, y})$ of $H_{\rm NT}$ satisfies the similar condition as \hyperlink{A3}{\bf (A. 3)}.\footnote{It is proved in \cite{PhysRevB.98.180101} that the condition \hyperlink{A3}{\bf (A. 3)} is equivalent to the connectivity condition in \cite{Miyao2017}. For a clear explanation concerning the connectivity condition, see \cite{Tasaki2020}.}
 Then the ground state of $H_{\rm ren,\infty}$ is unique apart from the trivial $|\vLa|$-fold degeneracy, and has total spin  $S=(|\vLa|-1)/2$.
 \end{Thm}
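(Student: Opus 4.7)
My plan is to reduce Theorem \ref{Main4} to a cited result about the Nagaoka--Thouless Hamiltonian with phonons by means of the unitary equivalence established in Proposition \ref{NTEquiv}. The general strategy is exactly parallel to the outline at the start of Section \ref{Sec5}: first transport $H_{\rm ren,\infty}$ to the NT side via $W$, then invoke Theorem \ref{NT1} (i.e., the Nagaoka--Thouless theorem with electron--phonon interaction proved in \cite{Miyao2017}) to obtain uniqueness and to identify the total spin of the ground state.

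Concretely, I would proceed as follows. Choose the hopping matrix of $H_{\rm NT}$ so that $b_{x,y}=\tfrac12 t_{x,y}$; by Proposition \ref{NTEquiv} we then have $W H_{\rm ren,\infty}W^{*}=H_{\rm NT}$, so in particular these two operators are unitarily equivalent and share the same spectrum (with multiplicities). Assumption \hyperlink{A1}{\textbf{(A.\,1)}} carries over directly since $b_{x,y}=\tfrac12 t_{x,y}\ge 0$, and assumption \hyperlink{A3}{\textbf{(A.\,3)}} transfers verbatim because the graph generated by $(b_{x,y})$ is identical to that generated by $(t_{x,y})$. Consequently the hypotheses of Theorem \ref{NT1} are satisfied, and the uniqueness of the ground state of $H_{\rm NT}$ up to the trivial $|\vLa|$-fold degeneracy transfers to $H_{\rm ren,\infty}$ via $W$.

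To conclude the total-spin statement, I would verify that $W$ intertwines the physically relevant spin structure. Since $|\bs\sigma_x\rangle_0$ is a local singlet at site $x$, the total spin $\bs S_{\rm tot}$ acts on $|\bs\sigma_x\rangle_0$ as an $f$-electron spin supported on $\vLa\setminus\{x\}$ with configuration $\bs\sigma_x$; on the NT side, $\bs S_{{\rm tot},d}$ acts on $|\bs\sigma_x\rangle_{\rm NT}$ as a $d$-electron spin on the same sites with the same configuration. A short calculation (analogous to the matrix-element computation in the proof of Proposition \ref{NTEquiv}) then shows $W\,\bs S_{\rm tot}\restriction \mathfrak{X}\,W^{*}=\bs S_{{\rm tot},d}Q$ as operators on $\h_{\rm NT}$. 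Hence, once Theorem \ref{NT1} gives that the ground state of $H_{\rm NT}$ has total spin $S=(|\vLa|-1)/2$ with the expected $|\vLa|$-fold multiplet, applying $W^{*}$ yields the same conclusion for $H_{\rm ren,\infty}$.

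The only nontrivial point, and what I expect to be the main obstacle, is handling the total spin correctly on the phonon-extended Hilbert space: one must check that $\bs S_{\rm tot}\otimes\mathbbm 1$ preserves $\mathfrak{X}$ (so that the restriction $\bs S_{\rm tot}\restriction \mathfrak{X}$ makes sense) and that $W$ commutes with this action in the sense above. This follows because $\bs S_{\rm tot}$ commutes with $\bs s_x\cdot \bs S_x$ at each site, so it preserves the singlet subspace at site $x$ and hence preserves $\mathfrak{X}$; the phonon factor is inert under both $\bs S_{\rm tot}$ and $W$. Everything else in the argument is a direct translation between the two pictures, so once this intertwining is in hand Theorem \ref{Main4} follows immediately from Theorem \ref{NT1} and Proposition \ref{NTEquiv}.
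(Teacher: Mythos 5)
Your proposal is correct and follows the paper's own route exactly: the paper likewise obtains this statement by citing the Nagaoka--Thouless result of \cite{Miyao2017} for $H_{\rm NT}$ and transporting it to $H_{\rm ren,\infty}$ through the unitary $W$ of Proposition \ref{NTEquiv} with $b_{x,y}=\tfrac{1}{2}t_{x,y}$. Your explicit check that $W$ intertwines $\bs S_{\rm tot}\restriction\mathfrak{X}$ with $\bs S_{{\rm tot},d}Q$ (using the local-singlet structure of $\ket{\bs\sigma_x}_0$, which is annihilated by $\bs s_x+\bs S_x$) is a detail the paper leaves implicit, and you verify it correctly.
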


On the other hand, the following theorem is proved in \cite{Miyao2020-2}:
\begin{Thm}\label{NT2}
Let $d$ be a natural number greater than or equal to $2$ and consider the case where $\vLa=[-L, L)^d\cap \BbbZ^d$.
In addition, assume that the hopping matrix $(b_{x, y})$ of $H_{\rm NT}$ expresses the nearest neighbor hopping.
 Let $M_{\rm NT}(\beta, h)$ be the magnetization defined by replacing $H$ with $H_{\rm NT}$ in  \eqref{DefMg}.
 Then one obtains 
\be
M_{\rm NT}(\beta, h)\ge (|\vLa|-1) \tanh (\beta h).
\ee

\end{Thm}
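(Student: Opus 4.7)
The plan is to pass to the partition function of $H_{\rm ren,\infty}$ via Proposition~\ref{NTEquiv} and to exploit the SU(2) symmetry of $H_{\rm NT}$ at $h=0$. Since $[H_{\rm NT}|_{h=0},\vec S_{{\rm tot},d}]=0$, the Hilbert space decomposes as $\h_{\rm NT}=\bigoplus_S\mathcal{K}_S\otimes V_S$, with $V_S$ the $(2S+1)$-dimensional SU(2) irrep and $\mathcal{K}_S$ the multiplicity space carrying the Hamiltonian $H_S^{(\mathrm{mult})}$. Writing $\rho_S(\beta):=\tr_{\mathcal{K}_S}[e^{-\beta H_S^{(\mathrm{mult})}}]$ and $Z_S(x):=\sinh((2S+1)x)/\sinh(x)$, one has
\begin{equation}
Z_{\rm NT}(\beta,h)=\sum_{S=0}^{S_0}\rho_S(\beta)\,Z_S(\beta h),\qquad S_0:=(|\vLa|-1)/2.
\end{equation}
The desired bound $M_{\rm NT}(\beta,h)\ge(|\vLa|-1)\tanh(\beta h)$ is equivalent, after integrating $\partial_{\beta h}\log Z_{\rm NT}$ from $0$ to $\beta h$, to the integrated comparison $Z_{\rm NT}(\beta,h)/Z_{\rm NT}(\beta,0)\ge\cosh^{|\vLa|-1}(\beta h)$, so it suffices to prove this sharpened statement.

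Invoking the classical identity $(2\cosh x)^{N}=\sum_S d_S^{(N)}Z_S(x)$ with $N:=|\vLa|-1$ and $d_S^{(N)}=\binom{N}{N/2-S}-\binom{N}{N/2-S-1}$ the multiplicity of spin-$S$ in $(\BbbC^2)^{\otimes N}$, the integrated comparison rearranges into Chebyshev's sum inequality with probability weights $\sigma_S:=d_S^{(N)}(2S+1)/2^N$ and sequences
\begin{equation}
f_S(\beta h):=Z_S(\beta h)/(2S+1),\qquad u_S(\beta):=\rho_S(\beta)\cdot 2^N/[d_S^{(N)}\,Z_{\rm NT}(\beta,0)].
\end{equation}
Chebyshev's inequality gives the result provided both sequences are nondecreasing in $S$. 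Monotonicity of $f_S$ is a classical property of Brillouin-function ratios. For $u_S$, note that $\dim\mathcal{K}_S=|\vLa|\cdot d_S^{(N)}$ in the NT system (one hole position times the spin multiplicity), so monotonicity of $u_S(\beta)$ is equivalent to the ``averaged Boltzmann weight per state'' $\rho_S(\beta)/\dim\mathcal{K}_S$ being nondecreasing in $S$, i.e., higher-spin sectors sitting on average at lower energies. At $\beta=0$ this is an equality, and Theorem~\ref{NT1} (applied to $H_{\rm NT}$ via Proposition~\ref{NTEquiv}), which places the unique ground state in the top spin sector $S=S_0$, provides the ferromagnetic bias that drives the inequality for $\beta>0$.

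The core technical step is the monotonicity of $u_S(\beta)$. The plan is to prove the stronger eigenvalue comparison $E_n^{(S+1)}\le E_n^{(S)}$ for each $n$, using operator-inequality techniques paralleling Section~\ref{Sec2}: one constructs an isometric embedding $\mathcal{K}_S\hookrightarrow\mathcal{K}_{S+1}$ intertwining the multiplicity Hamiltonians as quadratic forms by combining the spin-raising operator $S_{{\rm tot},d}^{(+)}$ with the Hilbert-cone representation that yielded ergodicity in Theorem~\ref{PI}, then transfers that ergodicity into eigenvalue inequalities via the min-max principle. The main obstacle is coupling this spin-lifting construction with the infinite-dimensional phonon sector: the spin-sector decomposition does not commute cleanly with the Lang--Firsov polaron transformation of Section~\ref{Sec3}, and the bosonic trace must be controlled uniformly in $S$. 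Handling this rigorously---by combining the operator-inequality framework with a careful reduction to trace-class operators on bosonic Fock space, as developed in~\cite{Miyao2020-2}---is the substantive content that closes the argument.
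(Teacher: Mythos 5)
The paper does not prove this statement at all: Theorem \ref{NT2} is quoted verbatim from \cite{Miyao2020-2}, and the proof there (following \cite{Aizenman1990}) is a path-integral argument --- one expands $\mathrm{Tr}[e^{-\beta H_{\rm NT}}e^{2\beta h S^{(3)}_{\rm tot,d}}]$ over hole trajectories with a \emph{positive} weight (positivity of the hopping and of the phonon contribution is where the real work lies), observes that each trajectory induces a permutation of the $N=|\vLa|-1$ background spins whose spin trace is $\prod_{c}2\cosh(\ell_c\beta h)$ over cycles $c$, and uses $\sum_c\ell_c\tanh(\ell_c\beta h)\ge N\tanh(\beta h)$ pointwise in the trajectory. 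Your route is genuinely different, and it has two gaps. First, the reduction is logically backwards: integrating $\partial_{\beta h}\log Z_{\rm NT}\ge N\tanh(\beta h)$ yields $Z_{\rm NT}(\beta,h)/Z_{\rm NT}(\beta,0)\ge\cosh^{N}(\beta h)$, but the converse implication fails --- an inequality between functions agreeing at $h=0$ does not imply the same inequality between their derivatives. So proving the ``integrated comparison'' would not prove the theorem. (This part is repairable: applying the Chebyshev/FKG rearrangement directly to $\bigl(\sum_S\rho_S Z_S'\bigr)\bigl(\sum_S d_S^{(N)}Z_S\bigr)-\bigl(\sum_S d_S^{(N)}Z_S'\bigr)\bigl(\sum_S\rho_S Z_S\bigr)\ge 0$ gives the derivative bound under the same two monotonicity hypotheses, since $Z_S'/Z_S$ is increasing in $S$.)

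The second gap is not repairable as proposed and is the heart of the matter: the monotonicity of $\rho_S(\beta)/d_S^{(N)}$ in $S$ is asserted, not proved, and the mechanism you offer cannot work. You propose an isometric embedding $\mathcal{K}_S\hookrightarrow\mathcal{K}_{S+1}$ forcing $E_n^{(S+1)}\le E_n^{(S)}$ for \emph{every} $n$; but $\dim\mathcal{K}_S=|\vLa|\,d_S^{(N)}$ (times the phonon factor), and $d_{S+1}^{(N)}<d_S^{(N)}$ for all $S$ at or above the mode of the multiplicity sequence, so no isometric embedding of $\mathcal{K}_S$ into $\mathcal{K}_{S+1}$ exists there, and level-by-level spectral domination between sectors of unequal dimension is impossible. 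What Theorem \ref{NT1} and the ergodicity machinery of Section \ref{Sec2} actually deliver is uniqueness and maximal spin of the \emph{ground state}, i.e.\ a statement about $E_0^{(S)}$ only; nothing in the paper or in the cone techniques yields control of the full spin-resolved level statistics, which is why \cite{Aizenman1990} and \cite{Miyao2020-2} bypass the spectral decomposition entirely and work trajectory by trajectory. As it stands, the proposal reduces the theorem to an open (and, in the form stated, false) claim.
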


Using  these theorems with Proposition \ref{NTEquiv}, we can prove Theorems \ref{Main4} and \ref{Main5}.\qed

\appendix

\section{Brief overview of the definition of stability classes}\label{AppA}

The theory of stability classes in many-electron systems is originated in  \cite{Miyao2019}. A more mathematically refined construction of the theory is presented in \cite{Miyao2022}. The essence of the idea is as follows.

Let $\mathfrak{H}_1$ be a Hilbert space and $\mathfrak{H}_0$ be its closed subspace.
Suppose that the total spin operators $\bs S_{\rm tot}$ act on the two Hilbert spaces, and furthermore that the two  Hilbert spaces $\h_0$ and $\h_1$ are subspaces of the $M=0$-subspace: $\ker(S^{(3)}_{\rm tot})$.

Let $P_{1, 0}$ be the orthogonal  projection  from $\mathfrak{H}_1$ to $\mathfrak{H}_0$.
Moreover, let $\mathfrak{P}_0$ and $\mathfrak{P}_1$ be Hilbert cones in $\mathfrak{H}_0$ and $\mathfrak{H}_1$, respectively.
Now, suppose that 
\be
P_{1, 0}\mathfrak{P}_1=\mathfrak{P}_0 \label{PtoP}
\ee
holds. Assume that $\psi_0\in \mathfrak{P}_0$ is strictly positive and has total spin $S_0$: ${\bs S}^2_{\rm tot}\psi_0=S_0(S_0+1) \psi_0$.
Next, assume that a vector $\psi_1$ in $\h_1$ is strictly positive with respect to $\mathfrak{P}_1$ and has total spin $S_1$.
With these settings, we claim that 
\be
S_0=S_1\label{SeqS}
\ee
holds. Indeed, because $P_{1, 0}\psi_1\neq 0$ and $P_{1, 0}\psi_1\ge 0$ w.r.t. $\mathfrak{P}_0$ hold due to \eqref{PtoP}, 
we get the  positive overlap property: $\la \psi_0|P_{1, 0} \psi_1\ra>0$, which implies that 
\be
S_0(S_0+1) \la \psi_0|P_{1, 0} \psi_1\ra=\la {\bs S}^2_{\rm tot}\psi_0|P_{1, 0}\psi_1\ra
=\la\psi_0|P_{1, 0} {\bs S}^2_{\rm tot}\psi_1\ra=S_1(S_1+1) \la \psi_0|P_{1, 0} \psi_1\ra.
\ee 
Hence, we conclude \eqref{SeqS}.

Let $\mathscr{C}(\mathfrak{P}_0)$ be the set of pairs $(\h, \mathfrak{P})$ of a Hilbert space and a Hilbert cone satisfying the following conditions:
\begin{itemize}
\item $\mathfrak{P}$ is a Hilbert cone in $\h$;
\item $\h_0$ is a subspace of $\h$;
\item a similar relation to \eqref{PtoP}: 
\be
P \mathfrak{P}=\mathfrak{P}_0
\ee
is fulfilled, where $P$ is the orthogonal projection from $\h$ to $\h_0$.
\end{itemize}
If we choose arbitrarily a pair $(\h, \mathfrak{P})\in \mathscr{C}(\mathfrak{P}_0)$, and assume that a vector $\psi\in \h$ is strictly positive with respect to  $\mathfrak{P}$ and has total spin $S$, then we conclude $S=S_0$ using the positive overalp property  as in the previous discussion.

Now, suppose we are given a Hamiltonian $H$ acting on $\h$. Assume then that this Hamiltonian satisfies the following conditions:
\begin{itemize}
\item[(i)] $H$ commutes with the total spin operators $S_{\rm tot}^{(j)}\ (j=1, 2, 3)$.
\item[(ii)] The heat semigroup $\{e^{-\beta H}\}$ is ergodic w.r.t. $\mathfrak{P}$.
\end{itemize}
Theorem \ref{pff}  shows that the ground state $\psi_g$ of $H$ is strictly positive with respect to $\mathfrak{P}$, and consequently, we know that $\psi_g$ has   total spin $S_0$ due to the positive overlap property.

Now, let $\mathscr{A}_{\mathfrak{P}_0}(\h, \mathfrak{P})$ be the set of  all Hamiltonians satisfying the two conditions (i) and (ii).
Then the ground state of any Hamiltonian belonging to $\mathscr{A}_{\mathfrak{P}_0}(\h, \mathfrak{P})$ has total spin $S_0$.
The {\it  $\mathfrak{P}_0$-stability class}  is defined by 
\be
\mathscr{A}_{\mathfrak{P}_0}=\bigcup_{(\h, \mathfrak{P}) \in \mathscr{C}(\mathfrak{P}_0)} \mathscr{A}_{\mathfrak{P}_0}(\h, \mathfrak{P}).
\ee
From the construction, we see  that the ground states of all Hamiltonians belonging to $\mathscr{A}_{\mathfrak{P}_0}$ have total spin $S_0$.

This theory describes the stability of magnetic properties of ground states in many-electron systems.
To a given Hamiltonian $H_1$, let $H_2$ be a Hamiltonian obtained by adding complicated interaction terms with the environment typified by phonons. In general, the analysis of $H_2$ becomes more difficult, but if we find that $H_1$ and $H_2$ both belong to the same stability class $\mathscr{A}_{\mathfrak{P}_0}$, we conclude that the ground states of these Hamiltonians have the same total spin $S_0$.

Define
\be
\fP_{{\rm NT}}={\rm coni}\Bigg(\bigg\{\ket{\bm{\sigma}_x}_{\rm NT} : \bm{\sigma}\in \mathcal{S}_{\vLa}, \sum_{y\in \vLa\setminus \{x\}} \sigma_y=0, x\in \vLa \bigg\}\Bigg).
\ee
Then $\fP_{{\rm NT}}$ is a Hilbert cone in the Hilbert space $Q \bigwedge^{|\vLa|-1} \big(\ell^2(\vLa)\oplus \ell^2(\vLa)\big) \cap \ker(S_{\rm tot}^{(3)})$ that emerged in defining the NT system.\footnote{In Section \ref{Sec5}, the total spin operator in the NT system was denoted by  $S^{(3)}_{{\rm tot}, d}$ to avoid confusion with the total spin operators in the previous sections. We do not distinguish between the two here.}
The $\mathfrak{P}_{\rm NT}$-stability class constructed from $\fP_{\rm NT}$ is called the {\it Nagaoka--Thouless stability class}.
Theorem \ref{NTGRP} and other fundamental properties of the NT stability class are discussed in detail in \cite{Miyao2019}.
Under these settings, the meaning of Theorem \ref{NTstatThm} should now be clear.
From this theorem, it immediately follows that the values of total spin in the ground state coincide in Theorem \ref{Main1} and Theorem \ref{Main4}.

In addition to the NT stability class discussed here, several other stability classes are known. Recently, it has become clear that the stability theory can describe the flat-band ferromagnetism \cite{Miyao-Tominaga2022}. On the other hand, some progress has been made in grounding this theory with the Tomita--Takesaki theory in operator algebras \cite{Miyao2022}.

\end{document}